\newif\ifGross
\pgfplotsset{compat=1.14}
\newcolumntype{Y}{>{\centering\arraybackslash}X}
\newcommand{\N}{{\mathbb{N}}}
\newcommand{\E}{{\rm E}}
\newcommand{\Var}{{\rm Var}}
\providecommand{\stirling}[2]{ {\genfrac{[}{]}{0pt}{}{#1}{#2}}}
\newtheorem{theorem}{Theorem}
\newtheorem{lemma}[theorem]{Lemma}
\newtheorem{corollary}[theorem]{Corollary}
\newtheorem{definition}[theorem]{Definition}
\newtheorem{remark}[theorem]{Remark}
\newtheorem{conjecture}[theorem]{Conjecture}
\newcommand{\neighborhood}{\ensuremath{\mathcal{N}}}
\newcommand{\dist}{\ensuremath{d}\xspace}
\newcommand{\onepso}{\textsc{OnePSO}\xspace}
\newcommand{\dpso}{\textsc{D-PSO}\xspace}
\newcommand{\oneea}{\ensuremath{(1+1)}-EA\xspace}
\newcommand{\uar}{u.a.r.\xspace}
\newcommand{\THETA}{\ensuremath{\operatorname{\Theta}}\xspace}
\newcommand{\OMEGA}{\ensuremath{\operatorname{\Omega}}\xspace}
\newcommand{\nat}{\ensuremath{\mathbb{N}}\xspace}
\newcommand{\model}{\ensuremath{\operatorname{\mathcal{M}}\xspace}}
\newcommand{\onemax}{\textsc{OneMax}\xspace} 
\newcommand{\algo}{\ensuremath{\mathcal{A}}\xspace}
\newcommand{\1}{\ensuremath{\mathds{1}}\xspace}
\newcommand{\real}{\ensuremath{\mathbb{R}}\xspace}
\newcommand{\naturalNumbers}{\ensuremath{\mathbb{N}}\xspace}
\newcommand{\R}{\real}
\newcommand{\hoehe}{\vrule height2.8ex depth1.5ex width 0pt}
\newcommand{\hoeheX}{\vrule height3.8ex depth1.5ex width 0pt}
\begin{document}

\title{Exact Markov Chain-based Runtime Analysis of a Discrete Particle Swarm Optimization Algorithm on Sorting and OneMax\thanks{
    This is the preprint of \cite{MRSW:21} published in the Journal \emph{Natural Computing} and it is the significantly
extended version of \cite{MRSSW:17} published in the \emph{proceedings
of the 14th ACM/SIGEVO Workshop on Foundations of Genetic Algorithms (FOGA)},
2017.}}

\author{Moritz M\"uhlenthaler\\
Laboratoire G-SCOP,\\Grenoble INP, France\\
{\sf\small moritz.muhlenthaler}\\[-2mm] {\sf\small @grenoble-inp.fr}
\and Alexander Ra\ss
\thanks{Corresponding author} \quad Manuel Schmitt\quad Rolf Wanka
\\
Department of Computer Science\\
University of Erlangen-Nuremberg, Germany\\
{\sf\small $\{$alexander.rass, manuel.schmitt, rolf.wanka$\}$@fau.de}
}

\date{ }
\maketitle

\newcommand{\sep}{ $\cdot$ }

\begin{abstract}

Meta-heuristics are powerful tools for solving optimization problems whose
structural properties are unknown or cannot be exploited algorithmically. We
propose such a meta-heuristic for a large class of optimization problems over
discrete domains based on the \emph{particle swarm optimization} (PSO)
paradigm. 
We provide a comprehensive formal analysis of the performance of this algorithm
on certain ``easy'' reference problems in a black-box setting, namely the
sorting problem and the problem \onemax.
In our analysis we use a Markov model of the proposed algorithm to obtain upper
and lower bounds on its expected optimization time.
Our bounds are essentially tight with respect to the Markov model. 
We show that for a suitable choice of algorithm parameters the expected
optimization time is comparable to that of known algorithms and, furthermore, for
other parameter regimes, the algorithm behaves less greedy and more
explorative, which can be desirable in practice in order to escape local optima. 
Our analysis provides a precise insight on the tradeoff between optimization
time and exploration. 
To obtain our results we introduce the notion of \emph{indistinguishability}
of states of a Markov chain and provide bounds on the solution
of a recurrence equation with non-constant coefficients by integration.

\medskip

\noindent Keywords:
Particle swarm optimization \sep  Discrete Optimization \sep Runtime
analysis \sep Markov chains \sep Heuristics
\end{abstract}

\noindent
This research did not receive any specific grant from funding agencies in the public, commercial, or not-for-profit sectors.

\section{Introduction}
\label{sec:intro}

Meta-heuristics are very successful at finding good solutions
for hard optimization problems in practice. However, due to the nature of such
algorithms and the problems they are applied to, it is generally very difficult
to derive performance guarantees, or to determine the number of steps it takes
until an optimal solution is found. In the present work we propose a simple
adaptation of the particle swarm optimization (PSO) algorithm
introduced by~\cite{eb_ken_1995,ken_eb_1995} to optimization problems over discrete
domains. Our proposed algorithm assumes very little about the problem structure
and consequently, it works naturally for a large class of discrete domains. It is reasonable to expect from a meta-heuristic that it solves
black-box versions of many tractable problems in expected polynomial time. 
We provide a formal analysis based on Markov chains and establish under
which conditions our algorithm satisfies this basic requirement. More concretely,  we
consider two classical problems that are easy to solve in a non-black-box
setting, namely the problem of sorting items by transpositions and the
problem \onemax, which asks to maximize the number of ones in a bitstring. Our
analysis gives precise information about the expected number of steps our
algorithm takes in order to solve these two reference problems. 
Our runtime
bounds are essentially tight with respect to the Markov process we use to model
the behavior of the algorithm.

For practical purposes, 
a meta-heuristic should, in one way or another, incorporate the following two
general strategies: i) find an improving solution locally 
(often referred to as \emph{exploitation})  and ii) move to unexplored parts of
the search space (often referred to as \emph{exploration}).
The first strategy essentially leads the algorithm to a local optimum while the
second one helps the algorithm to avoid getting stuck when it is close to a local optimum. For our proposed algorithm, as for many other meta-heuristics,
the tradeoff between the
two strategies can be conveniently set by an algorithm parameter. Our analysis
shows that there is a sharp threshold with respect to this parameter, where the
expected runtime of the algorithm on the reference problems turns from
polynomial to exponential. Hence, we can maximize the algorithm's ability to
escape local optima while still maintaining polynomial runtime on the reference
problems.

A key tool for the runtime analysis of meta-heuristics for optimization
problems over discrete domains is the \emph{fitness level method} pioneered by
\cite{Wegener:02}. The basic idea is to consider the level sets of the
objective function of a problem instance  and to determine the expected number
of steps an algorithm takes
to move to a better level set. This approach has been used extensively in the
study of so-called \emph{elitist $(1+1)$-EAs} \cite{Wegener:02,DJW:02,GW:03,S:13}.
These algorithms keep a single ``current'' solution and update this solution
only if a better one is found. 
Our analysis of the proposed PSO algorithm also relies on the fitness level
method. However, since our algorithm also considers non-improving solutions in
order to escape local optima, a much more involved analysis is required in
order to determine the time it takes to move to a better fitness level.

We will refer in the following by \emph{expected optimization time} to the
\emph{expected} number of evaluations of the objective function an algorithm
performs until an optimal solution is found. Before giving a precise statement
of our results we provide some background information on the PSO algorithm as well as the
two reference problems we consider.

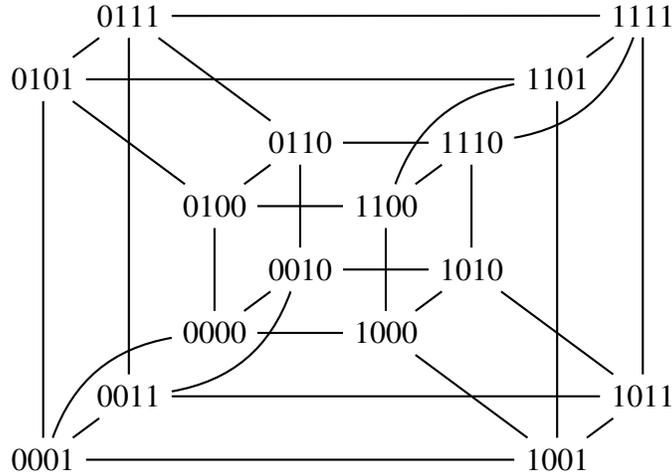
\begin{figure}[t]
\begin{center}
\begin{tikzpicture}
    \matrix[matrix of math nodes,column sep={2.7em,between origins},row sep={2.0em,between origins}]{
                 &|(0111)| 0111&             &             &             &             &             &|(1111)| 1111\\
    |(0101)| 0101&             &             &             &             &             &|(1101)| 1101&             \\
                 &             &             &|(0110)| 0110&             &|(1110)| 1110&             &             \\
                 &             &|(0100)| 0100&             &|(1100)| 1100&             &             &             \\
                 &             &             &|(0010)| 0010&             &|(1010)| 1010&             &             \\
                 &             &|(0000)| 0000&             &|(1000)| 1000&             &             &             \\
                 &|(0011)| 0011&             &             &             &             &             &|(1011)| 1011\\
    |(0001)| 0001&             &             &             &             &             &|(1001)| 1001&             \\
    };
	\path[thick](0000) edge [bend right=30] (0001);
    \draw[thick](0000) -- (0010);
    \draw[thick](0000) -- (0100);
    \draw[thick](0000) -- (1000);
    \draw[thick](0001) -- (0011);
    \draw[thick](0001) -- (0101);
    \draw[thick](0001) -- (1001);
	\path[thick](0010) edge [bend left=30 ] (0011);
    \draw[thick](0010) -- (0110);
    \draw[thick](0010) -- (1010);
    \draw[thick](0011) -- (0111);
    \draw[thick](0011) -- (1011);
    \draw[thick](0100) -- (0101);
    \draw[thick](0100) -- (0110);
    \draw[thick](0100) -- (1100);
    \draw[thick](0101) -- (0111);
    \draw[thick](0101) -- (1101);
    \draw[thick](0110) -- (0111);
    \draw[thick](0110) -- (1110);
    \draw[thick](0111) -- (1111);
    \draw[thick](1000) -- (1001);
    \draw[thick](1000) -- (1010);
    \draw[thick](1000) -- (1100);
    \draw[thick](1001) -- (1011);
    \draw[thick](1001) -- (1101);
    \draw[thick](1010) -- (1011);
    \draw[thick](1010) -- (1110);
    \draw[thick](1011) -- (1111);
    \draw[thick](1100) -- (1110);
	\path[thick](1100) edge [bend left=30 ] (1101);
    \draw[thick](1101) -- (1111);
	\path[thick](1110) edge [bend right=30] (1111);
    ;
\end{tikzpicture}
\end{center}
  \caption{Search space for the problem of \onemax on $\lbrace 0,1\rbrace^4$ by
  bitflips, i.\,e., the $4$-dimensional hypercube. Two bitstrings $x,y$ are
  adjacent iff $x$ and $y$ differ in exactly one position.}
  \label{fig:hypercube}
\end{figure}

\subsection{Particle Swarm Optimization}

The PSO algorithm has been introduced by~\cite{eb_ken_1995,ken_eb_1995} and is inspired by the social
interaction of bird flocks. Fields of successful application of PSO are, among
many others, Biomedical Image Processing~\cite{SSW:15,WSZZE:04},
Geosciences~\cite{OD:10}, Agriculture~\cite{YWetal:17}, and Materials
Science~\cite{RPPN:09}. In the continuous setting, it is known that the
algorithm converges to a local optimum under mild
assumptions~\cite{SW:15}. The algorithm has been adapted to various
discrete problems and several results are available, for instance for
binary problems~\cite{SW:10} and the traveling salesperson problem
(TSP)~\cite{HMHW:11}.

A PSO algorithm manages a collection (called \emph{swarm}) of particles. Each
particle consists of an (admissible) solution together with a velocity vector.
Additionally each individual particle knows the \emph{local attractor}, which
is the best solution found by that particle.  Information between particles is
shared via a common reference solution called \emph{global attractor}, which is
the best solution found so far by all particles. In each iteration of the
algorithm, the solution of each particle is updated based on its relative
position with respect to the attractors and some random perturbation. Algorithm
parameters balance the influence of the attractors and the perturbation and hence
give a tradeoff between the two general search strategies ``exploration'' and
``exploitation''.  Although PSO has originally been proposed to solve
optimization problems over a --- typically rectangular ---  domain $X \subseteq
\real^n$,
several authors have adapted PSO to discrete domains. This 
requires a fundamental reinterpretation of the PSO movement equation because
corresponding mathematical operations of a vector space are typically lacking
in the discrete setting.
An early discrete PSO variant is the \emph{binary PSO} presented in \cite{KE:97} for optimizing over
$X=\{0,1\}^n$ where velocities determine probabilities such that a
bit is zero or one in the next iteration. A PSO variant for 
optimizing over general integral domains $X=\{0,1,\ldots,M-1\}^n$, $M \in \nat$, has been proposed in \cite{VO:07}.

\subsection{Problems and Search Spaces}
In this section we briefly define the optimization problems for which we will
study the performance of the proposed PSO algorithm.  The problem \onemax asks
for a binary string of length $n$ that maximizes the function
\[
  \onemax(x_1,\ldots,x_n) = \smash[t]{\sum_{i=1}^n x_i}\enspace,
\]
which counts the number of ones in a binary string. A more general version of
this problem asks to minimize the Hamming distance to an unknown binary string
of length $n$. The proposed algorithm works exactly the same on the more
general problem since it is indifferent to the actual bit values and each bit is
handled independently.  Therefore, the performance of our algorithm on this more
general version is equal to its performance on \onemax.
The corresponding search space is the $n$-dimensional hypercube: Any binary
string of length $n$ is a (feasible) solution, and two solutions are adjacent
iff they differ by exactly one bitflip. For $n=4$, the search space is shown in
Figure~\ref{fig:hypercube}. More generally, a \emph{pseudo-Boolean function} is
any function $f : \{0, 1\}^n \to \real$.

By the \emph{sorting problem} we refer to the task of arranging $n$ items in
non-decreasing order using transpositions.  An (algebraic) transposition
$t=(i\,\,\, j)$ is the exchange of the entries at positions $i$ and $j$.
Therefore, the search space is the following (undirected) graph: The vertices
are the permutations on $\{1,2,\ldots,n\}$ and two vertices $x, y$ are adjacent
iff there is a transposition $t$ such that $x \circ t = y$. The objective
function is the transposition distance to the identity
permutation\footnote{Note that a different definition of ``transposition'' is
used in computational biology, so ``transposition distance'' has a
different meaning, e.g., in~\cite{BP:98}.}. Figure~\ref{fig:permutations} shows the
search space for the problem of sorting items $\{1, 2, 3, 4\}$ using
transpositions. Any two permutations drawn in the same vertical layer have the
same objective value.
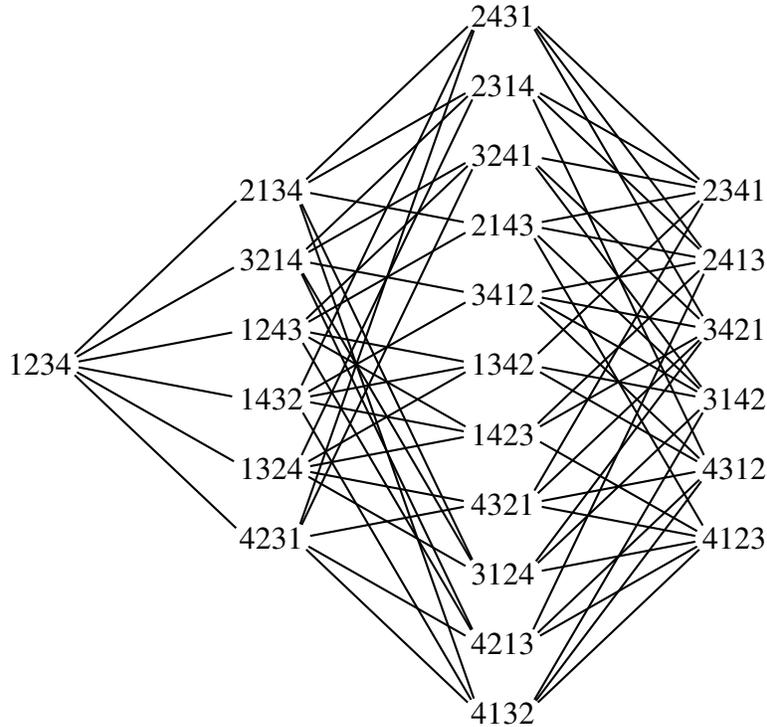
\begin{figure}[t]
\centering
    \begin{tikzpicture}
    \matrix[matrix of math nodes,column sep={7.3em,between origins},row sep={1.1em,between origins}]{
                   &               & |(2431)| 2431 &               \\
                   &               &               &               \\
                   &               & |(2314)| 2314 &               \\
                   &               &               &               \\
                   &               & |(3241)| 3241 &               \\
                   & |(2134)| 2134 &               & |(2341)| 2341 \\
                   &               & |(2143)| 2143 &               \\
                   & |(3214)| 3214 &               & |(2413)| 2413 \\
                   &               & |(3412)| 3412 &               \\
                   & |(1243)| 1243 &               & |(3421)| 3421 \\
     |(1234)| 1234 &               & |(1342)| 1342 &               \\
                   & |(1432)| 1432 &               & |(3142)| 3142 \\
                   &               & |(1423)| 1423 &               \\
                   & |(1324)| 1324 &               & |(4312)| 4312 \\
                   &               & |(4321)| 4321 &               \\
                   & |(4231)| 4231 &               & |(4123)| 4123 \\
                   &               & |(3124)| 3124 &               \\
                   &               &               &               \\
                   &               & |(4213)| 4213 &               \\
                   &               &               &               \\
                   &               & |(4132)| 4132 &               \\
    };
    \node[circle] at ([xshift=-0.6em]1234.east) (1234E){};
    \node[circle] at ([xshift=-0.6em]2134.east) (2134E){};
    \node[circle] at ([xshift=-0.6em]3214.east) (3214E){};
    \node[circle] at ([xshift=-0.6em]1243.east) (1243E){};
    \node[circle] at ([xshift=-0.6em]1432.east) (1432E){};
    \node[circle] at ([xshift=-0.6em]1324.east) (1324E){};
    \node[circle] at ([xshift=-0.6em]4231.east) (4231E){};
    \node[circle] at ([xshift=-0.6em]2431.east) (2431E){};
    \node[circle] at ([xshift=-0.6em]2314.east) (2314E){};
    \node[circle] at ([xshift=-0.6em]3241.east) (3241E){};
    \node[circle] at ([xshift=-0.6em]2143.east) (2143E){};
    \node[circle] at ([xshift=-0.6em]3412.east) (3412E){};
    \node[circle] at ([xshift=-0.6em]1342.east) (1342E){};
    \node[circle] at ([xshift=-0.6em]1423.east) (1423E){};
    \node[circle] at ([xshift=-0.6em]4321.east) (4321E){};
    \node[circle] at ([xshift=-0.6em]3124.east) (3124E){};
    \node[circle] at ([xshift=-0.6em]4213.east) (4213E){};
    \node[circle] at ([xshift=-0.6em]4132.east) (4132E){};
    \node[circle] at ([xshift= 0.6em]2134.west) (2134W){};
    \node[circle] at ([xshift= 0.6em]3214.west) (3214W){};
    \node[circle] at ([xshift= 0.6em]1243.west) (1243W){};
    \node[circle] at ([xshift= 0.6em]1432.west) (1432W){};
    \node[circle] at ([xshift= 0.6em]1324.west) (1324W){};
    \node[circle] at ([xshift= 0.6em]4231.west) (4231W){};
    \node[circle] at ([xshift= 0.6em]2431.west) (2431W){};
    \node[circle] at ([xshift= 0.6em]2314.west) (2314W){};
    \node[circle] at ([xshift= 0.6em]3241.west) (3241W){};
    \node[circle] at ([xshift= 0.6em]2143.west) (2143W){};
    \node[circle] at ([xshift= 0.6em]3412.west) (3412W){};
    \node[circle] at ([xshift= 0.6em]1342.west) (1342W){};
    \node[circle] at ([xshift= 0.6em]1423.west) (1423W){};
    \node[circle] at ([xshift= 0.6em]4321.west) (4321W){};
    \node[circle] at ([xshift= 0.6em]3124.west) (3124W){};
    \node[circle] at ([xshift= 0.6em]4213.west) (4213W){};
    \node[circle] at ([xshift= 0.6em]4132.west) (4132W){};
    \node[circle] at ([xshift= 0.6em]2341.west) (2341W){};
    \node[circle] at ([xshift= 0.6em]2413.west) (2413W){};
    \node[circle] at ([xshift= 0.6em]3421.west) (3421W){};
    \node[circle] at ([xshift= 0.6em]3142.west) (3142W){};
    \node[circle] at ([xshift= 0.6em]4312.west) (4312W){};
    \node[circle] at ([xshift= 0.6em]4123.west) (4123W){};
    \draw[thick](1234E) -- (2134W);
    \draw[thick](1234E) -- (3214W);
    \draw[thick](1234E) -- (1324W);
    \draw[thick](1234E) -- (4231W);
    \draw[thick](1234E) -- (1432W);
    \draw[thick](1234E) -- (1243W);
    \draw[thick](2134E) -- (3124W);
    \draw[thick](2134E) -- (2314W);
    \draw[thick](2134E) -- (4132W);
    \draw[thick](2134E) -- (2431W);
    \draw[thick](2134E) -- (2143W);
    \draw[thick](3214E) -- (2314W);
    \draw[thick](3214E) -- (3124W);
    \draw[thick](3214E) -- (4213W);
    \draw[thick](3214E) -- (3412W);
    \draw[thick](3214E) -- (3241W);
    \draw[thick](1324E) -- (3124W);
    \draw[thick](1324E) -- (2314W);
    \draw[thick](1324E) -- (4321W);
    \draw[thick](1324E) -- (1423W);
    \draw[thick](1324E) -- (1342W);
    \draw[thick](4231E) -- (2431W);
    \draw[thick](4231E) -- (3241W);
    \draw[thick](4231E) -- (4321W);
    \draw[thick](4231E) -- (4132W);
    \draw[thick](4231E) -- (4213W);
    \draw[thick](1432E) -- (4132W);
    \draw[thick](1432E) -- (3412W);
    \draw[thick](1432E) -- (1342W);
    \draw[thick](1432E) -- (2431W);
    \draw[thick](1432E) -- (1423W);
    \draw[thick](1243E) -- (2143W);
    \draw[thick](1243E) -- (4213W);
    \draw[thick](1243E) -- (1423W);
    \draw[thick](1243E) -- (3241W);
    \draw[thick](1243E) -- (1342W);
    \draw[thick](2143E) -- (4123W);
    \draw[thick](2143E) -- (2413W);
    \draw[thick](2143E) -- (3142W);
    \draw[thick](2143E) -- (2341W);
    \draw[thick](4213E) -- (2413W);
    \draw[thick](4213E) -- (4123W);
    \draw[thick](4213E) -- (4312W);
    \draw[thick](1423E) -- (4123W);
    \draw[thick](1423E) -- (2413W);
    \draw[thick](1423E) -- (3421W);
    \draw[thick](3241E) -- (2341W);
    \draw[thick](3241E) -- (3421W);
    \draw[thick](3241E) -- (3142W);
    \draw[thick](1342E) -- (3142W);
    \draw[thick](1342E) -- (4312W);
    \draw[thick](1342E) -- (2341W);
    \draw[thick](3124E) -- (4123W);
    \draw[thick](3124E) -- (3421W);
    \draw[thick](3124E) -- (3142W);
    \draw[thick](2314E) -- (4312W);
    \draw[thick](2314E) -- (2413W);
    \draw[thick](2314E) -- (2341W);
    \draw[thick](4321E) -- (3421W);
    \draw[thick](4321E) -- (2341W);
    \draw[thick](4321E) -- (4123W);
    \draw[thick](4321E) -- (4312W);
    \draw[thick](4132E) -- (3142W);
    \draw[thick](4132E) -- (4312W);
    \draw[thick](4132E) -- (4123W);
    \draw[thick](3412E) -- (4312W);
    \draw[thick](3412E) -- (3142W);
    \draw[thick](3412E) -- (2413W);
    \draw[thick](3412E) -- (3421W);
    \draw[thick](2431E) -- (3421W);
    \draw[thick](2431E) -- (2341W);
    \draw[thick](2431E) -- (2413W);
\end{tikzpicture}
  \caption{Search space for the problem of sorting four items by transpositions. Two permutations
	$x,y$ on $\{1,2,3,4\}$ are adjacent iff there is a transposition $t$ such that $x \circ t = y$.}
  \label{fig:permutations}
\end{figure}

The sorting problem and \onemax have a unique optimum.  Furthermore, the value
of the objective function is the distance to the unique optimal solution in the
corresponding directed graph.

\subsection{Our Contribution}

We propose a simple adaptation of the PSO algorithm to optimization problems
over discrete domains. We refer to this algorithm as \dpso. The algorithm works
naturally on a large class of discrete problems, for instance optimization
problems over bitstrings, integral domains, and permutations. The general task is to optimize
a function $f:X\rightarrow\real$, where $X$ is a finite set of feasible
solutions.
Our assumptions on the problem structure are the following. We assume that the
set $X$ is the vertex set of a finite, strongly connected
graph and for any solution $x \in X$, we can sample a neighbor of $x$ efficiently and
uniformly. The \dpso algorithm essentially explores this graph, looking for an optimal vertex.
In our analysis, we assume at first a swarm size of one as in~\cite{MRSSW:17},
similar to the analysis of EAs and ACO in~\cite{SW:10}. We refer to the
corresponding specialization of \dpso as \onepso.
Indeed, for a single particle we have only a single attractor and,
as a consequence, a single parameter is sufficient to control
the tradeoff between the moving towards the attractor
and performing a random perturbation.

\begin{table*}[t] 
  \caption{ Summary of upper and lower bounds on the expected time
  taken by the algorithm \onepso to solve the sorting problem and \onemax for
  $c \in [0,1]$. The functions $\alpha(c)$ and $\beta(c)$ are given in Lemma
  \ref{lemma:lower_bound_sort_small_c} and Lemma
  \ref{lemma:lower_bound_1max_small_c}, respectively. Note that
  $1<\beta(c)<2$ and $\beta(c)<\alpha(c)<3+2\cdot \sqrt{2}<6$ for all $c\in(0,1/2)$.}

  \centering
  \begin{tabu} to 0.95\linewidth {l|X|X||X|X}
	&	\multicolumn{2}{c||}{\hoehe sorting}													&	\multicolumn{2}{c}{\onemax}	\tabularnewline\cline{2-3}\cline{4-5}
		&	\hoehe lower bound						&	upper bound		                                           	&	lower bound					&	upper bound	\tabularnewline\hline
		\hoeheX $c = 1$					& $\Omega(n^2)$		&	$O(n^2 \log n)$								& \multicolumn{2}{c}{$\Theta(n \log n)$} 						\tabularnewline
		\hoehe $c \in (\frac{1}{2},1)$	& $\Omega(n^2)$						&	$O(n^2 \log n)$												& \multicolumn{2}{c}{$\Theta(n \log n)$}  	\tabularnewline
		\hoehe $c = \frac{1}{2}$		& $\Omega(n^\frac{8}{3})$			&	$O(n^3 \log n)$												& $\Omega(n^\frac{3}{2})$		&	$O(n^{\frac{3}{2}} \log n)$							  	\tabularnewline	
		\hoehe $c \in (0,\frac{1}{2})$	& $\Omega(\alpha(c)^n \cdot n^2)$	&	$O\left( \left(\frac{1-c}{c}\right)^n n^2 \log n\right)$	& $\Omega(\beta(c)^n \cdot n)$	&	$O\left( \beta(c)^n \cdot n^2 \log n\right)$	\tabularnewline		
		\hoehe $c = 0$					& \multicolumn{2}{c||}{$\Theta(n!)^\dag$}																	& \multicolumn{2}{c}{$\Theta(2^n)$}	
  \end{tabu}
  \\$^\dag$ {The upper bound $O(n!)$ is conjectured. All other bounds --- including the lower bound $\Omega(n!)$ --- are proved formally in this work.}

  \label{tab:summary}
\end{table*}

Our main results are upper and lower bounds on the expected optimization time
of the proposed \onepso algorithm for solving the sorting problem and \onemax
in a black-box setting summarized in Table~\ref{tab:summary}. Certainly there are
faster algorithms for the sorting problem or \onemax in a non-black-box setting, e.\,g., quicksort for
the sorting problem.
The upper bounds we prove for \onepso naturally hold for \dpso and the bounds are
tight with respect to our Markov model.
The algorithm parameter $c$ determines the probability
of making a move towards the attractor. Depending on the parameter $c$, we
obtain a complete classification of the expected optimization time of \onepso.
For $c=0$, \onepso
performs a random walk on the search space, and for $c=1$, the algorithm
behaves like randomized local search (see~\cite{P:90} or \cite{DoerrNeumann2020} for results on local search variants).
For $c\in(1/2,1)$ the \onepso behaves essentially like the \oneea
variants from~\cite{DJW:02,STW:04}, since \oneea variants perform in expectation a
constant number of elementary mutations to obtain an improved solution
and \onepso with $c\in(1/2,1)$ in expectation also performs a
constant number of elementary mutations to find an improved solution, before
it returns to the current best solution. Therefore, \onepso in
a sense generalizes the \oneea algorithm since a parameter choice in
$c\in[0,1]$ supplies a broader range of behavior options than exploring
solutions which are in expectation a constant number of elementary mutations
away from the current best solution. If $c<1$ then the \onepso uses similar
subroutines as the Metropolis algorithm (see~\cite{M:53}), but for \onepso new
positions are always accepted and guidance to good solutions is instead
implemented by a ``drift'' to the best position found so far.

Indeed bounds on the expected optimization time for \onemax and upper bounds on
the expected optimization time for the sorting problem of the \onepso with
parameter $c\in(1/2,1]$ match the respective bounds on the expected
optimization time for \oneea variants from~\cite{DJW:02,STW:04}.
We show that
for $c  \in [1/2, 1)$, the expected optimization time of \onepso for sorting and
\onemax is polynomial and for $c \in (0, 1/2)$, the expected optimization time
is exponential. For $c$ in the latter range we provide lower bounds on the base
of the exponential expression by $\alpha(c)$ for sorting and $\beta(c)$ for
\onemax such that $1 < \beta(c) < \alpha(c) < 6$ (see
Figure~\ref{fig:alphabeta}). Please note that $\alpha$ and $\beta$ have been
significantly improved compared to the conference version \cite{MRSSW:17} and
the upper bound on \onemax has also been reduced heavily to an exponential term
with base $\beta(c)$. This means that the lower and upper bound on the base of
the exponential expression for the expected optimization time is equal. 
Note that for $c=1/2$ the expected time it takes to visit the attractor again
after moving away from it is maximal while keeping the expected optimization
time polynomial, i.\,e., we have a phase transition at $c=1/2$ such that for
any $c\leq 1/2$ the expected optimization time is polynomial and for any
$c>1/2$ the expected optimization time is exponential in $n$.
Hence, the parameter choice $c=1/2$ maximizes the time until the attractor is visited
again, i.\,e., the particles can explore the search space to the largest
possible extent, provided that \onemax and the sorting problem are solved
efficiently in a black-box setting.

In order to obtain the bounds shown in Table~\ref{tab:summary}, we use a
Markov model which captures the behavior of the \onepso algorithm between two
consecutive updates of the attractor.
Depending on whether we derive upper or lower bounds on the expected optimization time, the Markov model is
instantiated in a slightly different way. The relevant quantity we extract from
the Markov model is the expected number of steps it takes until the \onepso
algorithm returns to the attractor. We determine $\Theta$-bounds on the
expected return time by an analysis of appropriate recurrence equations. 
Similar recurrences occur, for example, in the runtime analysis of randomized
algorithms for the satisfiability problem~\cite{P:91,S:99}. Thus, our analysis
of the Markov model presented in Section~\ref{sec:model} may be of independent
interest.
For $c > 1/2$, the recurrence equations can be solved using standard methods.
For the parameter choice $c \leq  1/2$ however, 
we need to solve recurrence equations with
non-constant coefficients in order to get sufficiently accurate bounds from the
model.
The gaps between upper and lower bounds on the expected optimization times
shown in Table~\ref{tab:summary} result from choosing best-case or worst-case
bounds on the transition probabilities in the Markov model, which are specific
to the optimization problem. Since our bounds on the transition probabilities
are essentially tight, we can hope to close the gap between the upper and lower
bounds on the sorting problem (especially in the exponential case) only by
using a more elaborate model. 

Furthermore, based on Wald's equation and the Blackwell-Girshick equation  we
obtain also the \emph{variance} of the number of function evaluations needed to
find an optimal solution with respect to the Markov model.

\emph{Upper Bounds.} To obtain the upper bounds shown in Table~\ref{tab:summary}
we use the established fitness level method (e.\,g., see
\cite{Wegener:02}). We instantiate our Markov model such that improvements of
the attractor are only accounted for if the current position is at the
attractor. The main difficulty is to determine the expected number of steps
needed to return to the attractor. We obtain this quantity from the analysis of
the corresponding recurrences with constant and non-constant coefficients. 
Furthermore, we obtain by integration closed-form expressions of the expected
number of steps it takes to return to the attractor after an unsuccessful
attempt to improve the current best solution.

\emph{Lower Bounds.} The runtime of the \onepso algorithm is dominated by the time required
for the last improvement of the attractor, after which the global optimum has
been found. We again use the Markov model and observe that in this situation,
the global optimum can be reached only when the Markov model is in a specific
state. We argue that the optimal solution is included in a certain set
$\hat{Y}$ of indistinguishable states.  Therefore, in expectation, this
set needs to be hit $\Omega(\vert\hat{Y}\vert)$ times until the optimum
has been found. By evaluation of the return time to the attractor we also
obtain bounds on the return time to the set $\hat{Y}$.
Furthermore, for \dpso with a constant number of particles, we give a lower
bound of $\Omega(\log n)$ for optimizing a pseudo-Boolean function and for $P =
\rm poly(n)$ particles a stronger lower bound of $\Omega(P\cdot n)$ for the
same task.

\emph{Open Problems.}
Finally, we conjecture that the expected optimization time of \onepso for
sorting $n$ items is asymptotically equivalent to $n!$ if the attractor is not
used at all ($c = 0$). An equivalent statement is that a random walk on the set of
permutations of $n$ items using single transpositions as neighborhood relation
asymptotically takes expected time $n!$ to discover a fixed given
permutation starting at a random position. We provide theoretical evidence for this
conjecture in Appendix~\ref{subsec:randomwalkconjecture}. Furthermore, we
conjecture stronger lower bounds for \onepso for sorting $n$ items for $c > 0$
and provide evidence in Appendix~\ref{subsec:approxOptTime}.

\emph{Extensions.} This article extends the conference paper \cite{MRSSW:17} as
follows. We present \dpso, a discrete PSO algorithm with multiple particles in
Section~\ref{sec:dpso} and show in Section~\ref{sec:dpsobounds} which results
for \onepso generalize to \dpso. For Pseudo-Boolean functions, a new general
lower bound is presented, which holds for \dpso (see
Subsection~\ref{sec:pseudoboolean:lb}).  Furthermore, we give a refined
analysis in the case of exponential runtime which allows us to determine the
exact base of the exponential expression (see
Section~\ref{subsec:integrationbounds}).  In addition to the analysis of the
expected optimization time, we also consider the variance of the expected
optimization time (see Subsection~\ref{subsec:varianceAnalysis}).  We
conjecture that the expected number of steps needed until the random walk on
the space of permutations hits a fixed permutation is $n!$. We provide evidence
for this conjecture in Appendix~\ref{subsec:randomwalkconjecture}.
Finally, we present an approximate model of the algorithm with
average transition
probabilities to obtain the actual bounds where lower and upper bounds on the
expected optimization time differ (see Appendix~\ref{subsec:approxOptTime}).
Also, some theorems have been extended to larger classes of functions, which
may make them more useful in other settings.

\subsection{Related Work}
\label{sec:runtime:related}

Runtime results are available for several other meta-heuristics for
optimization problems over discrete domains, for example evolutionary
algorithms (EAs)~\cite{DJW:02,GW:03,Wegener:02,ADFH:2018,ADY:2019} and ant colony optimization
(ACO)~\cite{DNFW:07,NW:07,ST:2012}. 
Most of the results relevant to this work concern the binary PSO algorithm
and the \oneea algorithm.   For
the binary PSO, \cite{SW:08,SW:10} provide various runtime
results. For instance, they give general lower bound of $\Omega(n/\log n)$ for
every function with a unique global optimum and a bound of $\Theta(n \log n)$
on the function \onemax.  Note that the binary PSO studied in~\cite{SW:10} has
been designed for optimizing over $\{0,1\}^n$ and it is different from our
proposed \dpso, which can be applied to a much wider range of discrete
problems.  Sudhold and Witt show the following bound for the binary PSO.

\begin{theorem}\textbf{\emph{\cite[Thm.~3]{SW:10}}}
  Under certain as\-sump\-tions
  on the algorithm parameters, the expected optimization time of the binary PSO for optimizing $f : \{0,1\}^n
  \rightarrow \real$  is $O(mn\log n) + \sum_{i=1}^{m-1} 1/s_i$, where $m$ is the number of level sets of $f$ and $s_i$ is a lower bound on the probability to move from level $i$ to level $i-1$.
  \label{thm:sw10:general}
\end{theorem}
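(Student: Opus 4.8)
The statement is an upper bound of fitness-level type, so the plan is to adapt Wegener's fitness level method~\cite{Wegener:02} to the binary PSO by tracking the trajectory of the \emph{global attractor} rather than the raw particle position. The crucial structural fact is that the global attractor never worsens, so its fitness value passes monotonically through the level sets $A_0, A_1, \ldots, A_{m-1}$, where $A_0$ is the set of optima and a move from level $i$ to level $i-1$ is an improvement occurring with probability at least $s_i$. This monotonicity lets one write the total expected optimization time as a sum, over the at most $m-1$ levels the attractor still has to leave, of the expected time spent with the attractor at that level. The two summands in the claimed bound reflect two different costs incurred at each level: the classical fitness-level cost $\sum_{i=1}^{m-1} 1/s_i$ of actually sampling an improving bitstring, and an overhead, summing to $O(mn\log n)$, caused by the velocity vector having to re-align after each change of the attractor.

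For a fixed current level $i$ I would first isolate a \emph{saturation phase}: after the attractor is set, the velocities are pushed toward the attractor until the particle reproduces the attractor on all $n$ coordinates simultaneously with probability $\Omega(1)$ in a single step. The per-coordinate analysis uses, under the parameter assumptions, that each velocity whose bit disagrees with the attractor is driven toward its clamped extreme, so that after the build-up the per-bit probability of matching the attractor exceeds $1 - O(1/n)$ and the all-coordinates event has constant probability; how many steps this build-up takes is exactly the quantity bounded in the next step. Conditioned on the particle sitting at the attractor, a single sampling step produces a strictly better bitstring with probability at least $s_i$ by hypothesis, while with probability $\Omega(1)$ it regenerates the attractor so that no progress is lost. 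Hence, once saturated, the expected number of steps needed to leave level $i$ is $O(1/s_i)$.

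The main obstacle is to show that the saturation phase costs only $O(n\log n)$ steps per level, and to do so uniformly over all levels. This requires a coupon-collector-style argument over the $n$ coordinates that bounds the expected time until every velocity has reached the matching threshold, while simultaneously guaranteeing that the saturation does not become so strong that the bit which must still flip to improve the attractor can no longer be flipped --- the continuous velocities and the discrete positions are coupled, and both effects have to be controlled at once. Granting this lemma, each of the $m$ levels the attractor enters (each entered exactly once, by monotonicity) contributes $O(n\log n)$ for saturation and $O(1/s_i)$ for the improving step; by additivity of expectations the fitness level method then yields the total bound $O(mn\log n) + \sum_{i=1}^{m-1} 1/s_i$.
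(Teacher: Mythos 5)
This statement is not proved in the paper at all: it is Theorem~3 of Sudholt and Witt, quoted verbatim as background in the related-work section, and the paper's only commentary on it is the one-sentence gloss that ``the binary PSO converges to the attractor in expected time $O(n\log n)$ unless the attractor has been updated meanwhile. This happens once for each fitness level.'' So there is no in-paper proof to compare against; the comparison has to be with the cited source. Measured against that, your outline reproduces the standard argument: a per-level decomposition driven by the monotonicity of the global attractor, a freezing/saturation phase of expected length $O(n\log n)$ per level after which the particle reproduces the attractor with constant probability per step, and then a geometric waiting time of order $1/s_i$ for the improving sample. This is exactly the structure of the original proof and of the paper's informal explanation, including your correct observation that the velocity clamping must keep every bit flippable with probability $\Omega(1/n)$ so that $s_i$ remains attainable after saturation.

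The one substantive reservation is that your write-up explicitly defers the crux --- the lemma that saturation costs only $O(n\log n)$ steps per level, uniformly over levels and robust to the coupling between velocities and sampled positions --- with the phrase ``granting this lemma.'' In the source this is where essentially all the work lies (a drift/coupon-collector argument on the $n$ velocity coordinates under the stated parameter assumptions, using the bound $v_{\max}=\ln(n-1)$ so that frozen bits match the attractor with probability exactly $1-1/n$), so as a standalone proof your text is a correct skeleton rather than a complete argument. Since the paper itself offers no proof either, this does not put you at odds with anything in the paper, but you should be aware that the unproven saturation lemma is the theorem, not a technicality.
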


Essentially, this result reflects the fact that the binary PSO
converges to the attractor in expected time $O(n\log n)$ unless the attractor
has been updated meanwhile. This happens once for each fitness level. For
\onemax, this result yields an expected optimization time of $O(n^2 \log n)$. By a more careful
analysis of the binary PSO on \onemax, the following improved
bound is established:

\begin{theorem}\textbf{\emph{\cite[Thm.~5]{SW:10}}}\label{thm:sw10:onemax}
The expected
op\-ti\-miza\-tion time of the binary PSO with a single particle optimizing \onemax is $O(n \log n)$.
\end{theorem}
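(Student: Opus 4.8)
The plan is to sharpen the generic fitness-level estimate of Theorem~\ref{thm:sw10:general}. Applied directly to \onemax that bound yields only $O(n^2\log n)$, because \onemax has $m=n+1$ level sets and the argument charges a full $O(n\log n)$ convergence-to-the-attractor cost at every level. I would instead run a single drift argument on the number of incorrect bits, so that the convergence cost is paid only once in an amortized sense and the per-level overhead disappears.

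First I would fix the potential. Let $Z_t$ be the number of positions in which the current attractor carries a $0$. Since a single-particle binary PSO updates its attractor only to strictly better (more-ones) positions, $Z_t$ is non-increasing and $T=\min\{t:Z_t=0\}$ is exactly the optimization time; it therefore suffices to bound $\E[T]$. The core estimate is a lower bound on the one-step probability of improving the attractor when $Z_t=i$. Under the parameter assumptions, once the attractor carries a $1$ in a position its velocity is driven into a high regime in which that bit is sampled as a $0$ with probability only $O(1/n)$, while each position where the attractor carries a $0$ is still sampled as a $1$ with probability $\Omega(1/n)$. Conditioning on all $n-i$ correct bits remaining $1$ (which happens with probability $(1-O(1/n))^{n-i}=\Omega(1)$) and on at least one of the $i$ incorrect bits turning into a $1$ (probability $\Omega(i/n)$), the sampled string has strictly more ones and improves the attractor. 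This gives an improvement probability of $\Omega(i/n)$, matching the classical \onemax analysis of the \oneea, hence multiplicative drift $\E[Z_t-Z_{t+1}\mid Z_t=i]=\Omega(i/n)$. The multiplicative drift theorem then delivers $\E[T]=O(n\log n)$, in line with the telescoping sum $\sum_{i=1}^{n}O(n/i)=O(nH_n)=O(n\log n)$.

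The hard part will be the \emph{transient} immediately following each attractor update: the velocities of the freshly set one-bits have not yet relaxed into the high regime, so several of them may still be sampled as a $0$ with non-negligible (even constant) probability, driving the factor $(1-O(1/n))^{n-i}$ down and temporarily pushing the improvement probability below $\Omega(i/n)$, which breaks the step-wise drift bound. I see two ways to control this. One is to enlarge the potential so that it additionally charges every one-bit whose velocity has not yet entered the high regime, and to show the augmented potential still exhibits multiplicative drift. The other, which I would attempt first, is to prove that after an update each newly set bit reaches the high-velocity regime within $O(\log n)$ steps with high probability; since the ones-count strictly increases at every update, any run performs at most $n$ attractor updates, so the total transient cost sums to $O(n\log n)$, the same order as the main term, and hence does not affect the asymptotics. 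Establishing that the correct bits stabilize this quickly, and that the occasional reversions of not-yet-frozen bits do not accumulate, is the crux of the argument; with that in hand, the fitness-level summation and the appeal to the drift theorem are routine.
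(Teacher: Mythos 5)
The first thing to note is that the paper contains no proof of this statement: Theorem~\ref{thm:sw10:onemax} is imported verbatim from Sudholt and Witt \cite{SW:10} as background for comparison with \onepso, so there is no ``paper's own proof'' to measure your attempt against. Your sketch is aimed at reconstructing the argument of the cited source, and its overall strategy --- improvement probability $\Omega(i/n)$ once the velocities of the correctly set bits have saturated, combined with a fitness-level/multiplicative-drift summation $\sum_i O(n/i)=O(n\log n)$, plus an amortized accounting of the post-update transients --- is indeed the right shape for why the generic $O(n^2\log n)$ of Theorem~\ref{thm:sw10:general} can be improved to $O(n\log n)$.

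However, as submitted the proposal has a genuine gap, and you name it yourself: the entire content of the improvement from $O(n^2\log n)$ to $O(n\log n)$ lies in controlling the transient after each attractor update, and that part is only described as something you ``would attempt,'' not proven. Everything you do carry out (the $\Omega(1)\cdot\Omega(i/n)$ product bound under full saturation, the drift-theorem invocation) is the easy half that is already implicit in the generic fitness-level bound. What is missing is a quantitative lemma of the form: after an attractor update, each newly correct bit's velocity reaches the saturation bound $v_{\max}=\ln(n-1)$ within $O(\log n)$ steps (in expectation or with high probability), \emph{and} an argument that the steps spent in unsaturated phases --- during which your lower bound $\Omega(i/n)$ on the improvement probability is not available --- contribute only $O(n\log n)$ in total to $\E[T]$. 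The second half is not automatic from the first: you must partition time into saturated and unsaturated segments and show the drift argument applies on the former while the latter have bounded total expected length, or else augment the potential as in your first alternative. Until one of these is carried out, the claimed bound is not established; with it, the rest is routine, but that lemma is precisely the technical heart of the cited theorem.
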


The \oneea
considered in~\cite{STW:04} is reminiscent of stochastic hill climbing: In each
iteration, a random solution is sampled and the current solution is replaced if
and only if the solution is better.  In order to escape local optima, the
distance between the current solution and the new one is determined according
to Poisson distributed random variables.
\cite{STW:04} provide bounds on the expected optimization time of a
\oneea sorting $n$ items. They consider various choices of objective functions (e.\,g., Hamming
distance, transposition distance, \ldots) as well as mutation operators (e.\,g.,
transpositions, reversing keys in a certain range, \ldots). A
general lower bound of $\Omega(n^2)$ is proved, which holds for all permutation
problems having objective functions with a unique
optimum~\cite[Thm.~1]{STW:04}. The most relevant runtime result for a
comparison with our \dpso algorithm is the following:

\begin{theorem}\textbf{\emph{\cite[Thm.~2/Thm.~4]{STW:04}}}
  The expected op\-ti\-miza\-tion
  time of the \oneea for sorting $n$ items is $\Theta(n^2
  \log n)$ if the objective function is the transposition distance to the
  sorted sequence and mutations are transpositions.
  \label{thm:easorting}
\end{theorem}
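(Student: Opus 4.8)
The plan is to prove the two halves of the $\Theta(n^2\log n)$ claim for the \oneea separately. Throughout I will use that the transposition distance of a permutation $\pi$ to the identity equals $n-c(\pi)$, where $c(\pi)$ is its number of cycles (fixed points included), and that multiplying by a single transposition changes $c(\pi)$ by exactly $\pm1$: a transposition with both indices in one cycle splits it (distance $-1$), while one joining two different cycles merges them (distance $+1$). If $\pi$ has cycle lengths $\ell_1,\dots,\ell_c$, the number of distance-decreasing transpositions is $\sum_i\binom{\ell_i}{2}\ge\sum_i(\ell_i-1)=d$, using $\binom{\ell}{2}\ge\ell-1$ for $\ell\ge2$; here $d$ denotes the current distance.

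For the upper bound I would run the fitness-level method with the levels indexed by $d$. The mutation applies $S\ge1$ random transpositions, and the Poisson-based rule gives $\Pr[S=1]=\Omega(1)$; conditioned on $S=1$ the transposition is uniform over the $\binom{n}{2}$ pairs. Hence the probability of a strict improvement from distance $d$ is at least $\Pr[S=1]\cdot d/\binom{n}{2}=\Omega(d/n^2)$, and such a step is always accepted. The expected time on level $d$ is therefore $O(n^2/d)$, and summing over $d=1,\dots,n-1$ gives $O\big(n^2\sum_{d=1}^{n-1}1/d\big)=O(n^2\log n)$.

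The lower bound is the crux. Running a drift argument on $d$ itself only yields $\Omega(n^2)$, because a single $(d{+}1)$-cycle offers $\binom{d+1}{2}=\Theta(d^2)$ improving transpositions, so the drift of $d$ can be as large as $\Theta(d^2/n^2)$. The remedy is the potential $Y=$ number of misplaced elements (non-fixed points); note $Y=\Theta(d)$ and $Y\in\{0,2,3,\dots\}$. Creating a fixed point is genuinely hard: a position $p$ with $\pi(p)\ne p$ becomes fixed by a single transposition only through the unique pair $(p,\pi^{-1}(p))$, so at most $Y$ of the $\binom{n}{2}$ transpositions raise the fixed-point count. For a mutation of $S$ transpositions I would union-bound over which transposition is the \emph{last} one to move the value $p$: that transposition must contain the index $p$, an event of probability $1/\binom{n}{2}$ given the history, so $\Pr[p\text{ becomes fixed}\mid S]=O(S/n^2)$. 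Summing over the $Y$ misplaced positions and over $S$ (with $\E[S]=O(1)$) yields $\E[Y_t-Y_{t+1}\mid Y_t=Y]=O(Y/n^2)$, i.e. multiplicative drift with rate $\delta=\Theta(1/n^2)$ (rejected steps leave $Y$ unchanged, and discarding the newly-unfixed positions only weakens the required bound).

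Since each transposition changes the fixed-point count by at most $2$, a step moves $Y$ by at most $2S$, which has exponentially decaying tails; this meets the jump condition of the multiplicative-drift lower-bound theorem. Applying that theorem with $\delta=\Theta(1/n^2)$, a start with $Y_0=\Theta(n)$ (a single $n$-cycle, or a uniformly random permutation, which has $O(1)$ fixed points with high probability) and target $Y=0$ gives expected time $\Omega(\ln(Y_0)/\delta)=\Omega(n^2\log n)$, matching the upper bound. The main obstacle is exactly this multi-transposition drift estimate: because the intermediate positions of the value $p$ depend on earlier random transpositions, one cannot treat a composed move as a single uniform transposition, and it is the ``last touch'' union bound that makes $\Pr[p\text{ becomes fixed}\mid S]=O(S/n^2)$ rigorous; checking the precise form of the jump condition demanded by the drift theorem is the remaining technical point.
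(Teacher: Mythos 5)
Your proposal is correct and follows essentially the route the paper itself indicates: the paper does not prove Theorem~\ref{thm:easorting} but cites it from \cite{STW:04}, and the accompanying remark describes exactly your two ingredients --- the fitness-level upper bound from the improvement probability $\Omega(k/n^2)$ at transposition distance $k$, and a lower bound obtained by switching to the Hamming distance (number of misplaced elements) as the progress measure while the algorithm still accepts by transposition distance. Your only deviation is cosmetic: you package the Hamming-distance argument as a multiplicative-drift lower bound (whose jump condition you rightly flag as the remaining technical check, and which can alternatively be replaced by the coupon-collector-style argument of \cite{STW:04}, since each initially misplaced position must receive its correct entry via a specific transposition of probability $O(1/n^2)$ per step), and the phrase ``must contain the index $p$, an event of probability $1/\binom{n}{2}$'' should read that the step must equal the unique transposition placing entry $p$ at position $p$, which is the event of probability $1/\binom{n}{2}$.
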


The upper bound can be obtained by the fitness level method and a lower bound of
$\Omega(k/n^2)$ on the probability of improvement when the current solution is at
transposition distance $k$ to the attractor. The lower bound follows from a
similar argument. In addition, for determining the lower bound \cite{STW:04} consider the
Hamming distance to evaluate the distance between the current position and the optimum
although the algorithm still uses the transposition distance to decide which
position is better. In the conference version \cite{MRSSW:17} we incorrectly
claimed that the proof does not apply to this setting. Thanks to an anonymous
reviewer we can correct this statement here.

In contrast to the \oneea algorithm, the binary PSO studied in~\cite{SW:10}
allows for non-improving solutions, but it converges to the attractor exactly
once per fitness level. After the convergence occurred, the binary PSO behaves
essentially like the \oneea.

Additionally, \cite{RSW:19} is based on a preliminary version of the present paper.
There the authors applied the \onepso to the single-source shortest path problem.
For this purpose they extend the Markov model presented here by allowing self loops.
They also used the bounds by integration which are proved in this work without repeating the proof.
The following upper and lower bounds on the expected optimization time are
given which are dependent on the algorithm parameter $c$ specifying the
probability of movement towards the attractor.
\begin{theorem}\textbf{\emph{\cite[Thm.~5/Thm.~7]{RSW:19}}}
  The expected optimization time $T(n)$, to solve the single-source shortest path problem with $n$ nodes is bounded by
  \begin{equation*}
    T(n)=\begin{cases}
      O(n^3)&\text{if }c\in(\frac{1}{2},1]\\
      O(n^{\nicefrac{7}{2}})&\text{if }c=\frac{1}{2}\\
      O(n^4\cdot \varphi(c)^n)&\text{if }c\in(0,\frac{1}{2})\\
    \end{cases}
  \end{equation*}
    and
  \begin{equation*}
    T(n)=\begin{cases}
      \Omega(n^2)&\text{if }c\in(\frac{1}{2},1]\\
      \Omega(n^{\nicefrac{5}{2}})&\text{if }c=\frac{1}{2}\\
      \Omega( (\varphi(c)-\varepsilon)^n)&\text{if }c\in(0,\frac{1}{2})\\
    \end{cases}
    \enspace,
  \end{equation*}
  where $\varphi(c)=e^{-(1-2c)/(1-c)}\cdot(\frac{1-c}{c})$ and any arbitrarily small $\varepsilon>0$.
\end{theorem}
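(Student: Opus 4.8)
The plan is to mirror the strategy the paper develops for \onemax and the sorting problem, transferred to the shortest-path search space. First I would fix the representation used by \onepso on the single-source shortest path problem: a candidate solution is a predecessor function assigning to each of the $n-1$ non-source nodes a tentative parent, two solutions being adjacent when they differ in a single such assignment, and the objective being the (aggregated) tentative distances to the source. The natural fitness levels are then indexed by the number of nodes whose shortest-path distance is already correct; since a correct shortest-path tree fixes the parents of all $n-1$ nodes, there are $O(n)$ levels, and the probability $s_i$ of improving from a level is governed by the $\Omega(1/n)$ chance of picking the one node that can be corrected together with the $\Theta(1/n)$ chance of picking its correct parent, giving $s_i=\Omega(1/n^2)$. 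The one genuinely new ingredient is that many elementary moves neither improve nor worsen the fitness, so the fitness-level Markov chain carries \emph{self-loops}; as indicated in the text, the model of Section~\ref{sec:model} has to be instantiated in the extended form of~\cite{RSW:19} that admits such loops.

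For the upper bounds I would instantiate the Markov model so that an improvement of the attractor is credited only when the current position coincides with the attractor, so the cost of each fitness level decomposes into the expected time to sample an improving neighbour while sitting at the attractor plus the expected time to return to the attractor after each unsuccessful excursion. The return time is exactly the quantity the paper controls through a recurrence in the distance to the attractor. For $c\in(1/2,1]$ this recurrence has constant coefficients, yielding a constant expected return time, and multiplying the $\Omega(1/n^2)$ improvement probabilities over the $O(n)$ levels produces the claimed $O(n^3)$. For $c=1/2$ and $c\in(0,1/2)$ the coefficients are non-constant and I would invoke the closed-form bounds obtained \emph{by integration}, which at the critical value $c=1/2$ inflate the return time by a polynomial factor (giving $O(n^{7/2})$) and below it by an exponential factor whose base is precisely $\varphi(c)=e^{-(1-2c)/(1-c)}\cdot\frac{1-c}{c}$, the sub-exponential factor $e^{-(1-2c)/(1-c)}$ emerging from the integral estimate of the variable-coefficient recurrence. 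The self-loop probabilities enter this recurrence as a rescaling of the per-step transition rates and must be carried through the integration.

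For the lower bounds I would argue, as in the paper's general scheme, that the runtime is dominated by the final improvement of the attractor, which can occur only when the model occupies one specific state. Using the notion of \emph{indistinguishability} of states I would collect the optimal solution into a set $\hat{Y}$ whose members the algorithm cannot tell apart before hitting the optimum, so that $\hat{Y}$ must be visited $\Omega(|\hat{Y}|)$ times in expectation; combined with a lower bound on the return time to $\hat{Y}$ extracted from the same recurrence, this yields $\Omega(n^2)$ for $c\in(1/2,1]$, $\Omega(n^{5/2})$ at $c=1/2$, and an exponential bound of base $\varphi(c)$ for $c\in(0,1/2)$. The $\varepsilon$ loss in $(\varphi(c)-\varepsilon)^n$ reflects the slack between the best-case and worst-case transition probabilities one must use to keep the integral estimates tractable, exactly as the gaps in Table~\ref{tab:summary} arise.

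The main obstacle I anticipate is twofold. First, pinning down tight, problem-specific bounds on the improvement probabilities $s_i$ for shortest paths, because correcting one node's parent may depend on whether its neighbours are already correct, so the levels are not independent and the uniform $\Omega(1/n^2)$ estimate across levels needs genuine justification. Second, re-deriving the variable-coefficient recurrence in the presence of self-loops and verifying that the integration technique still produces the clean exponential base $\varphi(c)$ rather than an uncontrolled correction term. Everything else reduces to the recurrence and indistinguishability machinery already established earlier in the paper.
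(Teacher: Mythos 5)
This theorem is not proved in the present paper at all: it is imported verbatim from \cite{RSW:19} as related work, and the surrounding text only records that the authors of \cite{RSW:19} extend the Markov model of Section~\ref{sec:model} by self-loops and reuse the bounds-by-integration of Section~\ref{subsec:integrationbounds}. Your outline is consistent with exactly that description --- fitness levels combined with return times to the attractor, the integral-based estimate producing the base $\varphi(c)$, and indistinguishability for the lower bounds --- so as far as this paper allows one to judge, you have reconstructed the intended approach; the genuinely new ingredients (the precise level structure and improvement probabilities for shortest paths, and the treatment of self-loops in the recurrence) live in \cite{RSW:19} and cannot be checked against anything in this document. You correctly identify those two points as the real obstacles, which is the right assessment.
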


\subsection{Organization of the Paper}

In Section~\ref{sec:dpso} we introduce the algorithm \dpso for solving
optimization problems over discrete domains.
In Section~\ref{sec:model} we provide a Markov model for the behavior of the
algorithm {\onepso} --- a restriction of \dpso to one particle --- between two
updates of the local attractor.
Section~\ref{sec:tools} contains a comprehensive analysis of this Markov model.
The results from this section are used in Section~\ref{sec:runtime} in
order to obtain the bounds on the expected optimization time for \onepso shown in
Table~\ref{tab:summary} as well as lower bounds for \dpso on pseudo-Boolean
functions. Section~\ref{sec:conclusion} contains some concluding remarks.

\section{Discrete PSO Algorithm}
\label{sec:dpso}

In this section we introduce the \dpso algorithm, a PSO
algorithm that optimizes functions over discrete domains.
A simplified version of the algorithm that uses just a single particle will be referred to as \onepso.
Note that \onepso is different from the 1-PSO studied in \cite{SW:10}, which
is tailored to optimization over bitstrings.
The \dpso and \onepso algorithm sample items from a finite
set $X$ in order to determine
some $x^{*} \in X$ that minimizes a given \emph{objective function} $f:X
\longrightarrow \real$. In order to have a discrete PSO that remains true to
the principles of the original PSO for optimization in the
domain $\real^n$ from~\cite{eb_ken_1995,ken_eb_1995}, we need some additional
structure on $X$: For each $x \in X$ we have a set of \emph{neighbors} $\neighborhood_X(x)$.
If the set $X$ is clear from the context we may drop the subscript. The
neighborhood structure induces a solution graph with nodes $X$ and arcs $\{xy
\mid x, y \in X, y \in \neighborhood(x)\}$. The \emph{distance} $\dist(x, y)$ of solutions
$x, y \in X$ is the length of a shortest (directed) $xy$-path in this graph.
We assume that the solution graph is strongly connected, so the PSO cannot get
``trapped'' in a particular strongly connected component. The search spaces of
our reference problems in combination with the used neighborhood relationship
satisfy this assumption.

\begin{algorithm}[ht]
  \caption{\dpso}
  \label{alg:dpso}
  \SetStartEndCondition{ }{}{}%
  \SetKwFunction{Range}{range}
  \SetKw{KwTo}{to}%
  \SetKw{KwAnd}{and}%
  \SetKwFor{For}{for}{ do}{}%
  \SetKwIF{If}{ElseIf}{Else}{if}{ then}{else if}{else}{}%
  \SetKwFor{While}{while}{ do}{fintq}%
  \AlgoDontDisplayBlockMarkers\SetAlgoNoLine\SetAlgoNoEnd
  \DontPrintSemicolon
  \SetKwInOut{Input}{input}
  \SetKwInOut{InOut}{in/out}
  \SetKwInOut{Output}{output}
  \Input{Function $f:X \rightarrow \real$, $P\in\naturalNumbers$,\\ $c_{loc} \in [0,1]$, $c_{glob}\in[0,1-c_{loc}]$}
  \BlankLine\;

\For(\tcc*[f]{initialization})
{$i=1$ \KwTo $P$}
{
    pick position $x_i \in X$ \uar\;
    $l_i \longleftarrow x_i$\tcc*{local attractor}
 
}

$g\longleftarrow {\mathrm{argmin}}_{x_i}\lbrace f(x_i)\rbrace$\tcc*{\!\!\!global attractor\!\!\!}
\While(\tcc*[f]{PSO iterations})
{\texttt{True}}
{
    \For
    {$i=1$ \KwTo $P$}
    {
        pick $q \in [0,1]$ \uar\;
        \uIf
        {$x_i \neq l_i$ \KwAnd $l_i \neq g$ \KwAnd $q \in[0,c_{loc}]$}
        {
          \tcc{towards local attractor (if not equal to global attractor)}
            \nl \label{alg:sample:local}$x' \in \{ y \in \neighborhood(x_i) \mid \dist(y,l_i) < \dist(x_i, l_i)\}$ \!\uar\!\!\! \;
        }
        \uElseIf
        {$x_i \neq g$ \KwAnd $q \in]1-c_{glob},1]$}
        {
            \tcc{towards global attractor}
            \nl \label{alg:sample:global}$x' \in \{ y \in \neighborhood(x_i) \mid \dist(y,g) < \dist(x_i, g)\}$\,\uar\;
        }
        \uElse(\tcc*[h]{random direction})
        {
            \nl \label{alg:sample:random}$x' \in \neighborhood(x_i)$ \uar\;
        }
        $x_i \longleftarrow x'$\tcc*{update position}
        \tcc{update attractors}
        \If
        {$f(x_i) < f(l_i)$}
        {
            \nl \label{alg:improvement:local} $l_i \longleftarrow x_i$\;
        }
            \If
            {$f(x_i) < f(g)$}
            {
                \nl \label{alg:improvement:global} $g \longleftarrow x_i$\;
            }
    }
}
\end{algorithm}

The \dpso algorithm performs the steps shown in Algorithm~\ref{alg:dpso}.  The
initial positions of the particles are chosen uniformly at random (\uar) from
the search space~$X$. The parameter $c_{loc}$ determines the importance of the local attractor
$l_i$ of each particle, which is the best solution a single particle has found so
far, and the parameter $c_{glob}$ determines the importance of the global
attractor $g$, which is the best solution all particles have found so far.  In each
iteration each particle moves towards the local attractor with probability
$c_{loc}$, moves towards the global attractor with probability $c_{glob}$ and
otherwise move to a random neighbor. If $l_i$ equals $g$ then the particles
still move only with probability $c_{glob}$ to $g$.
Note that the attractors $l_i$ and $g$
are updated in lines~\ref{alg:improvement:local}
and~\ref{alg:improvement:global} whenever a strictly better solution has been found.

Alternatively, one could choose to update local and global attractors whenever
the new position is at least as good as the position of the attractor (use $\leq$
instead of $<$ in lines~\ref{alg:improvement:local}
and~\ref{alg:improvement:global}). The theorems presented here can also be
carried over to this modified setting. Admittedly, for functions with plateaus
this version potentially performs better, since a plateau is traversed easily
by the modified algorithm. However, for the problems considered in this work
there are no plateaus. Additionally, the probability to improve the objective function in
the situation where position and attractor differ but have equal objective
function value is higher than in the situation where the position is placed at
the attractor. 

At first glance, Algorithm~\ref{alg:dpso} may not seem like an implementation of the PSO ideas, since we are choosing only a
single attractor on each move or even make a random move whereas the classical
PSO uses local and global attractor at each move, but looking at several
consecutive iterations we retain the tendency of movement towards all the
attractors.  We consider the PSO to be an infinite process, so we do not give a
termination criterion. 
We assume that sampling of $x'$ in lines~\ref{alg:sample:local},
\ref{alg:sample:global} and~\ref{alg:sample:random} can be performed
efficiently. This is the case for the neighborhood structures we consider.

The algorithm \onepso is simply given by Algorithm~\ref{alg:dpso} with a single particle, i.e., we have $P = 1$.
Note that there is only a
single attractor in this case. Hence, \onepso has just a single parameter $c=c_{glob}$ that
determines the probability of moving towards the (global) attractor $g$. In all other
aspects it behaves like the \dpso algorithm.

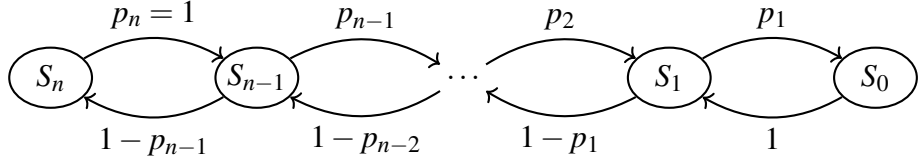
\begin{figure*}[t]
  \centering
  \begin{tikzpicture}[node distance=6.5em,vertex/.style={shape=ellipse, minimum height=0.8cm, minimum width=1.1cm}]
	\node[draw, thick, vertex] (s4) {};
	\node[draw, thick, vertex,right of=s4] (s3) {};
	\node[      thick, right of=s3] (pause) {$\ldots$};
	\node[draw, thick, vertex,right of=pause] (s1) {};
	\node[draw, thick, vertex,right of=s1] (s0) {};

	\node[vertex] (s4t) at (s4) {$S_{n}$};
	\node[vertex] (s3t) at (s3) {$S_{n-1}$};
	\node[vertex] (s1t) at (s1) {$S_1$};
	\node[vertex] (s0t) at (s0) {$S_0$};

	\path[thick,->]
		(s4)    edge [bend left=30] node[label={[label distance=-5pt]above:{$p_{n}=1$}}] {} (s3)
		(s3)    edge [bend left=30] node[label={[label distance=-5pt]above:{$p_{n-1}$}}] {} (pause)
		(pause) edge [bend left=30] node[label={[label distance=-5pt]above:{$p_{2}$}}] {} (s1)
		(s1)    edge [bend left=30] node[label={[label distance=-5pt]above:{$p_{1}$}}] {} (s0)
        ;

	\path[thick,->]
	  (s0)    edge [bend left=30] node[label={[label distance=-5pt]below:{$1$}}] {} (s1)
	  (s1)    edge [bend left=30] node[label={[label distance=-5pt]below:{$1-p_{1}$}}] {} (pause)
	  (pause) edge [bend left=30] node[label={[label distance=-5pt]below:{$1-p_{n-2}$}}] {} (s3)
	  (s3)    edge [bend left=30] node[label={[label distance=-5pt]below:{$1-p_{n-1}$}}] {} (s4);
  \end{tikzpicture}
  \caption{State diagram of the Markov model}
  \label{fig:markov}
\end{figure*}

\section{Markov Model of \onepso}
\label{sec:model}

We present a simple Markov model that captures the behavior of the \onepso
algorithm between two consecutive updates of the attractor. This model has been
presented already in \cite{MRSSW:17} but we repeat it here to present a self
contained overview on the presented approach. As an extension to
\cite{MRSSW:17} we also present how the variance can be computed. This is
essential for experiments and if \onepso and \dpso are actually used.
Using this model we
can infer upper and lower bounds on the expected optimization time of
the \onepso algorithm on suitable discrete functions.
For our analysis, we assume that the objective function $f: X \rightarrow
\real$ has the property that every local optimum is a global one. That is, the
function is either constant or any non-optimum solution $x$ has a neighbor $y
\in \neighborhood(x)$ such that $f(x) > f(y)$. Functions with that property are
called \emph{unimodal} functions. Although this restriction
certainly narrows down the class of objective functions to which our analysis
applies, the class seems still appreciably large, e.\,g., it properly contains  
the class of functions with a unique global optimum and no further local optima. 

Assume that the attractor $g \in X$ is fixed and $g$ is not a minimizer of $f$.
Under which conditions can a new ``best'' solution be found? Certainly, if the
current position $x$ is equal to $g$, then, by the described structure of $f$ we get an
improvement with positive probability. If $x \neq g$ then the attractor may
still be improved. However, for the purpose of upper bounding the expected
optimization time of the \onepso we dismiss the possibility that the attractor
is improved if $x \neq g$. As a result, we obtain a reasonably simple
Markov model of the \onepso behavior. Quite surprisingly, using the same Markov model,
we are also able to get good lower bounds on the expected optimization time of
the \onepso (see Section~\ref{sec:runtime:lb} for the details).

Recall that we think of the search space in terms of a strongly connected graph.
Let $n$ be the diameter of the search space $X$, i.e., the maximum distance of any two points in $X$.  We partition the search space
according to the distance to the attractor $g\in X$. That is,
for $0 \leq i \leq n$, let $X_i  = \{ x \in X \mid \dist(x, g) = i\}$. Note
that this partition does not depend on the objective function. If the search
space is not symmetric as in our case it could also be possible that some $X_i$
are empty, because the maximal distance to a specific solution in the search
space could be less than the diameter. The model consists of $n+1$ states
$S_0,S_1,\ldots,S_n$. Being in state $S_i$ indicates that the current solution
$x$ of the \onepso is in $X_i$.

For each $x\in X_i$ we denote by $p_x$ the transition probability from $x$ to an element in $X_{i-1}$.
The probabilities $p_x$ in turn depend on the parameter $c$, which
is the probability that the \onepso explicitly moves towards the attractor. 
If the current position $x$ is in $X_{i}$ and
the algorithm moves towards the attractor, then the new position is in $X_{i-1}$.
On the other hand, if the PSO updates $x$ to any neighbor chosen
\uar from $\neighborhood(x)$, then the new position is in $X_{i-1}$ with probability
$|\neighborhood(x) \cap X_{i-1}|/|\neighborhood(x)|$.
So we obtain the transition probability
\[
  p_x = c + (1-c) \cdot\frac{|X_{i-1} \cap \neighborhood(x)|}{|\neighborhood(x)|} \enspace.
\]
\begin{remark}
  \label{rem:noInternalTransitions}
  In this work we assume that the probability that we move from a position $x\in X_i$ to an element in $X_i$ is zero, i.\,e., if we move from a position $x$ to a neighboring position $x'\in\neighborhood(x)$ then the distance to the attractor does always change ($\dist(x,g)\neq\dist(x',g)$).
\end{remark}
This assumption holds for both problems we investigate in
Section~\ref{sec:runtime}. Nevertheless, an extensions allowing transitions
inside a level $X_i$ is possible and has been considered already in the
article~\cite{RSW:19} which is based on an arXiv preprint of this article.

Using the assumption in Remark~\ref{rem:noInternalTransitions} and the fact that
$X_i$ is defined by distance to a fixed position $g$ the probability of moving
from $x$ to an element in $X_j$, $j \notin \{i-1, i+1\}$, is zero.
Consequently, the probability of moving from $x$ to an element in $X_{i+1}$ is then $1-p_x$.
Furthermore, if the \onepso is at position $x \in X_n$ then any move brings us
closer to the reference solution; so $p_x = 1$ in these cases.

Please note that $p_{x}$ and $p_{x'}$ can differ even if $x,x'$ are contained
in the same set $X_i$.  Therefore we do not necessarily obtain a Markov model
if we use the states $S_i$ and the transition probabilities $p_i=p_x$ for some
$x\in X_i$ as this value is not necessarily equal for all $x'\in X_i$.

Nevertheless, we can analyze Markov chains using bounds on the transition
probabilities.  To be more precise, we can use $p_i:=\min_{x\in X_i}p_x$ as
lower bound and $p_i':=\max_{x\in X_i}p_x$ as upper bound on the transition
probabilities in the direction to the attractor to obtain an upper bound and
lower bound on the expected number of iterations until the distance to the
attractor is decreased respectively.
Figure~\ref{fig:markov} shows the state diagram of this model.

\begin{definition}
  \label{defi:model}
  By
  $\model\left((p_i)_{1 \leq i \leq n}\right)$ we denote an instance of the
  Markov model with states $S_0,S_1,\ldots,S_n$ and data $p_i\in\R_{\geq 0}$, $1 \leq i \leq n$.
  Suppose we are in state $S_i$. Then we move to state $S_{i-1}$ with probability
  \begin{align*}
    1 &\text{ if }   i = n\\
    \min \{ 1, p_i \} &\text{ if }   1 \leq i < n
  \end{align*}
  and otherwise we move to state $S_{i+1}$.
\end{definition}
Please note that the data does not need to be in $[0,1]$ but for the ease of
presentation we will refer to them as probabilities. Furthermore, supposed we
are in state $S_{n}$ even if $p_n\neq 1$ the probability of moving to $S_{n-1}$
is $1$. This notation allows us to succinctly specify the currently used
Markov model, e.\,g., the model which is used to obtain upper bounds is
described by $\model\left( (\min_{x\in X_i}p_x)_{1\leq i\leq n} \right)$.

Our goal is to determine the expected number of steps needed to hit a solution
which is better than the attractor after starting in $S_0$. Let $p_g$ be the
probability to improve the attractor if we are currently in state $S_0$, hence
at the attractor. Then the probability $p_g$ depends on $f$ and the choice of
$g$. We have that $p_g$ is positive since $f$ is unimodal.  In order to reach a
better solution from $S_0$ we need in expectation $1/p_g$ tries. If we are
unsuccessful in some try, then the \onepso moves to $S_1$. For upper bounds we
can ignore the chance to improve the attractor through other states. Thus we
need to determine the expected number of steps it takes until we can perform
the next try, that is, the expected first hitting time for the state $S_0$,
starting in $S_1$. The expected number $h_i$ of steps needed to move from $S_i$
to $S_0$ is given by the following recurrence:
\begin{equation}
  \begin{aligned}
	h_{n} &= 1 + h_{n-1}\enspace ,\enspace h_0=0 \\
	h_{i} &= 1 + p_i \cdot h_{i-1} + (1 - p_i)\cdot h_{i+1}\enspace, & 1 \leq i < n\enspace.
  \end{aligned}
  \label{eq:recmarkov}
\end{equation}

\section{Analysis of the Markov Model}
\label{sec:tools}

In this section we prove upper and lower bounds on the expected return time to
the state $S_0$ of the Markov model from Section~\ref{sec:model}. These bounds
are of key importance for our runtime analysis in Section~\ref{sec:runtime}.
For the lower bounds we also introduce a notion of indistinguishability of
certain states of a Markov model.

In our analysis the probabilities $p_i$ are generally not identical. If
we assume that $p_1 = p_2 = \ldots = p_n = p$, then we obtain a non-homogeneous
recurrence of order two with constant coefficients. In this case, standard
methods can be used to determine the expected time needed to move to the
attractor state $S_0$ as a function of $n$~\cite[Ch.~7]{GKP:94}. Note also that
for $p=1/k$ this is exactly the
recurrence that occurs in the analysis of a randomized algorithm for
$k$-\textsc{SAT}~\cite{P:91,S:99} and~\cite[pp.~160f.]{MU:05}.
If $p_i$ has not necessarily identical values which are dependent on $i$, then the recurrence can in some
cases be solved, see e.\,g., \cite[Ch.~7]{GKP:94} and~\cite{Petkovsek:94}. Here,
due to the structure of the recurrence, we can use a more pedestrian approach,
which is outlined in the next section.

\subsection{Reformulation of the Recurrence}
\label{sec:reformulation}

We first present a useful reformulation of Recurrence~\eqref{eq:recmarkov}. 
From this reformulation we will derive closed-form expressions and asymptotic
properties of the return time to the attractor of the transition probabilities.

Let $W_{i}$ be the number of steps needed to
move from state $S_i$ to state $S_{i-1}$ and let $H_i:=\E[W_i]$ be its
expectation.  Then $H_i$ can be determined from
$H_{i+1}$ as follows: In expectation, we need $1/p_i$ trials to get from $S_i$
to $S_{i-1}$, and each trial, except for the successful one, requires
$1+H_{i+1}$ steps. The successful trial requires only a single step, so
$H_{i}$ is captured by the following recurrence:
\begin{align}
  H_i &=  \frac{1}{p_i} \left( 1 + H_{i+1} \right) - H_{i+1}\enspace =\frac{1}{p_i}+\frac{1-p_i}{p_i}\cdot H_{i+1} , & 1 \leq i < n \label{eq:recreform}\\
  H_n &= 1\enspace. \label{eq:recinitial}
\end{align}
Another interpretation is the following: 
$W_i$ is equal to one
with probability $p_i$, the direct step to $S_{i-1}$, and with probability
$1-p_i$ it takes the current step which leads to state $S_{i+1}$, then
$W_{i+1}$ steps to go from $S_{i+1}$ back to
$S_i$ and then again $W_i$ 
steps. For the expected value of $W_i$ this interpretation leads to the formula
$$
H_i=p_i + (1-p_i)\cdot(1+H_{i+1}+H_i)\enspace,
$$
which is equivalent to Equation~\ref{eq:recreform} after solving for $H_i$. 
Please note that the probabilities $p_i$ are mostly determined by some
function depending on $n$ and $i$.
Unfolding the recurrence specified in Equation~\eqref{eq:recreform} $k$ times,
$1 \leq k \leq n$, followed by some rearrangement of the terms yields 
\begin{equation}
  H_1=\sum_{i=1}^{k-1}\left(  \frac{1}{p_i}\cdot\prod_{j=1}^{i-1}\frac{1-p_j}{p_j}\right) + H_k\cdot\prod_{j=1}^{k-1}\frac{1-p_j}{p_j}\enspace.\label{eq:partialunfold}
\end{equation}
Thus, for $k=n$ we obtain the following expression for $H_1$:
\begin{equation}
  H_1 = \sum_{i=1}^n \left( \frac{1}{p_i} \cdot \prod_{j=1}^{i-1} \frac{1-p_j}{p_j}\right) - \prod_{j=1}^{n}\frac{1-p_j}{p_j}\enspace,
  \label{eq:recclosed}
\end{equation}
where the second term is a correction term which is required whenever $p_n <
1$ (see Definition \ref{defi:model}) in order to satisfy the initial condition given in
Equation~\eqref{eq:recinitial}.  Equation~\eqref{eq:recclosed} has also been
mentioned in \cite[Lemma 3]{DJW:01} in the context of the analysis of randomized local search or in \cite[Theorem 3]{KK:18}.
$H_k$ can be obtained analogously, which leads to
\begin{equation}
  H_k = \sum_{i=k}^n \left( \frac{1}{p_i} \cdot \prod_{j=k}^{i-1} \frac{1-p_j}{p_j}\right) - \prod_{j=k}^{n}\frac{1-p_j}{p_j}\enspace.
  \label{eq:recclosedk}
\end{equation}

\subsection{Identical Transition Probabilities} \label{section:constant_probabilities}
\label{subsec:const}

If the probabilities $p_i = p$ for some constant $p \in [0,1]$ and $1 \leq i <
n$, then Recurrences~\eqref{eq:recreform} become linear recurrence equations
with constant coefficients. Standard methods can be used to determine
closed-form expressions for $h_i$ and $H_i$. However, we are mainly interested
in $H_1$ and are able to determine closed-form expressions directly from
Equation~\eqref{eq:recclosed}. 

\begin{theorem}
  \label{thm:returntime}
  Let $0 < p < 1$. Then the expected return time $H_1$ to $S_0$ is
  \begin{equation}
	H_1 = h_1 = 
	\begin{cases}
	  \dfrac{1-2p \left( \dfrac{1-p}{p} \right)^n }{2p-1}	&	\text{if $p \neq \frac{1}{2}$}\\
	  2n-1															 &   \text{if $p = \frac{1}{2}$}\enspace.
	\end{cases}
	\label{eq:constreturn}
  \end{equation}
\end{theorem}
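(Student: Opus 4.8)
The plan is to specialize the already-derived closed form \eqref{eq:recclosed} to the case $p_i = p$ and evaluate the resulting geometric sum. Writing $r := (1-p)/p$, every product $\prod_{j=1}^{i-1}\frac{1-p_j}{p_j}$ collapses to $r^{i-1}$, so \eqref{eq:recclosed} becomes $H_1 = \frac1p\sum_{i=1}^n r^{i-1} - r^n$. From here the whole computation reduces to summing a finite geometric series; I therefore expect no genuine obstacle. The only care needed is to split into two regimes according to whether the ratio $r$ equals $1$, since $r = 1 \iff p = 1/2$.

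For $p \ne 1/2$ we have $r \ne 1$, so $\sum_{i=1}^n r^{i-1} = (r^n-1)/(r-1)$. I would then use the identity $r - 1 = (1-2p)/p$, which cancels the leading $1/p$ and turns the prefactor $\frac1p\cdot\frac1{r-1}$ into $1/(1-2p)$. Collecting terms over the common denominator $1-2p$ and simplifying gives $H_1 = (2pr^n-1)/(1-2p) = (1-2pr^n)/(2p-1)$, which is exactly the claimed first branch after resubstituting $r = (1-p)/p$. For $p = 1/2$ the ratio is $r = 1$, the sum is simply $n$, and $\frac1p\cdot n - 1 = 2n-1$ matches the second branch; this value is also recovered as the limit $p \to 1/2$ of the first branch (an l'H\^opital or Taylor computation), which serves as a useful consistency check. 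As an alternative route, one could bypass \eqref{eq:recclosed} entirely and solve the constant-coefficient recurrence \eqref{eq:recreform}--\eqref{eq:recinitial} directly via its characteristic roots $1$ and $r$, fitting the single boundary condition $H_n = 1$; both approaches yield the same answer.

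Finally I would justify the identification $H_1 = h_1$ asserted in the statement. By definition $W_1$ is the number of steps to pass from $S_1$ to $S_0$ and $H_1 = \E[W_1]$, whereas $h_1$ is the expected first hitting time of $S_0$ started from $S_1$ in \eqref{eq:recmarkov}; since $S_0$ is reached from $S_1$ precisely at the first $S_1 \to S_0$ transition, these two quantities coincide, so the formula for $H_1$ simultaneously gives $h_1$. The one point to watch is purely bookkeeping: the correction term $-\prod_{j=1}^n \frac{1-p_j}{p_j} = -r^n$ in \eqref{eq:recclosed} is exactly what enforces the boundary condition \eqref{eq:recinitial}, so it must be carried through the algebra rather than discarded, and it is precisely this term that produces the factor $2p$ (rather than $1$) in the numerator of the final expression.
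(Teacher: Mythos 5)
Your proposal is correct and follows exactly the route the paper takes: the paper's proof consists of substituting $p_i = p$ into Equation~\eqref{eq:recclosed} and "performing some rearrangements," and your geometric-series computation (including the case split at $r=1$, i.e., $p=1/2$, and carrying the correction term $-r^n$ through to produce the factor $2p$ in the numerator) is precisely the omitted rearrangement. No gaps; your write-up simply makes explicit what the paper leaves to the reader.
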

\begin{proof}
  By setting $p_i = p$ in Equation~\eqref{eq:recclosed} and performing some
  rearrangements the theorem is proved.
\end{proof}

It is easily verified that this expression for $h_1$ satisfies
Equation~\eqref{eq:recmarkov}.  So, with $p_i = p$ we have that the time it takes to
return to the attractor is bounded from above by a constant, a linear function,
or an exponential exponential function in $n$ if $p > 1/2$, $p = 1/2$, or $p <
1/2$, respectively.

For the case $p=1/2$ one can obtain this result also from the Gambler's Ruin
Model by mirroring the state $0$ at $n$ which would result in termination if
values $0$ or $2n$ appear. Starting at value $1$ results in a first hitting
time of $2n-1$ in this Gambler's Ruin Model as stated in
Theorem~\ref{thm:returntime}.

\subsection{Non-identical Transition Probabilities}
\label{subsec:nonconst}

Motivated by the runtime analysis of \onepso applied to optimization problems
such as sorting and \onemax, we are particularly interested in the expected
time it takes to improve the attractor if the probabilities $p_i$ are
\emph{slightly} greater than $1/2$. By \emph{slightly} we mean $p_i = 1/2 +
i/(2A(n))$ which appears in the analysis of \onepso optimizing \onemax and the
sorting problem, or $p_i=1/2+A(i)/(2A(n))$ which appears in the
analysis of \onepso optimizing the sorting problem,
where  $A:\nat \rightarrow \nat$ is some non-decreasing function of $n$ such
that $\lim_{n \rightarrow \infty} A(n) = \infty$.
Recall from Definition~\ref{defi:model} that if $p_i>1$ then we move from state
$S_i$ to state $S_{i-1}$ with probability $1$.  Clearly, in this setting we
cannot hope for a recurrence with constant coefficients. Our goal in this
section is to obtain the asymptotics of $H_1$ as $n \rightarrow \infty$ for
$A(n) = n$ and $A(n) = \binom{n}{2}$. We show that for $p_i=1/2+i/(2\cdot
A(n))$ and $A(n) = n$ the return time to the attractor is $\Theta(\sqrt{n})$,
while for $A(n) = \binom{n}{2}$ the return time is $\Theta(n)$.
\begin{lemma}
  Let $M=\model((1/2+i/(2n))_{1\leq i\leq n})$. Then 
  \[
	H_1  = \frac{4^n}{\binom{2n}{n}} - 1
        \sim\sqrt{\pi n}=\Theta(\sqrt{n})\enspace.
  \]
  \label{lemma:sqrt_return}
\end{lemma}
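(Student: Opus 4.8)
The plan is to evaluate $H_1$ directly from the closed form in Equation~\eqref{eq:recclosed}, because for this model the coefficients $p_i = 1/2 + i/(2n) = (n+i)/(2n)$ collapse into clean factorial expressions. First I would compute the basic ratio $\frac{1-p_i}{p_i} = \frac{n-i}{n+i}$, which immediately reveals two useful facts. The correction term $\prod_{j=1}^{n} \frac{1-p_j}{p_j}$ in Equation~\eqref{eq:recclosed} vanishes, since its $j=n$ factor is $\frac{n-n}{n+n}=0$; this is exactly what one expects, as $p_n = (n+n)/(2n) = 1$, consistent with the boundary condition $H_n = 1$ in Equation~\eqref{eq:recinitial}. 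Moreover the partial products telescope: writing $\prod_{j=1}^{i-1} \frac{n-j}{n+j} = \frac{(n-1)!/(n-i)!}{(n+i-1)!/n!} = \frac{n!\,(n-1)!}{(n-i)!\,(n+i-1)!}$ and multiplying by $1/p_i = \frac{2n}{n+i}$, each summand simplifies to $\frac{2n\,n!\,(n-1)!}{(n-i)!\,(n+i)!}$.

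Next I would pull the constant $2n\,n!\,(n-1)! = 2(n!)^2$ out of the sum and reindex by $k = n-i$, which turns the remaining sum into $\sum_{k=0}^{n-1} \frac{1}{k!\,(2n-k)!}$. The key combinatorial step is to recognize this as a partial sum over the $(2n)$-th row of Pascal's triangle: from $\sum_{k=0}^{2n} \binom{2n}{k} = 4^n$ one gets $\sum_{k=0}^{2n} \frac{1}{k!\,(2n-k)!} = \frac{4^n}{(2n)!}$, and by the symmetry $k \leftrightarrow 2n-k$ the row splits as twice the partial sum plus the central term $\frac{1}{(n!)^2}$. Solving yields $\sum_{k=0}^{n-1} \frac{1}{k!\,(2n-k)!} = \frac{1}{2}\bigl(\frac{4^n}{(2n)!} - \frac{1}{(n!)^2}\bigr)$.

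Substituting back gives $H_1 = (n!)^2\bigl(\frac{4^n}{(2n)!} - \frac{1}{(n!)^2}\bigr) = \frac{4^n (n!)^2}{(2n)!} - 1 = \frac{4^n}{\binom{2n}{n}} - 1$, as claimed. The asymptotic statement then follows from the standard Stirling estimate $\binom{2n}{n} \sim \frac{4^n}{\sqrt{\pi n}}$, which gives $\frac{4^n}{\binom{2n}{n}} \sim \sqrt{\pi n}$ and hence $H_1 \sim \sqrt{\pi n} = \Theta(\sqrt{n})$.

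I do not expect a genuine obstacle here; the computation is essentially factorial bookkeeping once the closed form is in hand. The one step that requires care — and where a naive attempt could stall — is recognizing that $\sum_{k=0}^{n-1} \frac{1}{k!\,(2n-k)!}$ is a \emph{symmetric} partial sum of a binomial-row identity, rather than something needing a generating-function or hypergeometric treatment; spotting the $k \leftrightarrow 2n-k$ symmetry is precisely what makes the evaluation exact rather than merely asymptotic. I would also double-check that the correction term is truly zero, so that the endpoint $i=n$ is handled consistently with Definition~\ref{defi:model} and Equation~\eqref{eq:recinitial}.
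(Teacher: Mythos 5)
Your proposal is correct and follows essentially the same route as the paper: both evaluate Equation~\eqref{eq:recclosed} with $p_i=(n+i)/(2n)$, note the vanishing correction term, telescope the products into $2\sum_{i=1}^n \frac{(n!)^2}{(n-i)!(n+i)!}$, reindex to a symmetric partial sum of the $(2n)$-th binomial row equal to $\frac{1}{2}\bigl(4^n-\binom{2n}{n}\bigr)$, and finish with the standard estimate $\binom{2n}{n}\sim 4^n/\sqrt{\pi n}$. The only cosmetic difference is that the paper phrases the partial sum as $\frac{2}{\binom{2n}{n}}\sum_{i'=0}^{n-1}\binom{2n}{i'}$ and cites an explicit two-sided bound on the central binomial coefficient rather than invoking Stirling by name.
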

\begin{proof}
  We have $p_n = 1$ so the correction term in Equation~\eqref{eq:recclosed} is zero.
  We rearrange the remaining terms of Equation~\eqref{eq:recclosed} and find that
\begin{align*}
  H_1	&= \sum_{i=1}^n \frac{2n}{n + i} \prod_{j=1}^{i-1} \frac{n-j}{n+j}
  = 2\,  \sum_{i=1}^n \frac{n!\,n!}{(n-i)!\cdot(n+i)!} 
  \overset{i'=n-i}{=} \frac{2}{\binom{2n}{n}} \sum_{i'=0}^{n-1} \binom{2n}{i'}
   = \frac{4^n}{\binom{2n}{n}} - 1\enspace.
\end{align*}
Applying the well-known relation
$$
\frac {4^n}{\sqrt{\pi n}} \left(1-\frac 1 {4n}\right) 
<\binom{2n}{n}
< \frac {4^n}{\sqrt{\pi n}}
$$
(for an elegant derivation, see~\cite{H:15})
finishes the proof.
\end{proof}
This lemma can be generalized for linearly growing probabilities.
\begin{theorem}
  Let $M = \model\left( (p_i)_{1 \leq i \leq n}\right)$, where $p_i = 1/2 + i
  /(2A(n))$. 
  \\Then $H_1 = \Theta(\min(\sqrt{A(n)},n))$ with respect to $M$.
  \label{thm:h1_lin_theta}
\end{theorem}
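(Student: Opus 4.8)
The plan is to evaluate the closed form \eqref{eq:recclosed} for the specific probabilities $p_i = 1/2 + i/(2A)$, where I abbreviate $A = A(n)$. Substituting gives $\tfrac{1}{p_i} = \tfrac{2A}{A+i}$ and $\tfrac{1-p_j}{p_j} = \tfrac{A-j}{A+j}$, so that writing $P_i := \prod_{j=1}^{i-1}\tfrac{A-j}{A+j}$ we obtain $H_1 = \sum_{i=1}^n \tfrac{2A}{A+i}\,P_i - P_{n+1}$. The shape of the claimed bound suggests comparing the threshold $\sqrt{A}$ with the number $n$ of states, and I would split the argument according to whether $A\le n$ or $A>n$. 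The only delicate point in setting up is that for $A\le n$ the raw data satisfy $p_i\ge 1$ once $i\ge A$, so $P_i$ acquires a vanishing (and then negative) factor; there \eqref{eq:recclosed} must be read through Definition~\ref{defi:model}, i.e.\ with $\min\{1,p_i\}$.

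For $A\le n$ I would argue that the high states are irrelevant. Since $p_i\ge 1$ for every $i\ge A$, Definition~\ref{defi:model} forces a deterministic step towards the attractor from each such state, so a trajectory started in $S_1$ never visits a state above $S_A$, and $S_A$ acts as a reflecting boundary with $p_A=1$. Hence $H_1$ in $\model((1/2+i/(2A))_{1\le i\le n})$ equals the return time in the truncated model $\model((1/2+i/(2A))_{1\le i\le A})$, which is exactly the model of Lemma~\ref{lemma:sqrt_return} with $n$ replaced by $A$. That lemma yields $H_1 = 4^{A}/\binom{2A}{A} - 1 = \Theta(\sqrt{A})$, and since $\sqrt{A}\le n$ here this is $\Theta(\min(\sqrt{A},n))$.

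For $A>n$ all factors $\tfrac{A-j}{A+j}$ lie in $(0,1)$, so $0<P_{n+1}<1$ and the correction term is harmless. For the upper bound I would use $\tfrac{2A}{A+i}\le 2$ together with the elementary inequality $\ln\tfrac{1-t}{1+t}=-2\operatorname{artanh}(t)\le -2t$, which gives $P_i\le\exp(-\tfrac1A\sum_{j=1}^{i-1}2j)\le e^{-(i-1)^2/A}$; bounding the resulting Gaussian sum both trivially by $2n$ and by $2\bigl(1+\int_0^\infty e^{-x^2/A}\,dx\bigr)=2+\sqrt{\pi A}$ yields $H_1\le\min(2n,\,2+\sqrt{\pi A})=O(\min(\sqrt{A},n))$. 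For the matching lower bound I would keep only the first $m:=\min(\lfloor\sqrt{A}\rfloor,n)$ summands: for $i\le m\le\sqrt{A}$ one has $\tfrac{2A}{A+i}\ge 1$, and using $\ln(1-t)\ge -t/(1-t)$ and $\ln(1+t)\le t$ one gets $\ln\tfrac{A-j}{A+j}\ge -3j/A$, hence $\ln P_i\ge -\tfrac3A\sum_{j=1}^{i-1}j\ge -\tfrac32$, so each retained term is at least $e^{-3/2}$. Thus $H_1\ge e^{-3/2}m - P_{n+1}=\Omega(\min(\sqrt{A},n))$, which together with the upper bound gives the claim.

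The main obstacle I anticipate is making the lower bound tight in the intermediate regime $n<A\le n^2$, where the answer is $\Theta(\sqrt{A})$ rather than $\Theta(n)$. Naive monotonicity comparisons with a constant-probability chain (Theorem~\ref{thm:returntime}) are too lossy here: replacing every $p_i$ by the maximal value $p_n$ certifies only $\Omega(\sqrt{n})$, while replacing by the minimal value $p_1$ certifies only $O(n)$. What rescues the argument is that the sum is dominated by its first $\Theta(\sqrt{A})$ terms, each bounded below by an absolute constant because $i/A=O(1/\sqrt{A})\to 0$ there; quantifying this decay of $P_i$ uniformly over the relevant range, and handling the $A\le n$ truncation without ever evaluating \eqref{eq:recclosed} at a factor $p_i\ge 1$, is where the care is needed.
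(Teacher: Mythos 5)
Your proof is correct, and in the main regime $A(n)>n$ it takes a genuinely different route from the paper's. For $A(n)\le n$ the two arguments coincide: both observe that states above $S_{A(n)}$ are never visited, truncate, and invoke Lemma~\ref{lemma:sqrt_return}. For $A(n)>n$, however, you estimate the closed form \eqref{eq:recclosed} directly, showing that the products $P_i=\prod_{j<i}\frac{A-j}{A+j}$ decay like $e^{-(i-1)^2/A}$ and that the first $\min(\lfloor\sqrt{A}\rfloor,n)$ terms each contribute at least the absolute constant $e^{-3/2}$; this makes the origin of the $\sqrt{A}$ threshold transparent (a Gaussian sum) and gives both bounds in one computation. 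The paper instead sandwiches $M$ between auxiliary models: an extension to $A(n)$ states (reduced again to Lemma~\ref{lemma:sqrt_return}) for the upper bound, and a truncation to $\hat n=\min(n,\lfloor\sqrt{A(n)}\rfloor)$ states with constant probability $1/2+1/(2\hat n)\ge p_i$ for the lower bound, evaluated via Theorem~\ref{thm:returntime}. Your closing remark that constant-probability comparisons are ``too lossy'' is therefore slightly off the mark: the paper shows they suffice once combined with truncation to length $\lfloor\sqrt{A(n)}\rfloor$ --- exactly the length scale your direct estimate isolates. Both routes are sound; yours is more self-contained and quantitative, the paper's reuses already-proved lemmas. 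Two minor points to make explicit: your inequality $\ln\frac{A-j}{A+j}\ge -3j/A$ requires $j/A\le 1/2$, which your range $j<\sqrt{A}$ guarantees only for $A\ge 4$ (harmless asymptotically); and both you and the paper treat $A(n)$ as an integer when truncating, which is licensed by the standing assumption $A:\nat\rightarrow\nat$.
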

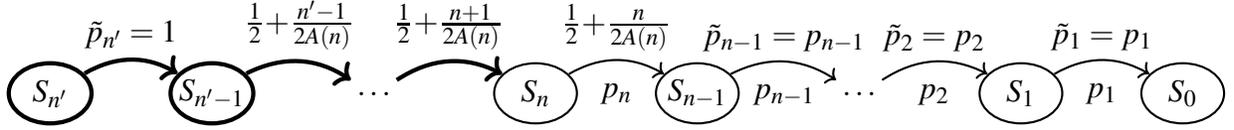
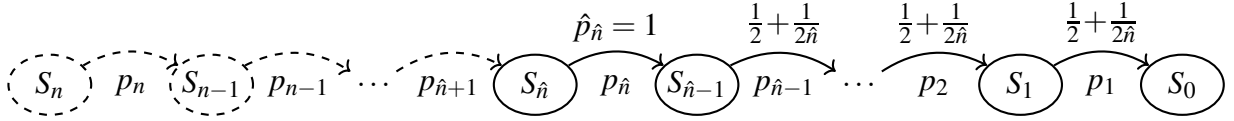
\begin{figure*}[tb]
\begin{center}
\subfloat[][$M$ (below arrows) and its extended version $\tilde M$ (above arrows) with $n'>n$ states.]
{
  \centering
  \begin{tikzpicture}[node distance=5.4em,vertex/.style={shape=ellipse, minimum height=0.8cm, minimum width=0.9cm}]
	\node[draw, ultra thick, vertex] (s7){};
	\node[draw, ultra thick, vertex,right of=s7    ] (s6) {};
	\node[            thick,        right of=s6    ] (pause2) {$\ldots$};
	\node[draw,       thick, vertex,right of=pause2] (s4) {};
	\node[draw,       thick, vertex,right of=s4    ] (s3) {};
	\node[            thick,        right of=s3    ] (pause) {$\ldots$};
	\node[draw,       thick, vertex,right of=pause ] (s1) {};
	\node[draw,       thick, vertex,right of=s1    ] (s0) {};
    \node[vertex] (s7t) at (s7) {$S_{n'}$};
	\node[vertex] (s6t) at (s6) {$S_{n'-1}$};
	\node[vertex] (s4t) at (s4) {$S_{n}$};
	\node[vertex] (s3t) at (s3) {$S_{n-1}$};
	\node[vertex] (s1t) at (s1) {$S_1$};
	\node[vertex] (s0t) at (s0) {$S_0$};

	\path[thick,->]
		(s7)     edge [ultra thick, bend left=30] node[label={[label distance=-3pt] above:{$\tilde p_{n'}=1$}}] {} (s6)
		(s6)     edge [ultra thick, bend left=30] node[label={[label distance=-3pt] above:{$\frac{1}{2}\!+\!\frac{n'-1}{2A(n)}$}}] {} (pause2)
		(pause2) edge [ultra thick, bend left=30] node[label={[label distance=-3pt] above:{$\frac{1}{2}\!+\!\frac{n+1}{2A(n)}$}}] {} (s4)
		(s4)     edge [             bend left=30] node[label={[label distance=-3pt] above:{$\frac{1}{2}\!+\!\frac{n}{2A(n)}$}}, label={below:{$p_{n}$}}] {} (s3)
		(s3)     edge [             bend left=30] node[label={[label distance=-3pt] above:{$\tilde p_{n-1}=p_{n-1}$}}, label={below:{$p_{n-1}$}}] {} (pause)
		(pause)  edge [             bend left=30] node[label={[label distance=-3pt] above:{$\tilde p_{2}=p_{2}$}}, label={below:{$p_2$}}] {} (s1)
		(s1)     edge [             bend left=30] node[label={[label distance=-3pt] above:{$\tilde p_{1}=p_{1}$}}, label={below:{$p_1$}}] {} (s0)
        ;

  \end{tikzpicture}
  \label{fig:markovExtendedMTilde}
}\\
\subfloat[][$M$ (below arrows) and its reduced version $\hat M$ (above arrows) with $\hat n<n$ states and larger probabilities in direction to $S_0$.]
{
  \centering
  \begin{tikzpicture}[node distance=5.4em,vertex/.style={shape=ellipse, minimum height=0.8cm, minimum width=0.9cm}]
	\node[draw, dashed, thick, vertex] (s7) {};
	\node[draw, dashed, thick, vertex,right of=s7] (s6) {};
	\node[              thick,        right of=s6] (pause2) {$\ldots$};
	\node[draw,         thick, vertex,right of=pause2] (s4) {};
	\node[draw,         thick, vertex,right of=s4] (s3) {};
	\node[              thick,        right of=s3] (pause) {$\ldots$};
	\node[draw,         thick, vertex,right of=pause] (s1) {};
	\node[draw,         thick, vertex,right of=s1] (s0)    {};
	\node[vertex] (s7t) at (s7) {$S_{n}$};
	\node[vertex] (s6t) at (s6) {$S_{n-1}$};
	\node[vertex] (s4t) at (s4) {$S_{\hat n}$};
	\node[vertex] (s3t) at (s3) {$S_{\hat n-1}$};
	\node[vertex] (s1t) at (s1) {$S_1$};
	\node[vertex] (s0t) at (s0) {$S_0$};

	\path[thick,->]
		(s7)     edge [dashed, bend left=30] node[label={below:{$p_{n}$}}] {} (s6)
		(s6)     edge [dashed, bend left=30] node[label={below:{$p_{n-1}$}}] {} (pause2)
		(pause2) edge [dashed, bend left=30] node[label={below:{$p_{\hat n+1}$}}] {} (s4)
		(s4)     edge [        bend left=30] node[label={[label distance=-3pt] above:{$\hat p_{\hat n}=1$}}, label={below:{$p_{\hat n}$}}] {} (s3)
		(s3)     edge [        bend left=30] node[label={[label distance=-3pt] above:{$\frac{1}{2}\!+\!\frac{1}{2\hat n}$}}, label={below:{$p_{\hat n-1}$}}] {} (pause)
		(pause)  edge [        bend left=30] node[label={[label distance=-3pt] above:{$\frac{1}{2}\!+\!\frac{1}{2\hat n}$}}, label={below:{$p_2$}}] {} (s1)
		(s1)     edge [        bend left=30] node[label={[label distance=-3pt] above:{$\frac{1}{2}\!+\!\frac{1}{2\hat n}$}}, label={below:{$p_1$}}] {} (s0)
        ;

  \end{tikzpicture}
  \label{fig:markovReducedMHat}
}
  \end{center}
 
  \caption{State diagram of the Markov model $M$ and its modified versions $\tilde M$ and $\hat M$ used in the proof of Theorem~\ref{thm:h1_lin_theta}.}
  \label{fig:markovModifiedModels}
\end{figure*}
\begin{proof}
  First, we consider the case $A(n) \leq n^2$. Let $n' =  A(n)$, which is the smallest
  number such that $p_{n'} = 1/2 + n'/(2A(n)) \geq 1$. First, assume that $n'
  \leq n$ and consider the ``truncated'' model $M' = \model\left( (p_i)_{1 \leq
  i \leq n'} \right)$.  Please note that there is actually no difference
  between $M$ and $M'$ because the removed states are never visited as
  $p_{n'}$, the probability to move from $S_{n'}$ to $S_{n'-1}$, is already one
  and $S_{n'+1}$ is never visited.  Let $H_1'$ be the expected time to
  reach state $S_0$ starting at state $S_1$ with respect to $M'$. By
  Lemma~\ref{lemma:sqrt_return} we have $H_1'=\THETA(\sqrt{A(n)})$, which is by
  the construction of $M'$ equal to $H_1$. On the other hand, assume that
  $n' > n$ and consider the ``extended'' model $\tilde M = \model\left(
  (p_i)_{1 \leq i \leq n'} \right)$. $M$ and $\tilde M$ are visualized in
  Figure \ref{fig:markovExtendedMTilde} with omitted probability of worsening.
  Let $\tilde H_1$ be the expected time to reach state $S_0$ starting at state
  $S_1$ with respect to $\tilde M$. By Lemma~\ref{lemma:sqrt_return} we have
  $\tilde H_1 = \THETA(\sqrt{A(n)})$ and since $\tilde H_1 \geq H_1$ we obtain
  $H_1 = O(\sqrt{A(n)})$.

  To obtain a lower bound on $H_1$ for the case $n' > n$ we consider the model
  $\hat M = \model\left( (\hat p_i)_{1\leq i\leq \hat n} \right)$, where $\hat
  n = \min(n,\lfloor\sqrt{A(n)}\rfloor)$ and $\hat p_i = 1/2 + 1/(2\hat n)$ for
  $1 \leq i < \hat n$, and $\hat p_{\hat n} = 1$. For $1\leq i < \hat n$ we
  have that $\hat p_i \geq p_i$ because $1/\hat n=\hat n/\hat n^2\geq \hat
  n/(A(n))$. A schematic representation of $M$ and $\hat M$ can be found in
  Figure \ref{fig:markovReducedMHat}. Let $\hat H_1$ denote the expected time
  to reach state $S_0$ starting at state $S_1$ in $\hat M$. Since $\hat p_i
  \geq p_i$ for $1 \leq i \leq \hat n$, $\hat H_1$ is a lower bound on $H_1$.
  This fact is obvious as in this Markov model one can move only to neighboring
  states. Increasing the probability of moving towards the final state
  consequently decreases the probability of movement in the opposite direction
  and therefore the hitting time to reach the final state is in expectation
  reduced if the probabilities of moving towards the final state are increased.
  Since $p := \hat p_i$ is constant for $1 \leq i < \hat n$ we get from
  Theorem~\ref{thm:returntime} that
  \[
	\hat H_1 = \frac{1-2p\left(\dfrac{1-p}{p}\right)^{\hat n}}{2p-1}\enspace .
  \]
  Substituting $p = 1/2 + 1/{(2 \hat n)}$ gives
  \begin{align*}
  \hat H_1 &= {\hat n}-({\hat n}+1)\cdot\left(\frac{{\hat n}-1}{{\hat n}+1}\right)^{\hat n}
  = {\hat n}-\frac{({\hat n}+1)^2}{{\hat n}-1}\cdot\left(1-\frac{2}{{\hat n}+1}\right)^{{\hat n}+1}
  \\&\geq{\hat n}-\frac{({\hat n}+1)^2}{{\hat n}-1}\cdot {\rm e}^{-2}
  =\Omega({\hat n})\enspace.
  \end{align*}
  Therefore $H_1=\Omega({\hat n})=\Omega(\min(n,\sqrt{A(n)}))$.

  It remains to show that the statement holds if $A(n) > n^2$. In this
  case, $H_1 = O(n)$ is obtained by setting $p_i = 1/2$, which is a lower bound
  on the probabilities of moving towards $S_0$, for $1 \leq i < n$ and
  invoking Theorem~\ref{thm:returntime}.  On the other hand, setting
  $\breve{A}(n) = n^2$ and using $\breve M=\model((1/2+i/(2\breve{A}(n)))_{1 \leq
  i \leq n})$ gives a lower bound on $H_1$, because $1/2+i/(2\breve{A}(n))$ is
  an upper bound on $1/2 + i/(2A(n))$. As discussed above, for
  the case ${A}(n) \leq n^2$, the expected time to reach state $S_0$ starting at
  state $S_1$ in $\breve M$ is $\OMEGA(\min(n,\sqrt{\breve{A}(n)}))$.  Therefore,
  $H_1 = \OMEGA(\min(n,\sqrt{\breve{A}(n)}))=\OMEGA(n)$, which completes the proof.
\end{proof}

For our application, the sorting problem, the following special case of
Theorem~\ref{thm:h1_lin_theta} will be of interest:

\begin{corollary}
Let $M = \model((1/2(1+i/\binom{n}{2}))_{1 \leq i \leq n})$, then $H_1=\Theta(n)$.
  \label{cor:returntime}
\end{corollary}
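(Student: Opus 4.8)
The plan is to invoke Theorem~\ref{thm:h1_lin_theta} directly, since the Markov model in the statement is precisely an instance of the one treated there. First I would rewrite the given transition probabilities to expose the relevant form: for $1 \leq i \leq n$,
\[
  p_i = \tfrac{1}{2}\Bigl(1 + \tfrac{i}{\binom{n}{2}}\Bigr) = \tfrac{1}{2} + \tfrac{i}{2\binom{n}{2}}\enspace,
\]
which matches $p_i = 1/2 + i/(2A(n))$ with the choice $A(n) = \binom{n}{2}$. Thus $M$ is exactly $\model\left((p_i)_{1\leq i\leq n}\right)$ for this $A$, and Theorem~\ref{thm:h1_lin_theta} applies verbatim, yielding $H_1 = \Theta(\min(\sqrt{A(n)},n))$.

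It then remains only to evaluate this $\min$ asymptotically. Since $A(n) = \binom{n}{2} = \tfrac{n(n-1)}{2} \leq n^2$, we are in the first regime of the theorem (the one where $A(n) \leq n^2$), so no separate argument for large $A(n)$ is needed. Computing $\sqrt{A(n)} = \sqrt{n(n-1)/2} \sim n/\sqrt{2}$, we see that $\sqrt{A(n)} = \Theta(n)$, and hence $\min(\sqrt{A(n)}, n) = \Theta(n)$. Substituting this into the conclusion of the theorem gives $H_1 = \Theta(n)$, as claimed.

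There is essentially no obstacle here: the corollary is a direct specialization of Theorem~\ref{thm:h1_lin_theta}, and the only content beyond plugging in $A(n) = \binom{n}{2}$ is the elementary observation that $\sqrt{\binom{n}{2}} = \Theta(n)$ together with the check that $\binom{n}{2} \leq n^2$ places us in the intended case of the theorem. I would therefore keep the proof to a single short paragraph, stating the substitution and the asymptotic estimate of the minimum.
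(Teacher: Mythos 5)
Your proposal is correct and matches the paper's intent exactly: the corollary is stated there as a direct special case of Theorem~\ref{thm:h1_lin_theta} with $A(n)=\binom{n}{2}$, and the only remaining step is the observation that $\sqrt{\binom{n}{2}}=\Theta(n)$, so $\min(\sqrt{A(n)},n)=\Theta(n)$. Nothing further is needed.
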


We will now consider a slightly different class of instances of the
Markov model in order to obtain a lower bound on the \onepso runtime for
sorting in Section~\ref{sec:runtime}. For this purpose we consider transition
probabilities $p_i$ that increase in the same order as the divisor $A(n)$,
which suits our fitness level analysis of the sorting problem. 
This class of models is
relevant for the analysis of the \emph{best case} behavior of the \onepso
algorithm for sorting $n$ items (see Theorem~\ref{thm:probabilities}).
Although we will only make use of a lower bound on $H_1$ in this setting later
on, we give the following $\THETA$-bounds:

\begin{theorem}
  Let $M = \model\left((\frac{1}{2}\left(1+A(i)/A(n)\right))_{1\leq i\leq n}\right)$ where
  $A:\naturalNumbers\rightarrow \real^+$ is a non-decreasing function and $A(n)=\THETA(n^d)$ for some $d> 0$.
  Then $H_1 =
  \THETA(n^{d/(d+1)})$ with respect to $M$.
  \label{thm:h1_quad_theta}
\end{theorem}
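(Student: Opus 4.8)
The plan is to proceed exactly as in the proof of Lemma~\ref{lemma:sqrt_return}, starting from the closed form~\eqref{eq:recclosed}, but to replace the exact summation (which is no longer available for a general $A$) by an asymptotic estimate obtained through integration. Since $p_n=\tfrac12(1+A(n)/A(n))=1$, the correction term in~\eqref{eq:recclosed} vanishes, and after substituting $\frac{1}{p_i}=\frac{2A(n)}{A(n)+A(i)}$ and $\frac{1-p_j}{p_j}=\frac{A(n)-A(j)}{A(n)+A(j)}$ we obtain
\[
  H_1 \;=\; \sum_{i=1}^n \frac{2A(n)}{A(n)+A(i)}\,\prod_{j=1}^{i-1}\frac{A(n)-A(j)}{A(n)+A(j)}.
\]
Because $p_i\in[\tfrac12,1]$ the prefactor $\tfrac{1}{p_i}$ lies in $[1,2]$, so with $P_i:=\prod_{j=1}^{i-1}\frac{A(n)-A(j)}{A(n)+A(j)}$ the whole problem reduces to showing $\sum_{i=1}^n P_i=\Theta\!\left(n^{d/(d+1)}\right)$, since $H_1=\Theta\!\left(\sum_{i=1}^n P_i\right)$.

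The next step is to estimate $P_i$ by passing to logarithms. Using the elementary two-sided bounds $-c\,x\le \ln\frac{1-x}{1+x}\le -2x$, valid for $x\in[0,\tfrac12]$ with a suitable constant $c$, applied to $x=A(j)/A(n)$, I get $\ln P_i = -\,\Theta\!\big(\tfrac{1}{A(n)}\sum_{j=1}^{i-1}A(j)\big)$. The hypothesis $A(n)=\Theta(n^d)$ together with monotonicity of $A$ gives $A(j)=\Theta(j^d)$, hence $\sum_{j=1}^{i-1}A(j)=\Theta(i^{d+1})$ by comparison with $\sum_j j^d$, and therefore
\[
  P_i \;=\; \exp\!\left(-\,\Theta\!\left(\frac{i^{d+1}}{n^d}\right)\right).
\]
The upper estimate $\ln\frac{1-x}{1+x}\le -2x$ is in fact valid for every $x\in[0,1)$, so this yields a usable \emph{upper} bound on $P_i$ for all $i$; the matching \emph{lower} bound is only needed in the regime $i=O(n^{d/(d+1)})$, where $A(j)/A(n)=o(1)$ for all $j\le i$ and the linear lower bound on the logarithm applies.

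Finally I would convert the sum into the claimed order by integration. As $x\mapsto \exp(-a\,x^{d+1}/n^d)$ is decreasing, $\sum_{i=1}^n P_i$ can be sandwiched between integrals of this function; the substitution $x=n^{d/(d+1)}u$ turns $x^{d+1}/n^d$ into $u^{d+1}$ and $dx$ into $n^{d/(d+1)}\,du$, so the integral equals $n^{d/(d+1)}\int_0^\infty e^{-a\,u^{d+1}}\,du$. Since $\int_0^\infty e^{-a\,u^{d+1}}\,du$ is a finite positive constant (a Gamma-function value), this gives $H_1=O(n^{d/(d+1)})$. For the matching $\Omega(n^{d/(d+1)})$ it suffices to retain the $\Theta(n^{d/(d+1)})$ indices $i\le \delta n^{d/(d+1)}$, for which $i^{d+1}/n^d=O(1)$ and hence $P_i=\Theta(1)$. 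As a sanity check, the specialization $d=1$, $A(i)=i$ recovers Lemma~\ref{lemma:sqrt_return}.

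The main obstacle is the uniform control of the product $P_i$: the clean approximation $\ln\frac{1-x}{1+x}\approx -2x$ degrades as $A(j)/A(n)\to 1$, i.e.\ as $j\to n$, so I must argue that the large-$i$ terms are harmless. This is handled by observing that every factor of $P_i$ is at most $1$, so $P_i$ is non-increasing in $i$, and the one-sided bound valid on all of $[0,1)$ already makes the tail $\sum_{i>C n^{d/(d+1)}}P_i$ negligible; the delicate two-sided estimate is then confined to the range $i=O(n^{d/(d+1)})$ where $A(j)/A(n)$ stays small. A secondary technical point is justifying $\sum_{j<i}A(j)=\Theta(i^{d+1})$ from $A(n)=\Theta(n^d)$, which relies on the polynomial growth bound giving $A(j)=\Theta(j^d)$ uniformly (equivalently $A(i/2)=\Theta(A(i))$). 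Alternatively, one could first invoke monotonicity of $H_1$ in the $p_i$ (as already used implicitly for $\hat M$ in the proof of Theorem~\ref{thm:h1_lin_theta}) to sandwich $M$ between the ``pure power'' models $\model\big((\tfrac12(1+\gamma_k(i/n)^d))_{1\le i\le n}\big)$, $k=1,2$, and run the integral estimate on those; but the integral estimate itself remains the heart of the argument either way.
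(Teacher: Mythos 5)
Your argument is correct, and for the upper bound it takes a genuinely different route from the paper. The paper's proof never estimates the product $\prod_{j<i}\frac{1-p_j}{p_j}$ directly for the upper bound: instead it splits $H_1\le 2k+H_{k+1}$ (since $1/p_i\le 2$ and the products are at most $1$), bounds $H_{k+1}$ by comparing the truncated chain to a constant-probability model with $p\equiv p_k$ via Theorem~\ref{thm:returntime}, obtaining $H_{k+1}\le \frac{1}{2p_k-1}=\frac{A(n)}{A(k)}$, and then balances $2k+\frac{A(n)}{A(k)}$ at $k=n^{d/(d+1)}$. You instead estimate $P_i=\exp(-\Theta(i^{d+1}/n^d))$ uniformly and integrate, which yields both bounds from a single product estimate and is closer in spirit to the paper's own ``bounds by integration'' of Section~\ref{subsec:integrationbounds} (which, as stated there, is too coarse to apply here since the base degenerates to $1$ when all $p_i\ge 1/2$). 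For the lower bound the two proofs are structurally the same --- restrict to $i\le k\approx n^{d/(d+1)}$ and show each product there is $\Omega(1)$ --- except that the paper uses the Weierstrass inequality $\prod(1-x_j)\ge 1-\sum x_j$ where you use $\ln\frac{1-x}{1+x}\ge -cx$ for $x\le\frac12$; both are valid, and your version avoids the paper's somewhat delicate choice of the constant $g$. The technical points you flag (uniformity of $A(j)=\Theta(j^d)$, hence $\sum_{j<i}A(j)=\Theta(i^{d+1})$, and confining the two-sided logarithm estimate to the range where $A(j)/A(n)$ is small) are exactly the ones that need care, and your handling of them is sound; the trade-off is that the paper's chain-comparison argument avoids any asymptotic analysis of the product at the cost of being specific to the upper bound, while your computation is more uniform but requires the slightly more careful bookkeeping of implied constants.
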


This theorem is a significant extension to \cite[Thm. 5]{MRSSW:17} which covers only the special case $p_i=1/2\cdot(1+\binom{i+1}{2}/\binom{n}{2})$. 
\begin{proof}
  Consider the expression for $H_1$ given in Equation~\eqref{eq:partialunfold}.
  Since $p_i > 1/2$ for $1 \leq i \leq n$ the products are at most $1$ and
  $1/p_i$ is at most $2$. Therefore for any $k\in\lbrace 1,\ldots,n\rbrace: 
  H_1\leq 2k+H_{k+1}$.  As $H_{k+1}$ is the expected number of steps to move
  from state $S_{k+1}$ to $S_k$, the states $S_0$ to $S_{k-1}$ are irrelevant
  for the calculation of $H_{k+1}$ since they are never visited in between.
  Therefore also probabilities $p_1$ to $p_{k}$ do not matter.  We truncate the
  model to states $S_{k},\ldots,S_n$.  For these states the minimal probability
  of moving towards the attractor is $p_{k+1}\geq p_k$.  Therefore we can set
  $p_i=p_{k}$ for $i\in\lbrace k+1,\ldots,n\rbrace$ to get an upper bound
  on the return time.  By reindexing the states we obtain the model $\tilde
  M=\model((p_{k})_{1\leq i \leq n-k})$ and, because of the truncation and the
  decrease of probabilities, $\tilde H_1$ is an upper bound on $H_{k+1}$, where
  $\tilde H_1$ is the expected number of steps to move from state $S_1$ to
  $S_0$ in model $\tilde M$.  In $\tilde M$ we have the fixed probabilities
  $p_{k}$ and can therefore apply Theorem \ref{thm:returntime} to determine
  $\tilde H_1$.  Therefore
  \begin{equation*}
    H_{k+1}\leq \tilde H_1
    =\frac{1-2p_{k}\left(\frac{1-p_{k}}{p_{k}}\right)^{n-k}}{2p_{k}-1}
    \leq\frac{1}{2p_{k} -1}
    =\frac{A(n)}{A(k)}\enspace.
  \end{equation*}
  Altogether we have $H_1\leq 2k+\frac{A(n)}{A(k)}$.
  With $k=n^{d/(d+1)}$, where $d$ is the degree of $A$, we get
  \begin{align*}
    H_1&
    \leq 2n^{\frac{d}{d+1}}+\frac{A(n)}{A(n^{d/(d+1)})}
    =\Theta(n^{\frac{d}{d+1}})+\frac{\Theta(n^d)}{\Theta((n^{d/(d+1)})^d)}
    \\&=\Theta(n^{\frac{d}{d+1}})+\Theta(n^{d-d^2/(d+1)})
    =\Theta(n^{\frac{d}{d+1}})\enspace,
  \end{align*}
  which certifies that $H_1=O(n^{d/(d+1)})$.
  By using Equation~\eqref{eq:partialunfold}
  we have the following lower bound on $H_1$:
  \begin{align*}
    H_1&\geq\sum_{i=1}^{k} \frac{1}{p_i} \prod_{j=1}^{i-1} \frac{1-p_j}{p_j}
    \geq\sum_{i=1}^{k}  \prod_{j=1}^{i-1} \frac{1-p_j}{p_j}\\
    \intertext{\centering{Note: $\prod_{j=1}^{i-1} \frac{1-p_j}{p_j}$ is monotonically decreasing as $p_j\geq 1/2$.}}
    &\geq    k\prod_{j=1}^{k-1} \frac{1-p_j}{p_j}
     =    k\prod_{j=1}^{k-1} \left(1-\frac{2\cdot A(j)}{A(n)+A(j)}\right)
     \\&\geq k\left(1-\sum_{j=1}^{k-1} \frac{2\cdot A(j)}{A(n)+A(j)}\right)
    \geq k\left(1-\sum_{j=1}^{k-1} \frac{2\cdot A(j)}{A(n)}\right)\\
    \intertext{\centering{Note: $A(j)$ is non-decreasing.}}
    &\geq k\left(1-k \cdot\frac{2\cdot A(k)}{A(n)}\right)
  \end{align*}
  As this equation holds for any $k\in\lbrace 1,\ldots,n\rbrace$ we can choose $k=g\cdot n^z$ with a not yet fixed constant $g\in(0,1)$ and $z=d/(d+1)$, $z\in(0,1)$. Please note that $g\cdot n^z$ tends to infinity if $n$ tends to infinity and therefore asymptotic expressions can also be applied if $g\cdot n^z$ is the argument. Please also note that $k$ has to be an integer but errors can be captured by some $\Theta(1)$ expressions. Substituting $k$ by $g\cdot n^z$ in the previous inequality results in 
  \begin{align*}
    H_1&\geq    \lceil g\cdot n^{z}\rceil\left(1-\lceil g\cdot n^{z}\rceil \cdot\frac{2\cdot A\left(\lceil g\cdot n^z\rceil\right)}{A(n)}\right)
	\\&=    g\cdot n^{z}\cdot\THETA(1)\left(1-g\cdot n^{z} \cdot \THETA(1)\cdot\frac{\THETA\left((g\cdot n^z)^d\right)}{\THETA(n^d)}\right)\\&
	=    g\cdot n^{z}\cdot \THETA(1)\left(1-g^{d+1}\cdot n^{z+d\cdot z-d}\cdot \THETA(1)\right)\\
    \intertext{\centering{Note: $z+d\cdot z-d=z\cdot(d+1)-d=d-d=0$.}}
    &
	=    g\cdot n^{z} \cdot \THETA(1)\left(1- g^{d+1}\cdot \THETA(1)\right)\\
	\intertext{Note: The last $\THETA(1)$ can be bounded from above by some constant $c_{\THETA}$ for large $n$ (by definition of $\THETA$).
Choose $g=\sqrt[d+1]{1/(2\cdot c_{\THETA})}$.
Then the expression in parentheses of the last term $\left(1- g^{d+1}\cdot \THETA(1)\right)$ may be negative for small $n$ but is at least $1/2$ for large $n$ which implies that it is in $\Omega(1)$.}
    &\geq g\cdot n^{z}\cdot\Omega\left(1\right)
    =\Omega\left(n^{\frac{d}{d+1}}\right)\enspace.
  \end{align*}
\end{proof}
  It may be verified that
  $$
  g=\sqrt[d+1]{\frac{\liminf\limits_{n\rightarrow\infty}(A(n)/n^d)}{\limsup\limits_{n\rightarrow\infty}(A(n)/n^d)}\cdot \frac{1}{4}\cdot (1-\varepsilon)}\enspace ,
  $$
  is a suitable choice for any $0 < \varepsilon < 1$, to replace $k$ by $g\cdot
  n^z$ to obtain the previous inequalities.  To see this, note that
  the first fraction of $\liminf$ and $\limsup$ counterbalances the fluctuation
  of $A(k)/A(n)$ relative to its $\Theta$-bound.  Furthermore, the factor
  $1/4$ compensates the factor $2$ which is hidden by $\THETA$ and supplies the
  desired factor of $1/2$ to ensure that the expression in parentheses is
  positive and at least $1/2$ for large $n$.
  Finally, the factor of $1 - \varepsilon$ is needed for some
  tolerance, because without it $g$ is only sufficiently small in the limit\footnote{Similar as $(n+1)/n$ has the limit one and there is no $n$ such that $(n+1)/n\leq 1$ but for any $\varepsilon\in(0,1)$ and large $n$ we have $(1-\varepsilon)(n+1)/n\leq 1$.} but
  not necessarily for large $n$.

This theorem is a generalization of Lemma~\ref{lemma:sqrt_return} and implies the $\THETA$-bound stated in that lemma by using $A(n)=n$.

\subsection{Bounds by Integration}
\label{subsec:integrationbounds}

Reformulations \eqref{eq:partialunfold} and \eqref{eq:recclosed} of the
recurrence~\eqref{eq:recmarkov} given in Section~\ref{sec:reformulation} do not yield closed-form
expressions of $H_1$, the expected number of steps it takes to return to the
attractor after an unsuccessful attempt to improve the current best solution.
In this section we derive closed-form expressions for $H_1$.
In order to get rid of the sums and products in
equations~\eqref{eq:partialunfold} and~\eqref{eq:recclosed}, we use the following
standard approaches.
First, sums may be approximated by their integral. This approach works quite
well if the summand/integrand is monotonic, which is true in our case.
Second, products can be transformed to integrals by reformulating the product
by the exponential function of the sum of logarithms.
This approach is an extension to \cite{MRSSW:17} as it is not present there at all.

\begin{theorem}
\label{thm:boundsbyintegration}
Let $M=\model((p(i))_{1\leq i\leq n})$ and $p:[0,n]\rightarrow(0,1]$ be a
non-decreasing function assigning the probabilities in the model, then
\begin{align*}
H_1=&\OMEGA\left(\mathrm{base}(p,n)^n
\right)\enspace,
\\
H_1=&O\left(
n\cdot \mathrm{base}(p,n)^n\right)\enspace\text{and}\\
H_1=&\Theta^*\left(\mathrm{base}(p,n)^n
\right)\enspace,\text{ where}
\end{align*}
$$
\mathrm{base}(p,n)=\sup_{k\in[0,n]}\exp \left(\int_0^{\frac{ k}n}\ln\left(\frac{1-p(n\cdot x)}{p(n\cdot x)}\right)\mathrm{d}x\right)\enspace.
$$
The integral in $\mathrm{base}(p,n)$ is maximized by $k=\inf\lbrace x\mid x\in[0,n]\wedge p(x)\geq 1/2\rbrace$ or $k=n$ if the infimum is taken on the empty set.
\end{theorem}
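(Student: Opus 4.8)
The plan is to extract the behaviour of $H_1$ from the unfolded recurrence~\eqref{eq:partialunfold}/\eqref{eq:recclosed} and to replace the product appearing there by an exponential of an integral. Write $r_j=\frac{1-p_j}{p_j}$, $P_i=\prod_{j=1}^{i}r_j$, and $\psi(x)=\ln\frac{1-p(x)}{p(x)}$, so that $\log P_i=\sum_{j=1}^{i}\psi(j)$. The substitution $u=nx$ in the definition of $\mathrm{base}(p,n)$ turns it into $\mathrm{base}(p,n)=\exp(I/n)$ with $I=\sup_{k\in[0,n]}\int_0^k\psi(x)\,\mathrm{d}x$, so the target becomes $H_1=\Theta^*(e^{I})$. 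Since $p$ is non-decreasing, $\psi$ is non-increasing and changes sign exactly where $p=1/2$; hence $\frac{\mathrm{d}}{\mathrm{d}k}\int_0^k\psi=\psi(k)$ is positive before and negative after that point, which immediately yields the final assertion that the supremum is attained at $k^{*}=\inf\{x:p(x)\ge 1/2\}$ (or at $k=n$ if $p<1/2$ throughout). The crucial bridge between the discrete and continuous worlds is the monotone sum--integral comparison: because $\psi$ is non-increasing, $\psi(j)\le\int_{j-1}^{j}\psi$ and $\psi(j)\ge\int_{j}^{j+1}\psi$, whence $\int_1^{i+1}\psi\le\log P_i\le\int_0^{i}\psi$.

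For the upper bound I would start from $H_1=\sum_{i=1}^n\frac1{p_i}P_{i-1}-P_n\le\sum_{i=1}^n\frac1{p_i}P_{i-1}$, the correction term being non-negative. The right inequality of the comparison gives $P_{i-1}\le\exp\int_0^{i-1}\psi\le e^{I}$ for every $i$, with no error term. The only remaining factor is $\frac1{p_i}$: for $p_i\le 1/2$ one rewrites $\frac1{p_i}P_{i-1}=\frac1{1-p_i}P_i\le 2P_i\le 2e^{I}$, while for $p_i>1/2$ one has $\frac1{p_i}<2$ directly, so again $\frac1{p_i}P_{i-1}<2e^{I}$. Summing the $n$ terms yields $H_1\le 2n\,e^{I}=O(n\cdot\mathrm{base}(p,n)^n)$.

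For the lower bound a single dominant term suffices. Dropping all other non-negative summands in~\eqref{eq:partialunfold} gives $H_1\ge\frac1{p_{i_0}}P_{i_0-1}\ge P_{i_0-1}$ for any index $i_0$; I would take $i_0=\lfloor k^{*}\rfloor$, near the maximiser of the integral. The left inequality of the comparison then gives $\log P_{i_0-1}\ge\int_1^{i_0}\psi=\int_0^{i_0}\psi-\int_0^1\psi\ge I-\psi(0)-\psi(\lfloor k^{*}\rfloor)$, using $\int_0^1\psi\le\psi(0)$ and $\int_0^{i_0}\psi\ge I-\psi(\lfloor k^{*}\rfloor)$ (the latter because $\psi\ge 0$ and $\psi\le\psi(i_0)$ on $[i_0,k^{*}]$). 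Since $\psi(\lfloor k^{*}\rfloor)$ is small (the argument sits just below the sign change, where $\psi\approx 0$), this gives $P_{i_0-1}\ge e^{-\psi(0)}e^{I}(1-o(1))=\Omega(e^{I})$, hence $H_1=\Omega(\mathrm{base}(p,n)^n)$. Combining the two bounds delivers the $\Theta^{*}$ statement.

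The main obstacle is the boundary contribution $\psi(0)=\ln\frac{1-p(0)}{p(0)}$ that the monotone comparison leaves behind in the lower bound: it is the one place where the discrete sum can fall short of the integral by a non-negligible amount, and it is the sole source of a non-trivial hidden constant. It is harmless---$O(1)$, and thus absorbed into the constant of $\Omega(\cdot)$ and into $\Theta^{*}$---precisely when $p(0)$ is bounded away from $0$, which holds in all intended applications (e.g.\ $p(0)=c>0$ for \onemax). A secondary technical point is the bookkeeping at $k=n$: when $p<1/2$ throughout, the maximiser is $i_0=n$ and one must return to~\eqref{eq:recclosed} and use the identity $\frac1{p_n}P_{n-1}-P_n=P_{n-1}$ so that the correction term does not erode the lower bound; I would dispatch this boundary case separately.
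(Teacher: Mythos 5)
Your argument is correct and follows essentially the same route as the paper's proof: unfold the recurrence into sums of products, rewrite each product as the exponential of a sum of $\ln\frac{1-p}{p}$, exploit the monotonicity of that function to sandwich the sums between integrals, and absorb the boundary error into a constant depending on $p(0)$ --- exactly the paper's factor $\left(\frac{p(0)}{1-p(0)}\right)^2$. The only cosmetic differences are that you bound the factor $1/p_i$ by $2$ via a case split where the paper uses $1/p(1)$, and that your aside that $\psi(\lfloor k^*\rfloor)$ is small ``near the sign change'' is both unnecessary and not true in general (e.g.\ if $p$ increases rapidly through $1/2$), since the bound $\psi(\lfloor k^*\rfloor)\le\psi(0)$ already suffices for the claimed constant.
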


Please note that we use $\Theta^*$-bounds in the sense that polynomial factors can be omitted in $\Theta$-bounds. This is a similar notation as in the more common use case of $O^*$ where polynomial factors can also be omitted for upper bounds.
\begin{proof}
$p(i)$ is non-decreasing in $i$ and has values in $]0,1]$ and therefore
$\frac{1-p(i)}{p(i)}$ and also $\tau(i):=\ln\left(\frac{1-p(i)}{p(i)}\right)$ are non-increasing as the numerator is non-increasing and
the denominator is non-decreasing.
In the following series of equations, let $k \in [0, n)$.
Using Equation (\ref{eq:partialunfold}), we obtain
\begin{eqnarray*}
H_1\!\!\! &=& \sum_{i=1}^{n-1}\left(\frac1{p(i)}\cdot\prod_{j=1}^{i-1}\frac{1-p(j)}{p(j)}\right)+H_n\cdot\prod_{j=1}^{n-1}\frac{1-p(j)}{p(j)}
\ge \sum_{i=1}^{n-1}\left(\prod_{j=1}^{i-1}\frac{1-p(j)}{p(j)}\right)
\\&\ge&\prod_{j=1}^{\lfloor k\rfloor-1}\frac{1-p(j)}{p(j)}
\\&&    =\exp\left(\sum_{j=1}^{\lfloor k\rfloor-1}\tau(j)\right)\\
    \lefteqn{\text{Note: }\tau(j)=\ln\left((1-p(j))/p(j)\right)\text{ is non-increasing.}}\allowdisplaybreaks\\
&\ge& \exp\left(\int_1^{\lfloor k\rfloor}\tau(x)\mathrm{d}x\right)
= \exp \left(n\int_\frac1n^{\frac{\lfloor k\rfloor}n}\tau(n\cdot x)\mathrm{d}x\right)\allowdisplaybreaks\\
&=& \exp \left(n\cdot\left(\int_0^{\frac{ k}n}\ln\left(\frac{1-p(n\cdot x)}{p(n\cdot x)}\right)\mathrm{d}x\right.\right.\\
&&\phantom{\exp\left(n\cdot\left(\right.\right.}\left.\left.
- \int_0^{\frac{ 1}n}\ln\left(\frac{1-p(n\cdot x)}{p(n\cdot x)}\right)\mathrm{d}x
- \int_{\frac{\lfloor k\rfloor}n}^{\frac kn}\ln\left(\frac{1-p(n\cdot x)}{p(n\cdot x)}\right)\mathrm{d}x\right)\right)\allowdisplaybreaks\\
&\ge& \exp \left(n\cdot\left(\int_0^{\frac{ k}n}\ln\left(\frac{1-p(n\cdot x)}{p(n\cdot x)}\right)\mathrm{d}x
-\frac{2}{n}\ln\left(\frac{1-p(0)}{p(0)}\right)\right)\right)\allowdisplaybreaks\\
&=& \left(\frac{p(0)}{1-p(0)}\right)^2\exp \left(\int_0^{\frac{ k}n}\ln\left(\frac{1-p(n\cdot x)}{p(n\cdot x)}\right)\mathrm{d}x\right)^n\enspace.
\end{eqnarray*}
As $k$ can be chosen arbitrarily we get the claimed lower bound for $H_1$, because $\frac{p(0)}{1-p(0)}$ is a constant. As any integral is a continuous function also the whole expression in the supremum is a continuous function and therefore $k=n$ can be allowed in the supremum without changing the value.
Quite similar steps lead to the upper bound for $H_1$. We start with
Equation (\ref{eq:recclosed}).
\begin{align*}
H_1 &= \sum_{i=1}^{n}\left(\frac1{p(i)}\cdot\prod_{j=1}^{i-1}\frac{1-p(j)}{p(j)}\right)-\prod_{j=1}^{n}\frac{1-p(j)}{p(j)}
\leq\sum_{i=1}^{n}\left(\frac{1}{p(1)}\cdot\prod_{j=1}^{i-1}\frac{1-p(j)}{p(j)}\right)\\
&\leq\frac{n}{p(1)}\cdot \max_{i\in\lbrace 1,\ldots,n\rbrace}\left(\prod_{j=1}^{i-1}\frac{1-p(j)}{p(j)}\right)
=\frac{n}{p(1)}\cdot \max_{i\in\lbrace 1,\ldots,n\rbrace}\exp\left(\sum_{j=1}^{i-1}\ln\left(\frac{1-p(j)}{p(j)}\right)\right)\\
&\leq\frac{n}{p(1)}\cdot \max_{i\in\lbrace 1,\ldots,n\rbrace}\exp\left(\int_{0}^{i-1}\ln\left(\frac{1-p(x)}{p(x)}\right)\mathrm{d}x\right)
\\&
\overset{\mathclap{k=i-1}}{\leq}
\quad
\frac{n}{p(1)}\cdot \sup_{k\in[0,n]}\exp\left(\int_{0}^{k}\ln\left(\frac{1-p(x)}{p(x)}\right)\mathrm{d}x\right)\\
&\leq\frac{n}{p(1)}\cdot \sup_{k\in[0,n]}\exp\left(\int_{0}^{\frac{k}{n}}\ln\left(\frac{1-p(n\cdot x)}{p(n\cdot x)}\right)\mathrm{d}x\right)^n\enspace.\\
\end{align*}
This proves the claimed upper bound and as the base of the exponential part is equal for upper and lower bound we obtain the claimed $\Theta^*$ bound.

The logarithm in $\mathrm{base}(p,n)$ is positive as long as $1-p(n\cdot x)\geq
p(n\cdot x)$. Therefore the integral is maximized if we use the smallest possible $k$ (the infimum) which satisfies the condition $1-p(k)\leq p(k)\Leftrightarrow p(k)\geq\frac{1}{2}$.
\end{proof}
$p(n\cdot x)$ can in most cases be tightly bounded by a value independent of
$n$. This is the case if for example $p(i)=c+(1-c)i/n$, which we have for the
model solving \onemax by \onepso. The $k$ which maximizes the integral in the
expression of $\mathrm{base}(p,n)$ is usually obtained by solving the simple
equation $p(k)=1/2$.

Therefore the integral can be evaluated and the base of the exponential
part of the runtime can be determined.

\subsection{Variance of the Improvement Time}
\label{subsec:varianceAnalysis}

We show that the standard deviation of the return time is in the same order as the return
time. Therefore in experiments the average of such return times can be
measured such that a small relative error can be achieved.
Also the variance of $W_i$, the number of steps needed to move from state $S_i$ to state $S_{i-1}$, can be computed recursively.
Let $V_i:=\Var[W_i]$ be the variance of $W_i$.

To evaluate this variance we need the expectation and variance of a random
variable which is the sum of random variables where the number of summed up
random variables is also a random variable. Such random variables appear in the
Galton-Watson process (see \cite{durrett2010probability}) from one generation to the next generation.
\begin{lemma}
  \label{lem:composedrandomvariablevariance}
  Let $T$ be a random variable with non-negative integer values and let
  $(Y_i)_{i\in \N}$ be independent identically distributed random variables
  $Y_i\sim Y$ which are also independent of $T$. Additionally let
  $Z=\sum_{i=1}^{T}Y_i$.
  Then $\E[Z]=\E[T]\cdot\E[Y]$ and $\Var[Z]=\E[T]\cdot\Var[Y]+E[Y]^2\cdot\Var[T]$.
\end{lemma}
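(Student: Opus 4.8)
The plan is to condition on the random index $T$ and then invoke the tower rule for the expectation and the law of total variance for the variance; this is precisely the Wald/Blackwell--Girshick identity, and both ingredients are elementary once the conditioning is set up.

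First I would treat the expectation. Conditioning on the event $\{T=t\}$, the independence of the $Y_i$ from $T$ means that $Z$ reduces to an ordinary sum of $t$ i.i.d.\ copies of $Y$, so $\E[Z \mid T=t] = t\cdot\E[Y]$, i.e.\ $\E[Z\mid T]=T\cdot\E[Y]$. Taking expectations and pulling out the constant $\E[Y]$ yields $\E[Z]=\E[T]\cdot\E[Y]$ by the tower property.

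For the variance I would apply the decomposition $\Var[Z]=\E[\Var[Z\mid T]]+\Var[\E[Z\mid T]]$. For the first term, given $T=t$ the variable $Z$ is a sum of $t$ independent copies of $Y$, so $\Var[Z\mid T=t]=t\cdot\Var[Y]$, whence $\E[\Var[Z\mid T]]=\E[T]\cdot\Var[Y]$. For the second term, we already have $\E[Z\mid T]=T\cdot\E[Y]$, so $\Var[\E[Z\mid T]]=\E[Y]^2\cdot\Var[T]$. Adding the two contributions gives the claimed formula $\Var[Z]=\E[T]\cdot\Var[Y]+\E[Y]^2\cdot\Var[T]$.

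The only place requiring care --- and the main (mild) obstacle --- is justifying that conditioning on $\{T=t\}$ genuinely replaces the random-length sum by a deterministic-length sum of $t$ i.i.d.\ terms: this is exactly where both the independence of $(Y_i)$ from $T$ and the mutual independence of the $Y_i$ are used, so that the conditional law of $\sum_{i=1}^{T}Y_i$ given $\{T=t\}$ coincides with the law of $\sum_{i=1}^{t}Y_i$. I would also remark that the identities presuppose finiteness of $\E[T],\Var[T],\E[Y],\Var[Y]$, under which every interchange of expectation with the (conditional) sum is legitimate. An equivalent route that sidesteps explicit appeal to the law of total variance is to compute $\E[Z^2\mid T=t]=t\cdot\Var[Y]+t^2\cdot\E[Y]^2$ directly, take expectations to obtain $\E[Z^2]=\E[T]\cdot\Var[Y]+\E[T^2]\cdot\E[Y]^2$, and subtract $\E[Z]^2=\E[T]^2\cdot\E[Y]^2$ to recover the same expression.
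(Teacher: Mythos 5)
Your proposal is correct and follows essentially the same route as the paper, which likewise derives the expectation via Wald's equation (tower property after conditioning on $T$) and the variance via the law of total variance, computing $\E[\Var[Z\mid T]]=\E[T]\Var[Y]$ and $\Var[\E[Z\mid T]]=\E[Y]^2\Var[T]$. Your additional remarks on why the conditional law of the random-length sum equals that of a fixed-length sum, and on the requisite finiteness of the moments, only make explicit what the paper leaves implicit.
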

The statement on expected values is also known as Wald's equation and the
statement on the variance is known as the Blackwell-Girshick equation.  The
Blackwell-Girshick equation can be obtained by application of the law of total
variance:
\begin{align*}
  \Var[Z]&=\E[\Var[Z|T]]+\Var[\E[Z|T]]=\E[T\Var[Y]]+\Var[T\E[Y]]\\
  &=\E[T]\Var[Y]+\E[Y]^2\Var[T]\enspace.
\end{align*}

Also $W_i$ can be specified as a sum of random variables where the number of
summed up random variables is also a random variable. If we are currently in
state $S_i$ we have some success probability to move to $S_{i-1}$ in the next
iteration.  Therefore the number of trials in $S_i$ until we move to $S_{i-1}$
follows a geometric distribution. In case of failure we move to $S_{i+1}$ and
need additional $W_{i+1}$ steps until we can make our next attempt to move to
$S_{i-1}$.
Therefore
$$
W_i=\sum_{j=1}^{T-1}(\tilde W_{i+1,j}+1) + 1\enspace,
$$
where $T$ is a random variable distributed according to a geometric distribution with
success probability equal to the probability of moving to $S_{i-1}$ from $S_i$
and each $\tilde W_{i+1,j}$ is an independent copy of $W_{i+1}$.
\begin{theorem}
  \begin{align}
    \Var[W_i]=V_i&=\frac{1-p_i}{p_i}\cdot V_{i+1}+\frac{1-p_i}{p_i^2}\cdot(H_{i+1}+1)^2 \nonumber
                 \\&=\frac{1-p_i}{p_i}\cdot V_{i+1}+\frac{1}{1-p_i}\cdot(H_{i}-1)^2\enspace,&1\leq i<n\label{eq:varrec}\\
    \Var[W_n]=V_n&=0\enspace,\label{eq:varinitial}
  \end{align}
  where $p_i$ is the probability of moving to $S_{i-1}$ from $S_i$.
\end{theorem}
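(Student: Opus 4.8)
The plan is to feed the decomposition
$$W_i=\sum_{j=1}^{T-1}(\tilde W_{i+1,j}+1)+1$$
stated just before the theorem into the Blackwell--Girshick equation of Lemma~\ref{lem:composedrandomvariablevariance}. First I would dispose of the base case: from state $S_n$ the next move reaches $S_{n-1}$ with probability $1$, so $W_n=1$ is deterministic and $V_n=0$, which is \eqref{eq:varinitial}. For $1\le i<n$ I would set $N:=T-1$, the number of \emph{failed} attempts to step from $S_i$ to $S_{i-1}$. Since $T$ is geometric with success probability $p_i$ on $\{1,2,\dots\}$, the shifted variable $N$ is geometric on $\{0,1,2,\dots\}$ with $\E[N]=(1-p_i)/p_i$ and $\Var[N]=(1-p_i)/p_i^2$. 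I would also put $Y_j:=\tilde W_{i+1,j}+1$, so that $\E[Y]=H_{i+1}+1$ and $\Var[Y]=V_{i+1}$, a deterministic shift leaving the variance unchanged.

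With $Z:=\sum_{j=1}^{N}Y_j$ we have $W_i=Z+1$, hence $\Var[W_i]=\Var[Z]$. Applying Lemma~\ref{lem:composedrandomvariablevariance} with $N$ in the role of $T$ yields
$$\Var[Z]=\E[N]\cdot\Var[Y]+\E[Y]^2\cdot\Var[N]=\frac{1-p_i}{p_i}\,V_{i+1}+\frac{1-p_i}{p_i^2}\,(H_{i+1}+1)^2,$$
which is exactly the first line of \eqref{eq:varrec}. To obtain the second line I would rewrite the expectation recurrence \eqref{eq:recreform} as $H_i-1=\frac{1-p_i}{p_i}(H_{i+1}+1)$; squaring gives $(H_{i+1}+1)^2=\frac{p_i^2}{(1-p_i)^2}(H_i-1)^2$, and substituting this into the first form collapses the coefficient to $\frac{1}{1-p_i}$, producing $V_i=\frac{1-p_i}{p_i}V_{i+1}+\frac{1}{1-p_i}(H_i-1)^2$.

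The one point that genuinely needs care --- and which I expect to be the main obstacle --- is justifying the independence hypotheses required by Lemma~\ref{lem:composedrandomvariablevariance}. Concretely, I must argue that the failure count $N$ is independent of the family of return times $(\tilde W_{i+1,j})_j$, and that these return times are i.i.d.\ copies of $W_{i+1}$. This follows from the Markovian structure: each visit to $S_i$ triggers one Bernoulli$(p_i)$ trial, and upon failure an excursion to $S_{i+1}$ and back whose length is an independent copy of $W_{i+1}$; the trial outcomes (which determine $N$) and the excursion lengths are generated by disjoint sources of randomness, hence mutually independent. Once this is in place both displayed computations are routine, and the two forms of \eqref{eq:varrec} coincide by the algebraic identity above.
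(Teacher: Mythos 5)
Your proposal matches the paper's proof essentially step for step: the same decomposition $W_i=\sum_{j=1}^{T-1}(\tilde W_{i+1,j}+1)+1$ fed into the Blackwell--Girshick equation of Lemma~\ref{lem:composedrandomvariablevariance}, with $\E[T-1]=(1-p_i)/p_i$, $\Var[T-1]=(1-p_i)/p_i^2$, and the second form of \eqref{eq:varrec} obtained by substituting $H_{i+1}$ via Equation~\eqref{eq:recreform}. Your explicit justification of the independence hypotheses is a welcome addition that the paper leaves implicit, but the argument is the same.
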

\begin{proof}
  $W_n=\E[W_n]=H_n=1\Rightarrow \Var[W_n]=V_n=\Var[1]=0$.\\
  Let $T$ be a random variable distributed according to a geometric distribution with
  success probability $p_i$ and let all $(\tilde W_{i+1,j})_{y\in\N}$ be independent
  copies of $W_{i+1}$.
  \begin{align*}
    \Var[W_i]&
    =\Var[W_i-1]
    =\Var\left[ \sum_{j=1}^{T-1}(\tilde W_{i+1,j}+1) \right]
    \\&\overset{\mathclap{\text{Lem.~\ref{lem:composedrandomvariablevariance}}}}{=}\quad\E[T-1]\cdot\Var[W_{i+1}+1]+\E[W_{i+1}+1]^2\cdot\Var[T-1]\\
    &=\frac{1-p_i}{p_i}\cdot \Var[W_{i+1}]+(\E[W_{i+1}]+1)^2\cdot\Var[T]
    \\&=\frac{1-p_i}{p_i}\cdot V_{i+1}+\frac{1-p_i}{p_i^2}\cdot(H_{i+1}+1)^2
  \end{align*}
  Finally the rightmost expression of Equation~\eqref{eq:varrec} is obtained by
  replacing $H_{i+1}$ according to Equation~\eqref{eq:recreform}.
\end{proof}
Therefore one can evaluate $H_i$ by
Equations~\eqref{eq:recreform} and \eqref{eq:recinitial} and then one can evaluate
$V_i$ by Equations~\eqref{eq:varrec} and \eqref{eq:varinitial}.

Please note that $V_i$ will always be in the same order as $H_i^2$. If
$(1-p_i)/p_i$ is less than one then the recursively needed values of $V_{j}$
for $j>i$ become less important and we have mainly $H_i^2$ and if $(1-p_i)/p_i$
is greater than one then $H_i^2$ is growing by at least $((1-p_i)/p_i)^2$ (see
Equation~\eqref{eq:recreform}) which is the square of the growing factor of
$V_i$.

As $V_i$ is in the same order as $H_i^2$ we obtain by an arithmetic average
of $T$ evaluations of $W_i$ a relative error of approximately $1/\sqrt{T}$.
This is indeed a relevant statistic if evaluations are performed and is consolidated in the following corollary.

\begin{corollary}
  Let $\tilde W_{i,j}\sim W_i$ be independent random variables. Then
  $$
  \E\left[\frac{\vert \sum_{j=1}^{T}\frac{\tilde W_{i,j}}{T} -H_i\vert}{H_i}\right]
  =O\left(\frac{1}{\sqrt{T}}\right)\enspace.
  $$
\end{corollary}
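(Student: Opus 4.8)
The plan is to read the quantity inside the expectation as the relative deviation of an empirical mean from its true mean, and then to bound it by a second-moment estimate. Write $\bar W_T := \frac1T\sum_{j=1}^T \tilde W_{i,j}$ for the sample mean of the $T$ independent copies of $W_i$. Each copy has $\E[\tilde W_{i,j}]=H_i$, so by linearity $\E[\bar W_T]=H_i$; thus $\bar W_T-H_i$ is centered, and the corollary asks precisely for a bound of the form $O(1/\sqrt T)$ on $\E\bigl[\,|\bar W_T-H_i|\,\bigr]/H_i$.

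First I would pass from the absolute value to the variance. By the Cauchy--Schwarz inequality (equivalently Jensen's inequality for the concave map $y\mapsto\sqrt y$ applied to $y=(\bar W_T-H_i)^2$),
\[
  \E\bigl[\,|\bar W_T - H_i|\,\bigr]\ \le\ \sqrt{\E\bigl[(\bar W_T - H_i)^2\bigr]}\ =\ \sqrt{\Var[\bar W_T]}\,.
\]
Since the $\tilde W_{i,j}$ are independent with common variance $V_i=\Var[W_i]$, the variance of the mean contracts as $\Var[\bar W_T]=V_i/T$. Dividing by $H_i$ then gives
\[
  \E\!\left[\frac{|\bar W_T - H_i|}{H_i}\right]\ \le\ \frac{1}{\sqrt T}\cdot\frac{\sqrt{V_i}}{H_i}\,,
\]
so everything reduces to showing that the coefficient of variation $\sqrt{V_i}/H_i$ is bounded by an absolute constant.

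That boundedness is exactly the relationship noted immediately before the corollary, namely $V_i=O(H_i^2)$. I would make it precise by a downward induction on $i$, combining the variance recurrence~\eqref{eq:varrec} with the mean recurrence~\eqref{eq:recreform}: the base case $V_n=0$, $H_n=1$ is immediate, and in the inductive step the factor $\tfrac{1-p_i}{p_i}$ multiplying $V_{i+1}$ is compensated by the corresponding growth of $H_i^2$ produced by the $\tfrac{1-p_i}{p_i}H_{i+1}$ term in~\eqref{eq:recreform}, while the additive contribution $\tfrac{1}{1-p_i}(H_i-1)^2$ is itself $O(H_i^2)$. Once $\sqrt{V_i}/H_i=O(1)$ is in hand, substituting into the displayed bound yields $\E[|\bar W_T-H_i|/H_i]=O(1/\sqrt T)$.

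I expect the genuine obstacle to lie solely in this last uniformity: obtaining $V_i\le c\,H_i^2$ with a constant $c$ that is independent of both $i$ and $n$ as the recurrence is unrolled. The two regimes $(1-p_i)/p_i<1$ and $(1-p_i)/p_i>1$ flagged in the preceding paragraph behave differently, and the argument must be organized so that the constant does not accumulate across levels; the Jensen/second-moment part of the proof is otherwise entirely routine.
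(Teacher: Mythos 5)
Your proposal follows exactly the paper's intended argument: the paper does not give a formal proof of the corollary but derives it from the preceding discussion via the same chain --- Jensen/Cauchy--Schwarz to pass from $\E[\,|\bar W_T-H_i|\,]$ to $\sqrt{\Var[\bar W_T]}=\sqrt{V_i/T}$, together with the observation that $V_i=\Theta(H_i^2)$ obtained from the recurrences \eqref{eq:varrec} and \eqref{eq:recreform}. You also correctly identify that the only non-routine point is the uniformity of the constant in $V_i=O(H_i^2)$, which the paper itself treats only informally by the same two-regime comparison of the growth factors $(1-p_i)/p_i$ and $((1-p_i)/p_i)^2$.
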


\section{Runtime Analysis of \onepso and \dpso}
\label{sec:runtime}

As mentioned above, we present a runtime analysis of \onepso for two combinatorial problems, the
sorting problem and \onemax. Our analysis is based on the \emph{fitness level
method}~\cite{Wegener:02}, in particular its application to the runtime
analysis of a \oneea for the sorting problem in~\cite{STW:04}. Consider a
(discrete) search space $X$ and an objective function $f:X \rightarrow \real$,
where $f$ assigns $m$ distinct values $f_1 < f_2 < \ldots < f_m$ on $X$. Let
$S_i \subseteq X$ be the set of solutions with value $f_i$. Assuming
that some algorithm \algo optimizing $f$ on $X$ leaves fitness level $i$ at
most once then the expected runtime of \algo is bounded from above by $\sum_{i=1}^m
{1}/{s_i}$, where $s_i$ is a lower bound on the probability of \algo leaving
$S_i$. The method has also been applied successfully, e.\,g., in~\cite{SW:10} to
obtain bounds on the expected runtime of a binary PSO proposed
in~\cite{KE:97}. 

\subsection{Upper Bounds on the Expected Optimization Time}
\label{sec:runtime:ub}

Similar to~\cite{STW:04,SW:10}, we use the fitness-level method to prove upper
bounds on the expected optimization time of the \onepso for sorting and
\onemax.  In contrast to the former, we allow non-improving solutions and
return to the attractor as often as needed in order to sample a neighbor of the
attractor that belongs to a better fitness level. Therefore, the time needed to
return to the attractor contributes a multiplicative term to the expected
optimization time, which depends on the choice of the algorithm parameter $c$.

We first consider the sorting problem. The structure of the search space of
the sorting problem has been discussed already in~\cite{STW:04} and a detailed
analysis of its fitness levels is provided in~\cite{MRSSW:17}.
In the following lemma we bound the transition probabilities for the
Markov model for the sorting problem. This allows us to bound the runtime of
\onepso for the sorting problem later on.

\begin{lemma}
  For the sorting problem on $n$ items, $c = 1/2$ and $x\in X_i$, the probability $p_x$
  that \onepso moves from $x$ to an element in $X_{i-1}$ is bounded from below by $p_i =
  \frac{1}{2} (1+ i/\binom{n}{2})$. Furthermore, this bound is tight.
  \label{thm:probabilities}
\end{lemma}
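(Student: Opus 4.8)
The plan is to reduce the probability statement to a purely combinatorial count of distance-reducing transpositions and then bound that count via the cycle structure of the relevant permutation. First I would recall from Section~\ref{sec:model} that the transition probability has the form $p_x = c + (1-c)\cdot |X_{i-1}\cap\neighborhood(x)|/|\neighborhood(x)|$. For the sorting problem the neighborhood of a permutation $x$ consists of the permutations obtained by applying one of the $\binom{n}{2}$ transpositions, and distinct transpositions yield distinct neighbours, so $|\neighborhood(x)| = \binom{n}{2}$. Writing $L_i(x) := |X_{i-1}\cap\neighborhood(x)|$ for the number of transpositions bringing $x$ strictly closer to the attractor $g$ and substituting $c=1/2$, the claimed bound $p_x \ge \tfrac12\bigl(1+i/\binom{n}{2}\bigr)$ is exactly equivalent to the combinatorial inequality $L_i(x) \ge i$, and tightness of the probability bound is equivalent to the existence of some $x\in X_i$ with $L_i(x)=i$.

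Next I would translate $L_i(x)$ into cycle data. Let $\sigma := x^{-1}\circ g$ be the difference permutation, so that $\dist(x,g) = n - c(\sigma)$, where $c(\sigma)$ is the number of cycles of $\sigma$ (counting fixed points); this is the standard fact that the minimum number of transpositions realising a permutation equals $n$ minus its cycle count. Since $x\in X_i$ we have $c(\sigma) = n-i$. A single transposition move either splits one cycle of $\sigma$ into two (decreasing the distance by one) or merges two cycles into one (increasing it by one), according to whether the two swapped elements lie in the same cycle of $\sigma$ or not; in particular the distance always changes by $\pm 1$, which also re-confirms Remark~\ref{rem:noInternalTransitions}. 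Counting the element pairs lying in a common cycle gives $L_i(x) = \sum_{C}\binom{|C|}{2}$, summed over the cycles $C$ of $\sigma$. This structural description is precisely the fitness-level analysis of the sorting search space from~\cite{STW:04,MRSSW:17}, so I would quote it rather than reprove it.

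Finally I would prove the two elementary estimates. For the lower bound I use $\binom{\ell}{2} = \tfrac{\ell(\ell-1)}{2} \ge \ell-1$ for every integer $\ell\ge 1$, with equality iff $\ell\le 2$. Summing over the cycles of $\sigma$ and using $\sum_{C}(|C|-1) = n - c(\sigma) = i$ yields $L_i(x) \ge \sum_{C}(|C|-1) = i$, hence $p_x \ge \tfrac12\bigl(1+i/\binom{n}{2}\bigr) = p_i$. For tightness, equality in the summed inequality forces every cycle of $\sigma$ to have length at most $2$, i.e.\ $\sigma$ must be an involution consisting of $i$ transpositions and $n-2i$ fixed points; such a $\sigma$ (and hence a genuine $x\in X_i$) exists exactly when $i \le \lfloor n/2\rfloor$, and for any such $x$ one gets $L_i(x)=i$ and $p_x = p_i$, so the bound is attained in this range. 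I expect the only genuine obstacle to be the bookkeeping of the cycle-splitting step---fixing the composition convention so that ``distance-reducing transposition'' corresponds cleanly to ``pair inside a common cycle of $\sigma$'' and confirming the count $\sum_{C}\binom{|C|}{2}$---after which the inequality $\binom{\ell}{2}\ge \ell-1$ and the involution construction are immediate.
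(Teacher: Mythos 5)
Your proposal is correct and follows essentially the same route as the paper: both reduce the probability bound to the combinatorial claim $|\neighborhood(x)\cap X_{i-1}|\ge i$ via the cycle decomposition of the difference permutation, using $\binom{k}{2}\ge k-1$ summed over cycles, and both obtain tightness from permutations whose cycles all have length at most two. Your additional observation that such a witness exists only for $i\le\lfloor n/2\rfloor$ is a correct refinement that the paper's proof leaves implicit.
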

\begin{proof}
  The lower bound $p_i$ on $p_x$ can be obtained by
  \[
	p_x
	= \left(c+(1-c)\frac{\vert \neighborhood(x)\cap X_{i-1}\vert}{\vert\neighborhood(x)\vert}\right)
	= \frac{1}{2}\left(1+\frac{\vert \neighborhood(x)\cap X_{i-1}\vert}{\binom{n}{2}}\right) \geq p_i
	\enspace.
  \]
  To show the above inequality, consider the attractor $a$ and a permutation
  $\tau$ such that $x \circ \tau = a$.
  For each cycle of length $k$ of $\tau$, exactly $k-1$
  transpositions are needed to adjust the elements in this cycle and there are
  $\binom{k}{2}\geq k-1$ transpositions which decrease the transposition
  distance to the attractor $a$. Therefore the number of ways to decrease the
  transposition distance to $a$ is bounded from below by the transposition
  distance to $a$. Hence, we have $\vert \neighborhood(x)\cap X_{i-1}\vert \geq i$.  
  
  The lower bound is tight as it appears if only cycles
  of length two (or one) appear. In~\cite[Sec.~4]{MRSSW:17} a more detailed discussion on improvement
  probabilities can be found.
\end{proof}
Using Lemma~\ref{thm:probabilities} we prove the following bounds on  the
expected optimization time $T_{\rm sort}(n)$ required by \onepso for sorting
$n$ items by transpositions.
\begin{theorem}\textbf{\emph{\cite[Thm.~13]{MRSSW:17}}}
  \label{thm:onepso:sorting}
  The expected optimization time $T_{\rm sort}(n)$ of the \onepso sorting $n$ items is bounded from above by
  \[
	T_{\rm sort}(n) = 
	\begin{cases}
	  O(n^2 \log n )	&	\text{if $c \in (\frac{1}{2}, 1]$} \\
	  O(n^{{3}} \log n)	&	\text{if $c = \frac{1}{2}$} \\
	  O\left(\left( \frac{1-c}{c} \right)^n\cdot n^2\log n\right)	&	\text{if $c \in (0, \frac{1}{2})$\enspace.}
	\end{cases}
  \]
\end{theorem}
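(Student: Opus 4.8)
The plan is to apply the adapted fitness-level method described at the start of Section~\ref{sec:runtime}, where leaving a fitness level is charged the \emph{full} expected time to improve the attractor, including the steps spent returning to it after failed attempts. Writing the objective (transposition distance to the sorted sequence) as fitness levels $X_{n-1},\dots,X_1,X_0$, the attractor passes through these levels monotonically and leaves each at most once, so the expected optimization time is bounded by $\sum_{i=1}^{n-1} E_i$, where $E_i$ is the expected number of steps to improve the attractor while it sits at distance $i$ from the optimum. Since we discard improvements made away from the attractor (as stipulated for upper bounds in Section~\ref{sec:model}), every such $E_i$ is a legitimate overestimate.

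First I would pin down $E_i$ in terms of the two quantities supplied by the Markov model: the improvement probability $p_{g,i}$ at state $S_0$ and the return time $H_1$ to $S_0$ from $S_1$. A first-step decomposition, using that a failed attempt at $S_0$ costs one step and ejects the particle to $S_1$ (whence an expected $H_1$ steps to return), gives $E_i = \frac{1}{p_{g,i}}\bigl(1 + (1-p_{g,i})H_1\bigr) \le \frac{1+H_1}{p_{g,i}}$. For the improvement probability I would reuse the counting argument behind Lemma~\ref{thm:probabilities}, now taking the \emph{optimum} as the reference position: an attractor at distance $i$ has at least $i$ improving transpositions out of the $\binom{n}{2}$ neighbors, and at $S_0$ every move is uniform over these neighbors, so $p_{g,i} \ge i/\binom{n}{2}$ and hence $1/p_{g,i} = O(n^2/i)$.

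Next I would bound $H_1$ by a case distinction on $c$, always feeding the \emph{lower} bounds on the transition probabilities into the return-time model; by the monotonicity already exploited in the proof of Theorem~\ref{thm:h1_lin_theta}, replacing the true $p_j$ by a smaller value only \emph{increases} $H_1$, so this yields a valid upper bound uniformly across all fitness levels. For $c\in(\tfrac12,1]$ the probabilities satisfy $p_j \ge c > \tfrac12$, so Theorem~\ref{thm:returntime} gives $H_1 = O(1)$. For $c=\tfrac12$, Lemma~\ref{thm:probabilities} gives $p_j \ge \tfrac12(1+j/\binom{n}{2})$, and Corollary~\ref{cor:returntime} yields $H_1 = \Theta(n) = O(n)$. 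For $c\in(0,\tfrac12)$ I would use the crude but sufficient bound $p_j \ge c$ and apply Theorem~\ref{thm:returntime} with the constant $p=c<\tfrac12$, giving $H_1 = O\bigl(((1-c)/c)^n\bigr)$.

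Finally, substituting into $\sum_{i=1}^{n-1} E_i \le (1+H_1)\sum_{i=1}^{n-1} \frac{1}{p_{g,i}} = O(n^2)(1+H_1)\sum_{i=1}^{n-1} 1/i = O(n^2\log n)(1+H_1)$ produces the three claimed bounds $O(n^2\log n)$, $O(n^3\log n)$, and $O\bigl(((1-c)/c)^n\, n^2\log n\bigr)$. I expect the main obstacle to be the careful bookkeeping that fuses the per-attempt improvement chance with the return time into the recurrence for $E_i$, together with the uniform justification that worst-case (minimum) transition probabilities legitimately upper-bound $H_1$ at every fitness level; once that is in place, the three case-wise evaluations of $H_1$ follow immediately from the already-proved Theorem~\ref{thm:returntime}, Theorem~\ref{thm:h1_lin_theta}, and Corollary~\ref{cor:returntime}.
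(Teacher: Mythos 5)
Your proposal is correct and follows essentially the same route as the paper: the fitness-level method with improvement probability $s_i \geq i/\binom{n}{2}$ at the attractor, combined with the case-wise bounds on $H_1$ from Theorem~\ref{thm:returntime} (constant lower bounds $p_j \geq c$ for $c \neq \tfrac12$) and Corollary~\ref{cor:returntime} (for $c = \tfrac12$). Your per-level cost $E_i = \frac{1}{p_{g,i}}\bigl(1+(1-p_{g,i})H_1\bigr)$ is algebraically identical to the paper's $(H_1+1)(1/s_i-1)+1$, so the two arguments coincide.
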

See Figure~\ref{fig:alphabeta} for a visualization of $\frac{1-c}{c}$.

\begin{proof}
  Consider the situation that the attractor has just been updated. Whenever the
  \onepso fails to update the attractor in the next iteration it will take in
  expectation $H_1$ iterations until the attractor is reached again and then it
  is improved with probability at least $i/\binom{n}{2}$. Again, if the \onepso
  fails to improve the attractor we have to wait $H_1$ steps, and so on. Since
  we do not consider the case that the attractor has been improved meanwhile,
  the general fitness level method yields an expected runtime of at most
  $\sum_{i=1}^n((H_1+1)(1/s_i-1)+1) = H_1 \cdot O(n^2 \log n)$.

  We now bound the expected return time $H_1$. Let $c \in (\frac{1}{2}, 1]$ and
  recall that $p_i$ is the probability of moving from state $S_i$ to state
  $S_{i-1}$. Then $1 \geq p_i > c > \frac{1}{2}$. Then the expression for $H_1$
  given in Theorem~\ref{thm:returntime} is bounded from above by the constant
  $1/(2c-1)$, so $T_{\rm sort}(n)=O(n^2\log n)$.
  Now let $c = \frac{1}{2}$, so $p_i \geq \frac{1}{2} (1+ i/\binom{n}{2})$ by
  Lemma~\ref{thm:probabilities}. Then, by Corollary~\ref{cor:returntime}, we
  have $H_1 = O(n)$, so $T_{\rm sort}(n)=O(n^3\log n)$. Finally, let $c \in (0, \frac{1}{2})$. Then $p_i > c > 0$,
  and by Theorem~\ref{thm:returntime}, $H_1$ is bounded from above by
  \[
	H_1 \leq \frac{2c}{1-2c}\left( \frac{1-c}{c} \right)^n = O\left(\left(\frac{1-c}{c}\right)^n\right)\enspace,
  \]
  so $T_{\rm sort}(n)=O\left(\left( \frac{1-c}{c} \right)^n\cdot n^2\log n\right)$.
\end{proof}

For $c = 0$, \onepso always moves to a uniformly drawn
adjacent solution.  Hence, the algorithm just behaves like a random walk on the
search space. Hence, in this case, $T_{\rm sort}(n)$ is the expected number of
transpositions that need to be applied to a permutation in order to obtain a
given permutation. We conjecture that $T_{\rm sort}(n)$ has the following
asymptotic behavior and provide theoretical evidence for this conjecture in
the Appendix~\ref{subsec:randomwalkconjecture}.
\begin{conjecture}
  \label{conject:random_walk}
  $T_{\rm sort}(n)\sim n!$ if $c=0$.
\end{conjecture}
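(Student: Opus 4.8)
The plan is to recognize the $c=0$ dynamics as a simple random walk on the Cayley graph $G$ of the symmetric group $S_n$ generated by all transpositions: the vertices are the $N:=n!$ permutations, the walk is regular of degree $\binom{n}{2}$, and its unique stationary distribution is uniform, $\pi\equiv 1/N$. Since $T_{\rm sort}(n)$ is the expected time to reach a fixed target (the identity) from a uniformly random start, and $G$ is vertex-transitive, the first step is to invoke the hitting-time-from-stationarity identity (the \emph{eigentime} form of Kemeny's constant). Writing $1=\lambda_1>\lambda_2\ge\cdots\ge\lambda_N\ge -1$ for the eigenvalues of the transition matrix $P$, vertex-transitivity forces all diagonal entries of the fundamental matrix $Z=(I-P+\Pi)^{-1}$ (with $\Pi$ the rank-one stationary projector) to be equal, giving the exact formula
\[
  T_{\rm sort}(n)=\mathbb{E}_\pi[T_{\mathrm{id}}]=\sum_{k=2}^{N}\frac{1}{1-\lambda_k}\enspace.
\]
The periodicity of the walk (transpositions are odd, so $\lambda=-1$ occurs) is harmless, since $1/(1-(-1))=\tfrac12$ is finite; irreducibility guarantees $\lambda_k\ne1$ for $k\ge2$, so $Z$ exists.

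Because there are exactly $N-1$ terms, I would isolate the leading behaviour by writing $\tfrac{1}{1-\lambda_k}=1+\tfrac{\lambda_k}{1-\lambda_k}$, reducing the conjecture to the single estimate $\sum_{k=2}^{N}\lambda_k/(1-\lambda_k)=o(N)$. To evaluate this I would feed in the classical spectrum of the random-transposition walk from the representation theory of $S_n$: the eigenvalues are indexed by partitions $\mu\vdash n$, the eigenvalue attached to $\mu$ is $\beta_\mu=c(\mu)/\binom{n}{2}$, where $c(\mu)$ is the sum of cell contents of the Young diagram, and it occurs with multiplicity $(\dim\rho_\mu)^2$. Exploiting transpose symmetry ($\dim\rho_{\mu'}=\dim\rho_\mu$, $\beta_{\mu'}=-\beta_\mu$, and $\beta_\mu=0$ for self-conjugate $\mu$), the partitions pair up and the correction collapses into a sum of \emph{nonnegative} terms,
\[
  \sum_{k=2}^{N}\frac{\lambda_k}{1-\lambda_k}
  =\sum_{\substack{\mu\vdash n\\ \beta_\mu>0,\ \mu\neq(n)}}(\dim\rho_\mu)^2\,\frac{2\beta_\mu^2}{1-\beta_\mu^2}+O(1)\enspace,
\]
the $O(1)$ remainder accounting for the orphaned sign representation $(1^n)$ whose partner $(n)$ is excluded.

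Recognizing $(\dim\rho_\mu)^2/N$ as the Plancherel measure on partitions, the displayed sum is exactly $N$ times the Plancherel expectation $\mathbb{E}\!\left[2\beta_\mu^2/(1-\beta_\mu^2)\right]$, so the whole conjecture reduces to showing this expectation tends to $0$. I would split the range of $\mu$ into a bulk and a tail. For the bulk I would use that a Plancherel-typical diagram is close to the Vershik--Kerov--Logan--Shepp limit shape, which is symmetric under transposition; hence the normalized content sum $\beta_\mu$ concentrates at $0$ and $2\beta_\mu^2/(1-\beta_\mu^2)\to0$ in probability. The tail consists of diagrams with $\beta_\mu$ near $1$, i.e.\ those close to $(n)$; for a diagram of fixed ``defect'' $k$ one checks $1-\beta_\mu=\Theta(1/n)$, so the factor $1/(1-\beta_\mu^2)$ blows up only polynomially, while the Plancherel weight $(\dim\rho_\mu)^2=\Theta(n^{2k})$ is swamped by $1/n!$, making the tail contribution $o(1)$.

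The main obstacle is precisely this last step: turning the qualitative statement that the Plancherel spectral measure concentrates at $0$ into a quantitative bound with \emph{uniform} control of the tail near $\beta_\mu=1$. A crude argument only delivers $T_{\rm sort}(n)=\Theta(n!)$ — every summand lies in $[\tfrac12,\infty)$, so $T_{\rm sort}(n)\ge\tfrac12 N$ is free — whereas the asserted equivalence $T_{\rm sort}(n)\sim n!$ requires the average of $1/(1-\lambda_k)$ over all nontrivial eigenvalues to converge to exactly $1$. Proving that the eigenvalues near $\pm1$ contribute genuinely $o(n!)$ rather than merely a bounded fraction is the delicate part, which is why the statement is offered as a conjecture supported by the spectral reduction above together with exact small-$n$ computations, rather than as a theorem.
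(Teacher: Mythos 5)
First, note that the paper itself does not prove this statement: it is offered only as Conjecture~\ref{conject:random_walk}, and the paper's support for it (in \ref{subsec:randomwalkconjecture}) is purely computational --- the linear system \eqref{eq:htau} is collapsed by cycle type to one variable per integer partition of $n$, solved exactly up to $n=40$, and the ratio $T_{\rm sort}(n)/n!$ is observed to approach $1$. Your route is genuinely different and, as far as it goes, correct. The identification of the $c=0$ dynamics with the non-lazy random-transposition walk is right; the eigentime/Kemeny identity $T_{\rm sort}(n)=\sum_{k\ge2}(1-\lambda_k)^{-1}$ is legitimate here (vertex-transitivity equalizes the diagonal of $Z$, irreducibility makes $I-P+\Pi$ invertible, and periodicity only contributes the harmless eigenvalue $-1$); and the Diaconis--Shahshahani spectrum $\beta_\mu=c(\mu)/\binom{n}{2}$ with multiplicity $(\dim\rho_\mu)^2$, together with the transpose pairing, is the correct bookkeeping. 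Indeed your formula reproduces the paper's exact value $T_{\rm sort}(4)=4!+\tfrac34$: the only nontrivial positive eigenvalue is $\beta_{(3,1)}=\tfrac13$ with multiplicity $9$, giving $23-\tfrac12+9\cdot\tfrac{2/9}{8/9}=24.75$. Your reduction also buys something the paper proves separately and less sharply: since every paired contribution is nonnegative, $T_{\rm sort}(n)\ge n!-\tfrac32$ falls out immediately, compared with the paper's $\Omega(n!)$ via indistinguishability.

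The gap is exactly where you locate it, and it is genuine: the conjecture is equivalent to $\sum_{\mu\ne(n),\,\beta_\mu>0}(\dim\rho_\mu)^2\,2\beta_\mu^2/(1-\beta_\mu^2)=o(n!)$, and neither the limit-shape concentration (which is only qualitative) nor the fixed-defect tail computation (which covers only $\mu$ within bounded distance of $(n)$) controls the intermediate partitions uniformly. A plausible way to close it is to expand $\beta^2/(1-\beta^2)=\sum_{t\ge1}\beta^{2t}$ and use $\sum_\mu(\dim\rho_\mu)^2\beta_\mu^{2t}=n!\cdot\Pr[\tau_1\cdots\tau_{2t}=\mathrm{id}]$ together with return-probability estimates for short products of random transpositions; you have not carried this out, so your argument, like the paper's, stops short of a proof --- which is consistent with the statement being stated as a conjecture. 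As a reduction of the conjecture to a concrete, checkable spectral estimate, however, your approach is more structural than the paper's numerical evidence and would be the natural starting point for an actual proof.
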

Please note that the conjecture is actually only a conjecture on the upper
bound as Theorem~\ref{thm:onepso:sortinglb} supplies a proof that $T_{\rm
sort}(n)=\Omega(n!)$ if $c=0$.

Using a similar approach as in Theorem~\ref{thm:onepso:sorting}, we now bound
the expected optimization time $T_{\onemax}(n)$ of \onepso for \onemax.

\begin{theorem}
  The expected optimization time $T_{\onemax}(n)$ of the \onepso solving
  \onemax is bounded from above by
  \[
	T_{\onemax}(n) =
	\begin{cases}
	  O(n \log n)	&	\text{if $c \in (\frac{1}{2}, 1]$},\\
	  O(n^\frac{3}{2} \log n)	&	\text{if $c = \frac{1}{2}$},\\
	  O\left(\beta(c)^n\cdot n^2\log n\right)	&	\text{if $c \in (0, \frac{1}{2})$, and}\\
	  O(2^n)		&	\text{if $c = 0$\enspace.}
	\end{cases}
  \]
  where $\beta(c)=2^{{1}/({1-c})}\cdot (1-c)\cdot c^{{c}/({1-c})}\enspace$.
  \label{thm:onepso:onemax}
\end{theorem}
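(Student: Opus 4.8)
The plan is to mirror the proof of Theorem~\ref{thm:onepso:sorting}, combining the fitness level method with the return-time bounds for the Markov model. First I would record the transition probabilities of the model for \onemax: if the current position $x$ lies at Hamming distance $i$ from the attractor $g$, then exactly $i$ of the $n$ neighbours of $x$ decrease the distance to $g$, so $p_x = c + (1-c)\,i/n =: p_i$ holds \emph{exactly} (not merely as a bound) for every $x \in X_i$. For the fitness levels, when sitting at an attractor with $i$ ones an improvement requires flipping one of the $n-i$ zero bits, which happens with probability $s_i = (n-i)/n$; hence $\sum_{i=0}^{n-1} 1/s_i = n H_n = O(n\log n)$. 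As in Theorem~\ref{thm:onepso:sorting}, charging $H_1$ return steps for each failed improvement attempt yields $T_{\onemax}(n) = H_1 \cdot O(n\log n)$, so the whole task reduces to bounding $H_1$ for $p_i = c + (1-c)i/n$ in each parameter regime.

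For $c \in (1/2,1]$ I would note $p_i \geq c > 1/2$; since increasing the downward probabilities can only decrease the return time, Theorem~\ref{thm:returntime} with constant probability $c$ gives $H_1 \leq 1/(2c-1) = O(1)$ and thus $T_{\onemax}(n) = O(n\log n)$. For $c = 1/2$ the probabilities become $p_i = 1/2 + i/(2n)$, exactly the form of Lemma~\ref{lemma:sqrt_return}, giving $H_1 = \Theta(\sqrt n)$ and hence $T_{\onemax}(n) = O(n^{3/2}\log n)$.

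The main case is $c \in (0,1/2)$, where I would apply the integration bound of Theorem~\ref{thm:boundsbyintegration} to the non-decreasing function $p(x) = c + (1-c)x$ (read as $p(n\cdot x)$ in the notation of that theorem). Its upper bound is $H_1 = O(n\cdot \mathrm{base}(p,n)^n)$, and the crux is to evaluate $\mathrm{base}(p,n)$. The supremum is attained where $p(k)=1/2$; substituting $t = c + (1-c)x$ turns the exponent into $\frac{1}{1-c}\int_c^{1/2}\ln\frac{1-t}{t}\,\mathrm dt$, and the antiderivative $-(1-t)\ln(1-t) - t\ln t$ evaluates this to $\frac{1}{1-c}\bigl(\ln 2 + (1-c)\ln(1-c) + c\ln c\bigr)$. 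Exponentiating gives precisely $\mathrm{base}(p,n) = 2^{1/(1-c)}(1-c)\,c^{c/(1-c)} = \beta(c)$, whence $H_1 = O(n\,\beta(c)^n)$ and $T_{\onemax}(n) = O(\beta(c)^n n^2 \log n)$. I expect this integral evaluation to be the main obstacle, since one must use the correct antiderivative and dispose of the boundary terms of Theorem~\ref{thm:boundsbyintegration} to land exactly on the stated $\beta(c)$.

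Finally, the case $c = 0$ lies outside Theorem~\ref{thm:boundsbyintegration} (since $p(0)=0$) and must be treated directly as a random walk: the algorithm performs an unbiased walk on the hypercube, so the optimization time is the hitting time of the all-ones vertex. Setting up the distance chain with $p_i = i/n$ and using the closed form~\eqref{eq:recclosedk} (whose correction term vanishes because $p_n = 1$), I would obtain $H_k = \binom{n-1}{k-1}^{-1}\sum_{i=k}^n\binom{n}{i}$, and in particular $H_1 = \sum_{i=1}^n \binom{n}{i} = 2^n - 1$. Bounding the full hitting time $\sum_{k=1}^n H_k \leq 2^n \sum_{k=1}^n \binom{n-1}{k-1}^{-1}$ and using $\sum_{k=1}^n \binom{n-1}{k-1}^{-1} = O(1)$ then yields $T_{\onemax}(n) = O(2^n)$, completing the last case.
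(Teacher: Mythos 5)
Your proposal is correct and follows essentially the same route as the paper's proof: the fitness level method giving $T_{\onemax}(n) = H_1 \cdot O(n\log n)$, Theorem~\ref{thm:returntime} for $c>\frac12$, Lemma~\ref{lemma:sqrt_return} for $c=\frac12$, Theorem~\ref{thm:boundsbyintegration} with the integral evaluating to $\beta(c)$ for $c\in(0,\frac12)$, and the explicit closed form~\eqref{eq:recclosedk} with $p_i=i/n$ summed over $k$ for $c=0$. The only cosmetic difference is that you evaluate the integral via the substitution $t=c+(1-c)x$ rather than integrating $\ln(1-x)-\ln(\frac{c}{1-c}+x)$ directly, which lands on the same value of $\beta(c)$.
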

See Figure~\ref{fig:alphabeta} for a visualization of $\beta(c)$.
\begin{proof}
  The argument is along the lines of the proof of
  Theorem~\ref{thm:onepso:sorting}. We observe that on fitness level $0 \leq i
  \leq n$ there are $i$ bit flips that increase the number of ones in the
  current solution. Therefore, $s_i = i/n$ and the fitness level method yields
  an expected runtime of at most $\sum_{i=1}^n (H_1 + 1)(1/s_i-1)+1 =
  H_1 \cdot O(n \log n)$. The bounds on $H_1$ for $c > \frac{1}{2}$ are as
  in the proof of Theorem~\ref{thm:onepso:sorting}. For $c = \frac{1}{2}$ we
  invoke Lemma~\ref{lemma:sqrt_return} and have $H_1 = O(\sqrt{n})$. For $c <\frac{1}{2}$ we use Theorem~\ref{thm:boundsbyintegration}.
  The probabilities in the Markov model for $H_1$ are $p_i=c+(1-c)i/n$ which can be continuously extended to the non-decreasing function $p(i)=c+(1-c)i/n$.
  Here $k=n\cdot\frac{1-2c}{2(1-c)}$ solves the equation $p(k)=\frac{1}{2}$.
  Hence, we need the value of
  \begin{align}
& \mathrm{base}(p,n)
  =\exp\left(\int_0^\frac{1-2c}{2(1-c)}\ln\left(\frac {1-c-(1-c)\cdot x}{c+(1-c)\cdot x}\right)\mathrm{d}x\right)\nonumber\allowdisplaybreaks\\
&  =\exp\left(\int_0^\frac{1-2c}{2(1-c)}\left(\ln\left({1- x}\right)-\ln\left({\frac{c}{1-c}+x}\right)\right)\mathrm{d}x\right)\nonumber\\
& =\exp\left((x-1)\ln(1-x)-
  \left.\left(\frac{c}{1-c}+x\right)\ln\left(\frac{c}{1-c}+x\right)\right\vert^\frac{1-2c}{2(1-c)}_0\right)\nonumber\\
&  =2^{{1}/({1-c})}\cdot (1-c)\cdot c^{{c}/({1-c})}=\beta(c)\enspace.\label{eq:calculatebaseonemax}
  \end{align}
  Now Theorem~\ref{thm:boundsbyintegration} gives the upper bound
  $H_1=O(n\cdot\beta(c)^n)$.

  It remains to consider the case that $c=0$. The claimed bound
  on $T_{\rm \onemax}$ can be obtained by using the model
  $\model((\frac{i}{n})_{1\leq i\leq n})$. Each state represents the distance
  to the optimal point.  By Equation~\eqref{eq:recclosedk} we have
  \begin{align*}
    H_k
    &=\sum_{i=k}^n\frac{n}{i}\prod_{j=k}^{i-1}\frac{n-j}{j}
     =\prod_{j=1}^{k-1}\frac{j}{n-j}\sum_{i=k}^n\frac{n}{i}\prod_{j=1}^{i-1}\frac{n-j}{j}
    =\frac{1}{\binom{n-1}{k-1}}\sum_{i=k}^n\binom{n}{i}\leq\frac{2^n}{\binom{n-1}{k-1}}\enspace .
  \end{align*}
  The maximal expected time to reach the optimal point is the sum of all $H_k$:
  \begin{align*}
    T_{\onemax}(n)
    &\leq \sum_{k=1}^n H_k
     \leq \sum_{k=1}^n \frac{2^n}{\binom{n-1}{k-1}}
    =2^n\cdot\left(2+O\left(\frac{1}{n}\right)\right)
     =O(2^n)\enspace .
  \end{align*}
\end{proof}

We remark that the upper bounds given in Theorem~\ref{thm:onepso:onemax} for
$c\in[\frac{1}{2},1]$ were presented in \cite[Thm.~14]{MRSSW:17} and that the
upper bound for $c\in(0,\frac{1}{2})$ is newly obtained using the
\emph{bounds-by-integration} from Section~\ref{subsec:integrationbounds}
and the proof of the upper bound for $c=0$ is also new compared to
\cite{MRSSW:17}. Admittedly, the bound for $c=0$ is already available in the
context of randomized local search and can be found in \cite{Garnier1999}.
Furthermore, note that for $c=\frac{1}{2}$ it is not sufficient to use the
lower bound $p_i \geq p_1=\frac{1}{2} + \frac{1}{2n}$ in order to obtain the runtime
bound given in Theorem~\ref{thm:onepso:onemax}. 

By repeatedly running \onepso and applying Markov's inequality for the analysis, an optimal solution
is found with high probability so we have the following Corollary.
  \begin{corollary}
  If the \onepso is repeated $\lambda\cdot\log_2(n)$ times but each repetition
  is terminated after $2\cdot T(n)$ iterations, where $T(n)$ is the upper bound
  on the expected number of iterations to find the optimum specified in
  Theorems~\ref{thm:onepso:sorting} and~\ref{thm:onepso:onemax} with suitable
  constant factor, then \onepso finds the optimal solution with high
  probability. 
  \end{corollary}

\subsection{Lower Bounds via Indistinguishable States}
\label{sec:runtime:lb}

In this section we will provide lower bounds on the expected optimization time
of \onepso that almost match our upper bounds given in
Section~\ref{sec:runtime:ub}.  We will use the Markov model from
Section~\ref{sec:model} to obtain these lower bounds. The main difference to
the previous section is that we restrict our attention to the \emph{last}
improvement of the attractor, which dominates the runtime, both for sorting and
\onemax. 
We will introduce the useful notion of \emph{indistinguishability} of certain
states of a Markov chain.
Note that our lower bounds are significantly improved compared to
the conference version \cite{MRSSW:17} by using the newly introduced
bounds-by-integration from Section~\ref{subsec:integrationbounds}.

\subsubsection{Indistinguishable States}
We now introduce a notion of \emph{indistinguishability} of certain
states of a Markov chain already presented in~\cite{MRSSW:17}. We will later use this notion to prove lower bounds
on the expected optimization time of \onepso for sorting and \onemax as
follows: We show that the optimum is contained in a set $\hat{Y}$ of
indistinguishable states. Therefore, in expectation, the states $\hat{Y}$ have
to be visited $\Omega(|\hat{Y}|)$ times to hit the optimum with positive
constant probability.

\begin{definition}[Indistinguishable states]
\label{defi:symmetric}
Let $M$ be a Markov process with a finite set $Y$ of states  and let $\hat{Y}\subseteq Y$.
Furthermore, let $(Z_i)_{i\geq0}$ be the sequence of visited states of $M$ and let $T=\min\lbrace t>0\mid Z_t\in\hat{Y}\rbrace$. Then
$\hat{Y}$ is called \emph{indistinguishable} with respect to $M$ if
\begin{enumerate}
\item \label{statement:equal_prob}the initial state $Z_0$ is uniformly
distributed over $\hat{Y}$, i.\,e., for all $y\in Y$:
\[
    \Pr[Z_0=y]=\1_{y\in\hat{Y}}/\vert\hat{Y}\vert =
    \begin{cases}
        1/\vert\hat{Y}\vert & \text{if }y\in\hat{Y}\\
        0 & \text{if }y\not\in\hat{Y} \enspace.
    \end{cases}
\]
\item \label{statement:symmetric_prob}and the probabilities to reach states in
$\hat{Y}$ from states in $\hat{Y}$ are symmetric, i.\,e., for all $y_1,y_2\in \hat{Y}$:
\[
	\Pr[Z_T=y_2\mid Z_0=y_1]=\Pr[Z_T=y_1\mid Z_0=y_2] \enspace.
\]
\end{enumerate}
\end{definition}

Now we can prove a lower bound on the expected time for finding a specific state.

\begin{theorem}
\label{thm:lower_bound}
Let $M$ be a Markov process as in Definition~\ref{defi:symmetric} and let
$\hat{Y}$ be indistinguishable with respect to $M$.  Let $h(M)$ be a
positive real value such that $\E[T]\geq h(M)$, then the expected time to reach
a fixed $y\in\hat{Y}$ is bounded below by
$h(M)\cdot\OMEGA(\vert\hat{Y}\vert)$.
\end{theorem}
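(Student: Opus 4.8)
The plan is to collapse the excursions of $M$ between successive visits to $\hat{Y}$ into a discrete chain on $\hat{Y}$, use the symmetry to make that chain doubly stochastic, and then multiply a first-moment bound on the number of excursions by a per-excursion time bound coming from $h(M)$. Concretely, set $V_0:=Z_0$ and let $V_1,V_2,\dots$ be the states of $\hat{Y}$ visited at the consecutive times $t>0$ with $Z_t\in\hat{Y}$; write $T_k$ for the length of the $k$-th excursion (so the first time $M$ reaches the fixed target $y$ equals $\sum_{k=1}^{N}T_k$, where $N:=\min\{k\ge0\mid V_k=y\}$). Since the state space is finite and $\hat{Y}$ is recurrent, each excursion is a.s.\ finite, and by the strong Markov property $(V_k)_{k\ge0}$ is a Markov chain on $\hat{Y}$ with transition matrix $P_{y_1y_2}=\Pr[Z_T=y_2\mid Z_0=y_1]$, while $\E[T_k\mid V_0,\dots,V_{k-1}]=\E[T\mid Z_0=V_{k-1}]$. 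Property~\ref{statement:symmetric_prob} says exactly that $P$ is symmetric, hence doubly stochastic, so the uniform law on $\hat{Y}$ is stationary; combined with the uniform start from Property~\ref{statement:equal_prob} this makes every $V_k$ uniform on $\hat{Y}$, so $\Pr[V_k=y]=1/\vert\hat{Y}\vert$ for all $k$.

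Next I would bound the number of excursions needed to reach the target. A union bound over the (uniform) marginals gives $\Pr[N\le t]\le\sum_{k=0}^{t}\Pr[V_k=y]=(t+1)/\vert\hat{Y}\vert$, and summing the tail yields
\[
  \E[N]=\sum_{t\ge0}\Pr[N>t]\ge\sum_{t=0}^{\vert\hat{Y}\vert-1}\left(1-\frac{t+1}{\vert\hat{Y}\vert}\right)=\frac{\vert\hat{Y}\vert-1}{2}=\Omega(\vert\hat{Y}\vert).
\]

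Then I would convert the excursion count into elapsed time. Since all terms are non-negative, Tonelli and the tower property (using that $\{N\ge k\}$ is determined by $V_0,\dots,V_{k-1}$, whereas $T_k$ conditioned on the past has mean $\E[T\mid Z_0=V_{k-1}]$) give
\[
  \E\!\left[\sum_{k=1}^{N}T_k\right]=\sum_{k\ge1}\E\!\left[\mathds{1}_{\{N\ge k\}}\,\E[T\mid Z_0=V_{k-1}]\right]\ge h(M)\sum_{k\ge1}\Pr[N\ge k]=h(M)\,\E[N],
\]
and combining this with the previous estimate on $\E[N]$ produces the claimed bound $h(M)\cdot\Omega(\vert\hat{Y}\vert)$.

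The delicate point is the inequality in the last display, and it is where I would spend the most care. The hypothesis literally provides only $\E[T]\ge h(M)$ for the \emph{uniform} starting distribution, whereas the excursions counted by $N$ all start from non-target states of $\hat{Y}$ \emph{and} are conditioned on $y$ not having been hit yet, which can bias the starting state away from uniform; replacing $\E[T\mid Z_0=V_{k-1}]$ by the average $\E[T]$ is therefore not justified by the mean alone (indeed the return time from $y$ itself never contributes, so a lopsided distribution of per-state return times could otherwise defeat the bound). The resolution is that, by the symmetry underlying indistinguishability, the expected first-return time to $\hat{Y}$ is the same from every state of $\hat{Y}$ and hence equals $\E[T]\ge h(M)$ regardless of which state $V_{k-1}$ is — so the substitution by $h(M)$ inside the sum is valid uniformly over the conditioning. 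Making this ``equal return time from every state of $\hat{Y}$'' statement precise (either as a consequence of an automorphism of $M$ fixing $\hat{Y}$ setwise, or as an explicit hypothesis to be verified in the sorting and \onemax\ instances) is the key step to nail down.
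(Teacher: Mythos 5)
Your argument follows essentially the same route as the paper's proof: decompose the trajectory into excursions between successive visits to $\hat{Y}$, use Statement~\ref{statement:symmetric_prob} to make the induced chain on $\hat{Y}$ symmetric (hence doubly stochastic), conclude that every $V_k$ --- the paper's $Z_{T_k}$ --- is uniform on $\hat{Y}$, deduce by a union bound that $\OMEGA(\vert\hat{Y}\vert)$ excursions are needed before $y$ is hit, and multiply by the per-excursion bound $h(M)$. Your first two stages coincide with the paper's induction and union bound; the computation $\E[N]\ge(\vert\hat{Y}\vert-1)/2$ is just a cleaner packaging of the paper's ``with probability at least $1/2$ the target is not reached within the first $\lfloor\vert\hat{Y}\vert/2\rfloor-1$ visits to $\hat{Y}$''.

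The delicate point you flag is genuine, and your proof as written does not close it. Definition~\ref{defi:symmetric} constrains only the hitting \emph{distribution} $\Pr[Z_T=y_2\mid Z_0=y_1]$, not the conditional expected \emph{duration} $\E[T\mid Z_0=y_1]$; one can easily build a chain in which $P$ is symmetric while the excursion from one state of $\hat{Y}$ is much longer than from another (send one state on a long deterministic detour before it lands uniformly on $\hat{Y}$). So ``the expected first-return time to $\hat{Y}$ is the same from every state of $\hat{Y}$'' is \emph{not} a consequence of indistinguishability, and without it the substitution of $h(M)$ for $\E[T\mid Z_0=V_{k-1}]$ under the conditioning on $\{N\ge k\}$ is unjustified --- exactly as you suspect. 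To be fair, the paper's own proof makes the same leap silently: it passes from $\Pr[\,y\text{ not reached by }T_{\lfloor\vert\hat{Y}\vert/2\rfloor-1}\,]\ge 1/2$ to the bound $\tfrac12\,\E[T_{\lfloor\vert\hat{Y}\vert/2\rfloor-1}]$ on the expected hitting time, which presumes the elapsed time is not negatively correlated with the event of avoiding $y$, i.e., precisely the bias you describe. Your diagnosis is therefore sharper than the paper's treatment, but to obtain a complete proof you must either add ``$\E[T\mid Z_0=\hat y]$ is the same for all $\hat y\in\hat{Y}$'' (equivalently $\ge h(M)$ for every $\hat y$) as a hypothesis --- which does hold in the sorting and \onemax{} applications by the relabeling symmetry of those search spaces --- or derive it from an automorphism of $M$ fixing $\hat{Y}$ setwise, as you propose but do not carry out.
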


\begin{proof}
Let $T_i$ be the stopping time when $\hat{Y}$ is visited the $i$-th time.
\[
    T_i=\min\lbrace t\geq 0 \mid \vert\lbrace k\mid 0\leq k \leq t \wedge Z_k\in\hat{Y}\rbrace\vert\geq i\rbrace \enspace.
\]
With Statement~\ref{statement:equal_prob} of Definition \ref{defi:symmetric}
$Z_0$ is uniformly distributed over $\hat{Y}$. Therefore $T_1=0$ and
$T_2=T$. Statement \ref{statement:symmetric_prob} of Definition
\ref{defi:symmetric} implies that
$\Pr[Z_{T_i}=y]=\1_{y\in\hat{Y}}/\vert\hat{Y}\vert$ for all $i\geq 1$
by the following induction. The base case for $i=1$ and $T_i=0$ is ensured by
the Statement~\ref{statement:equal_prob} of Definition \ref{defi:symmetric}.
The induction hypothesis is
$\Pr[Z_{T_{i-1}}=y]=\1_{y\in\hat{Y}}/\vert\hat{Y}\vert$. The
inductive step is verified by the following series of equations.
\begin{align*}
\Pr[Z_{T_i}=y]
&=
\sum_{\hat{y}\in\hat{Y}}\Pr[Z_{T_{i-1}}=\hat{y}]\cdot \Pr[Z_{T_i}=y\mid Z_{T_{i-1}}=\hat{y}]
\allowdisplaybreaks\\
\stackrel{\mathclap{\text{ind. hyp.}}}{=}\hspace*{1.5em}&
\sum_{\hat{y}\in\hat{Y}}1/\vert\hat{Y}\vert\cdot \Pr[Z_{T_i}=y\mid Z_{T_{i-1}}=\hat{y}]
\allowdisplaybreaks\\
\stackrel{\mathclap{\text{Def.\ref{defi:symmetric} St.\ref{statement:symmetric_prob}}}}{=}\hspace*{1.5em}&
1/\vert\hat{Y}\vert\cdot\sum_{\hat{y}\in\hat{Y}}\Pr[Z_{T_i}=\hat{y}\mid Z_{T_{i-1}}=y] 
=
1/\vert\hat{Y}\vert \enspace.
\end{align*}
It follows that for all $i>0$ the difference $T_{i+1}-T_i$ of two consecutive
stopping times has the same distribution as $T$ and also
\[
    \E[T_{i+1}-T_{i}]=\E[T]\geq h(M)\enspace.
\]
Now let $y\in\hat{Y}$ be fixed. The probability that $y$ is not reached within
the first $T_{\lfloor\vert\hat{Y}\vert/2\rfloor-1}$ steps is bounded from below
through union bound by
\[
  1-\Pr[Z_0=y]-\sum_{i=1}^{\lfloor\vert\hat{Y}\vert/2\rfloor-1}\Pr[Z_{T_i}=y]\geq1/2
\]
and therefore the expected time to reach the fixed $y\in\hat{Y}$ is
bounded from below by
\begin{align*}\frac{1}{2}\cdot
\E[T_{\lfloor\vert\hat{Y}\vert/2\rfloor-1}]
&=\frac{1}{2}\cdot\sum_{i=2}^{\lfloor\vert\hat{Y}\vert/2\rfloor-1}\E[T_i-T_{i-1}]\\
&\geq
\frac{1}{2}\cdot\sum_{i=2}^{\lfloor\vert\hat{Y}\vert/2\rfloor-1}h(M)=h(M)\cdot\OMEGA(\vert\hat{Y}\vert).
\end{align*}
\end{proof}

\subsubsection{Lower Bounds on the Expected Optimization Time for Sorting}

In this section we consider the sorting problem.  Our first goal is to provide lower
bounds on the expected return time to the attractor for the parameter choice
$c\in(0,\frac{1}{2})$.  
\begin{lemma}
\label{lemma:lower_bound_sort_small_c}
Let $c\in(0,\frac12)$. For the sorting problem on $n$ items, assume that the attractor has transposition distance one to the identity permutation. Then the expected return time $H_1$ to the attractor is bounded from below by $\OMEGA(\alpha(c)^n)$, where
$$
\alpha(c)=\left(\frac{1+\sqrt{\frac{1-2c}{2(1-c)}}}{1-\sqrt{\frac{1-2c}{2(1-c)}}}\right)\cdot\exp\left(
-2\sqrt{\frac{c}{1-c}}\arctan\left(\sqrt{\frac{1-2c}{2c}}\right)\right)\enspace.
$$
\end{lemma}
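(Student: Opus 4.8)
The plan is to reduce the claim to the analysis of an instance of the Markov model $\model$ from Section~\ref{sec:model} and then to apply the bounds-by-integration of Theorem~\ref{thm:boundsbyintegration}. Since we seek a \emph{lower} bound on $H_1$, I would instantiate the model with the \emph{largest} possible probabilities of moving towards the attractor, i.e.\ $p_i=\max_{x\in X_i}p_x$; as noted in Section~\ref{sec:model}, using upper bounds on the transition probabilities in the direction of $S_0$ yields a lower bound on the return time. For the sorting problem with parameter $c$ and $x\in X_i$ we have $p_x=c+(1-c)\,|\neighborhood(x)\cap X_{i-1}|/\binom n2$, and the number of improving transpositions is maximised, with value $\binom{i+1}2$, when the difference permutation of $x$ and the attractor is a single long cycle. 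Hence I would work with
\[
  p_i \;=\; c+(1-c)\,\frac{\binom{i+1}2}{\binom n2}\;=\;c+(1-c)\,\frac{i(i+1)}{n(n-1)},
\]
which extends to a non-decreasing function $p$ on $[0,n]$ with $p(n\cdot x)=c+(1-c)x^2+O(1/n)$ uniformly on $[0,1]$.

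Next I would feed this $p$ into Theorem~\ref{thm:boundsbyintegration}, giving $H_1=\OMEGA(\mathrm{base}(p,n)^n)$. The integrand $\ln\frac{1-p(n x)}{p(n x)}$ is positive exactly while $p(n x)<\tfrac12$, so the supremum defining $\mathrm{base}(p,n)$ is attained at the upper limit $t^\ast$ solving $p(n t^\ast)=\tfrac12$, namely $t^\ast=\sqrt{(1-2c)/(2(1-c))}$ — precisely the quantity appearing in $\alpha(c)$. (The probabilities $p_i$ exceed $1$ only for $i$ far above $n t^\ast$, outside the relevant range, so this causes no difficulty.) Writing $a^2=c/(1-c)$ turns the integrand into $\ln\frac{1-x^2}{a^2+x^2}$, and the core computation is
\[
  \ln\alpha(c)\;=\;\int_0^{t^\ast}\ln\!\Big(\frac{1-x^2}{a^2+x^2}\Big)\,\mathrm d x .
\]
Using the antiderivatives $\int\ln(1-x^2)\,\mathrm d x=(1+x)\ln(1+x)-(1-x)\ln(1-x)-2x$ and $\int\ln(a^2+x^2)\,\mathrm d x=x\ln(a^2+x^2)-2x+2a\arctan(x/a)$, I would evaluate at $t^\ast$ and simplify using the two identities $1-(t^\ast)^2=a^2+(t^\ast)^2=\frac1{2(1-c)}$, which both follow from $p(n t^\ast)=\tfrac12$. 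These force $2(1-c)\bigl(1-(t^\ast)^2\bigr)=1$, so the $t^\ast\ln(\cdot)$ terms cancel and only $\ln\frac{1+t^\ast}{1-t^\ast}-2a\arctan(t^\ast/a)$ survives; since $t^\ast/a=\sqrt{(1-2c)/(2c)}$ and $2a=2\sqrt{c/(1-c)}$, this is exactly $\ln\alpha(c)$.

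Finally I would control the passage from the discrete product to the continuous integral. Starting from Equation~\eqref{eq:partialunfold} with $k=\lfloor n t^\ast\rfloor$ gives $H_1\ge\prod_{j=1}^{k-1}\frac{1-p_j}{p_j}=\exp\!\big(\sum_{j<k}\tau(j)\big)$ with $\tau(j)=\ln\frac{1-p_j}{p_j}>0$ decreasing in the relevant range; monotonicity bounds the sum below by its integral up to an additive $\tau(0)=O(1)$, and replacing $p(n x)$ by $c+(1-c)x^2$ costs another $n\cdot O(1/n)=O(1)$ in the exponent, since $\partial_p\ln\frac{1-p}p$ is bounded for $p\in[c,\tfrac12]$. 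Thus $\ln H_1\ge n\ln\alpha(c)-O(1)$, i.e.\ $H_1=\OMEGA(\alpha(c)^n)$. The main obstacle I anticipate is twofold, and both parts are essentially bookkeeping once the structure is seen: first, the exact evaluation of the integral, whose nontrivial step is spotting that $p(n t^\ast)=\tfrac12$ simultaneously makes $1-(t^\ast)^2$ and $a^2+(t^\ast)^2$ equal to $\frac1{2(1-c)}$, which is exactly what collapses the logarithmic boundary terms into the clean factor $\frac{1+t^\ast}{1-t^\ast}$; and second, ensuring the discrete-to-continuous error stays additive $O(1)$ in the exponent (a constant factor in $H_1$), so that raising the per-$n$ base to the $n$-th power does not degrade the bound below $\alpha(c)^n$.
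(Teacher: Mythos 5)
Your proposal is correct and follows essentially the same route as the paper: it instantiates the Markov model with the maximal improvement count $\binom{i+1}{2}$ (bounding the probability by $c+(1-c)x^2$ in normalized coordinates), applies Theorem~\ref{thm:boundsbyintegration}, locates the optimal cutoff at $t^\ast=\sqrt{(1-2c)/(2(1-c))}$, and evaluates the integral via the same antiderivatives and the same cancellation $1-(t^\ast)^2=a^2+(t^\ast)^2=\tfrac{1}{2(1-c)}$. The only cosmetic difference is that the paper avoids your $O(1/n)$ discrete-to-continuous correction by directly bounding $i(i-1)/(n(n-1))\le i^2/n^2$, but your error analysis is valid and yields the same constant-factor loss.
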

See Figure~\ref{fig:alphabeta} for a visualization of $\alpha(c)$.
\begin{proof}
The probability of decreasing the distance to the attractor in state $S_i$ can be bounded from above by 
$$
p_{i-1} \le c+(1-c)\cdot\frac{\binom{i}{2}}{\binom{n}{2}} = c+(1-c)\cdot\frac{i(i-1)}{n(n-1)} \le c+(1-c)\cdot\frac{i^2}{n^2}\enspace.
$$
We increase all indices by one such that $\tilde p_i=p_{i-1}$ such that we have $n$ states again.
Please note that $H_2=\Omega(H_1)$. This can be obtained by the following equations while using Equation~\ref{eq:recreform} and the fact that $H_1\geq \frac{1}{1-c}$ is true in this case
$$
H_2
=\frac{p_1}{1-p_1}H_1-\frac{1}{1-p_1}
\geq \frac{c}{1-c}H_1-\frac{1}{1-c-o(1)}=\Omega(H_1)\enspace.$$
We use Theorem~\ref{thm:boundsbyintegration} to get a lower bound on $H_1$ by using $p(i)=c+(1-c)\cdot\frac{i^2}{n^2}$.
Here $k=n\cdot \sqrt{\frac{1-2c}{2(1-c)}}$ maximizes the integral, because it solves the equation $p(k)=\frac{1}{2}$.
An application of Theorem~\ref{thm:boundsbyintegration} supplies
$$
H_1\!=\!\OMEGA\left(
\exp \left(\int_0^{\sqrt{\frac{1-2c}{2(1-c)}}}\ln\left(\frac{1-c-(1-c)x^2}{c+(1-c)x^2}\right)\mathrm{d}x\right)^{\!\!\!n}\right)\enspace .
$$
In the following we calculate the exact value of this integral.
The integrand can be converted to the expression
\begin{align*}
&\ln\left(\frac{1-c-(1-c)x^2}{c+(1-c)x^2}\right)
=\ln\left(\frac{1-x^2}{\frac{c}{1-c}+x^2}\right)
=\ln(1-x^2)-\ln\left(\frac{c}{1-c}+x^2\right)\enspace.
\end{align*}
The indefinite integral of $\ln(1-x^2)$ is
$$
x\cdot \ln(1-x^2)-2x+\ln\left(\frac{1+x}{1-x}\right)\enspace.
$$
It can be evaluated for values $x\in[0,1[$, but this is fine as $0\leq k/n<1$.
Furthermore the indefinite integral of $\ln\left(\frac{c}{1-c}+x^2\right)$ is
$$
x\cdot \ln\left(\frac{c}{1-c}+x^2\right)-2x+2\sqrt{\frac{c}{1-c}}\arctan\left({x}\cdot{\sqrt{\frac{1-c}{c}}}\right)\enspace,
$$
which can be evaluated for all values, because $\frac{c}{1-c}$ is positive.
The indefinite integral of the whole expression is obtained by the subtraction of both
\begin{align*}
&x\cdot \ln(1-x^2)+\ln\left(\frac{1+x}{1-x}\right)
-x\cdot \ln\left(\frac{c}{1-c}+x^2\right)
-2\sqrt{\frac{c}{1-c}}\arctan\left({x}\cdot{\sqrt{\frac{1-c}{c}}}\right)
\end{align*}
and evaluation of the bounds $k/n=\sqrt{\frac{1-2c}{2(1-c)}}$ and $0$ results in
\begin{align*}
  &\left[\sqrt{\frac{1-2c}{2(1-c)}}\cdot \ln\left(1-\frac{1-2c}{2(1-c)}\right)+\ln\left(\frac{1+\sqrt{\frac{1-2c}{2(1-c)}}}{1-\sqrt{\frac{1-2c}{2(1-c)}}}\right)\right.\allowdisplaybreaks[0]\\
&\phantom{\bigg[}
  -\sqrt{\frac{1-2c}{2(1-c)}}\cdot \ln\left(\frac{c}{1-c}+\frac{1-2c}{2(1-c)}\right)\allowdisplaybreaks[0]\\
&\phantom{\bigg[}\left.
-2\sqrt{\frac{c}{1-c}}\arctan\left(\frac{\sqrt{\frac{1-2c}{2(1-c)}}}{\sqrt{\frac{c}{1-c}}}\right)\right]-\left[0+\ln(1)-0-0\right]\\
=&\sqrt{\frac{1-2c}{2(1-c)}}\cdot \ln\left(\frac{1}{2(1-c)}\right)+\ln\left(\frac{1+\sqrt{\frac{1-2c}{2(1-c)}}}{1-\sqrt{\frac{1-2c}{2(1-c)}}}\right)
\\&-\sqrt{\frac{1-2c}{2(1-c)}}\cdot \ln\left(\frac{1}{2(1-c)}\right)
\\&-2\sqrt{\frac{c}{1-c}}\arctan\left(\sqrt{\frac{1-2c}{2c}}\right)\\
=
&\ln\left(\frac{1+\sqrt{\frac{1-2c}{2(1-c)}}}{1-\sqrt{\frac{1-2c}{2(1-c)}}}\right)
-2\sqrt{\frac{c}{1-c}}\arctan\left(\sqrt{\frac{1-2c}{2c}}\right)\enspace.
\end{align*}
An application of the $\exp$ function on this result gives the claimed lower bound.
\end{proof}

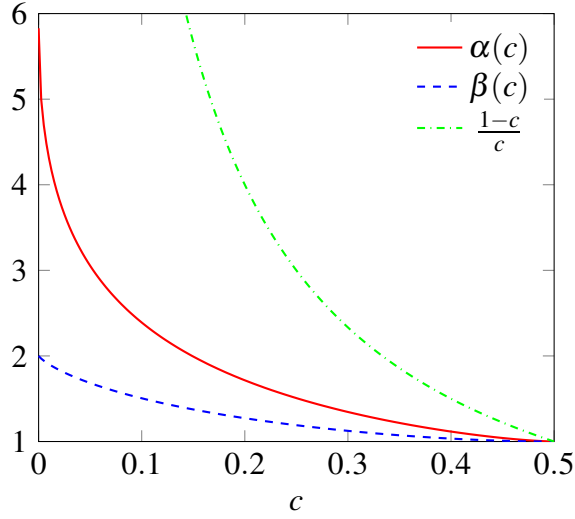
\begin{figure}[tb]
  \centering
  \begin{tikzpicture}[
  baseline,
  declare function={ arctanh(\x) = 0.5*ln((1+\x)/(1-\x));
                     arctan(\x)  = rad(atan(\x));}
  ]
	\begin{axis}[
		xlabel={$c$},
		ymin=1, ymax=6,
		xmin=0, xmax=0.5,
        samples=200,
		legend entries={$\alpha(c)$,$\beta(c)$,$\frac{1-c}{c}$},
        legend style={draw=none},
	  ]
	  \addplot[red,thick,domain=0:0.5]  {( (1+ sqrt((1-2*x)/(2*(1-x))))/(1- sqrt((1-2*x)/(2*(1-x))))*exp( -2*sqrt(x/(1-x))*arctan(sqrt((1-2*x)/(2*x)))))};
	  \addplot[dashed, thick, blue,domain=0:0.5] {( 2^(1/(1-x))*(1-x)*(x^(x/(1-x))))};
	  \addplot[dashdotted, thick, green,domain=0:0.5] {(1-x)/x};
	\end{axis}
  \end{tikzpicture}
  \caption{The functions $\alpha(c)$, $\beta(c)$ and $\frac{1-c}{c}$ for $c \in (0,\frac{1}{2})$}
  \label{fig:alphabeta}
\end{figure}

This lower bound is the best possible bound which can be achieved with this
model as the probability $p_i = c+(1-c)\cdot{\binom{i+1}{2}}/{\binom{n}{2}}$
actually appears at distance $i$ if the permutation transforming the current
position to the attractor consists of one cycle of length $i+1$ and the
remaining permutation consists of singleton cycles. For this improvement
probability the bound is $\Theta^*(\alpha(c)^n)$.

The following theorem supplies lower bounds on the expected optimization
time of \onepso on the sorting problem.

\begin{theorem}
  The expected optimization time $T_{\rm sort}(n)$ of the \onepso sorting $n$
  items is bounded from below by
  \[
	T_{\rm sort}(n) = 
	\begin{cases}
    \Omega(n^2)	&	\text{if }c \in (\frac{1}{2}, 1] \\
  \Omega(n^{\frac{8}{3}})	&	\text{if }c = \frac{1}{2} \\
\Omega \left(\alpha(c)^n\cdot n^2\right)	&	\text{if }c \in (0, \frac{1}{2})\\
	  \Omega\left(n!\right)	&	\text{if $c =0 \enspace$.}
	\end{cases}
  \]
  \label{thm:onepso:sortinglb}
\end{theorem}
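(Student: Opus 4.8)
The plan is to apply the indistinguishable-states bound of Theorem~\ref{thm:lower_bound} to the chain of \onepso positions, after first freezing the attractor one step away from the optimum $o$ (the identity permutation). Since a transposition changes the transposition distance by exactly $\pm1$, the attractor's objective value is non-increasing and can drop to $0$ only from $1$: at the first moment $\tau_1$ at which the current position attains value~$1$ it is a neighbour of $o$, the attractor is set to this permutation $a$ with $\dist(a,o)=1$, and --- as the paper updates only on a strict improvement --- the attractor stays equal to $a$ until $o$ itself is sampled. Hence the time from $\tau_1$ until $o$ is found is exactly a hitting time of $o$ in the chain with frozen attractor $a$, and it is a lower bound on $T_{\rm sort}(n)$. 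At $\tau_1$ the position equals $a$, so one step later it is a uniformly random neighbour of $a$; this supplies the uniform start over $X_1=\{x:\dist(x,a)=1\}$ required by part~\ref{statement:equal_prob} of Definition~\ref{defi:symmetric}.

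I would take $\hat Y=X_1$. Its size is $|X_1|=\binom n2=\THETA(n^2)$, because the $\binom n2$ permutations $a\tau$ ($\tau$ a transposition) are distinct, all lie at distance~$1$, and $o=a\cdot a\in X_1$. For the symmetry in part~\ref{statement:symmetric_prob} of Definition~\ref{defi:symmetric} I would use that every map $x\mapsto gxh^{-1}$ with $g,h\in S_n$ is an automorphism of the transposition Cayley graph (conjugation sends transpositions to transpositions). Those fixing $a$ (equivalently $g=aha$) preserve each level $X_i$ and act on $X_1$, identified with the set of transpositions via $a\tau\leftrightarrow\tau$, by $a\tau\mapsto a\,(h\tau h^{-1})$, i.e.\ by conjugation on the label~$\tau$. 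This action is transitive, so the first-return distribution on $X_1$ is uniform and, as in the proof of Theorem~\ref{thm:lower_bound}, every visit to $X_1$ is a uniform point of $X_1$; the optimum $o$ is thus only one of $\THETA(n^2)$ indistinguishable candidates.

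It remains to give the return-time input $h(M)$. Writing $p_1=c+(1-c)/\binom n2$ for the common probability of stepping from $X_1$ back to $a$ and $H_2$ for the expected time from $S_2$ to $S_1$, a first return to $X_1$ costs at least $(1-p_1)H_2$, and Equation~\eqref{eq:recreform} gives $H_2=\frac{p_1}{1-p_1}H_1-\frac1{1-p_1}=\OMEGA(H_1)$, which is precisely the estimate recorded in Lemma~\ref{lemma:lower_bound_sort_small_c}. Here $H_1$ is the attractor return time for the largest admissible probabilities $p_i\le c+(1-c)\binom{i+1}2/\binom n2$, which only underestimates the true return time. Theorem~\ref{thm:returntime} gives $H_1=\THETA(1)$ for $c\in(\tfrac12,1]$; Theorem~\ref{thm:h1_quad_theta} with $A(i)=\binom{i+1}2$ (degree $d=2$) gives $H_1=\THETA(n^{2/3})$ for $c=\tfrac12$; and Lemma~\ref{lemma:lower_bound_sort_small_c} gives $H_1=\OMEGA(\alpha(c)^n)$ for $c\in(0,\tfrac12)$. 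Feeding $h(M)=\OMEGA(H_1)$ and $|\hat Y|=\THETA(n^2)$ into Theorem~\ref{thm:lower_bound} yields $T_{\rm sort}(n)=\OMEGA(H_1\cdot n^2)$, i.e.\ the claimed $\OMEGA(n^2)$, $\OMEGA(n^{8/3})$ and $\OMEGA(\alpha(c)^n n^2)$. For $c=0$ the attractor plays no role: \onepso is a uniform random walk on the vertex-transitive Cayley graph started uniformly, so I would instead take $\hat Y=S_n$ with $h(M)=1$ (every step lands in $\hat Y$) and $|\hat Y|=n!$, giving $\OMEGA(n!)$.

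The main obstacle is part~\ref{statement:symmetric_prob} of Definition~\ref{defi:symmetric}. Transitivity of the attractor-fixing automorphisms on $X_1$ makes the first-return kernel doubly stochastic, which is what the induction in the proof of Theorem~\ref{thm:lower_bound} really uses to keep each visit uniform; the literal symmetry $\Pr[Z_T=y_2\mid Z_0=y_1]=\Pr[Z_T=y_1\mid Z_0=y_2]$ is stronger, since the conjugation action on the transpositions is transitive but not $2$-transitive (it cannot exchange an intersecting pair of transpositions with a disjoint one). Establishing the symmetric-kernel condition cleanly --- via generous transitivity, a reversibility argument on the levels $\ge1$, or by passing to the level-symmetrised model that only decreases the return time --- together with checking that the worst-case probabilities used for $h(M)$ never over-estimate the true return time, is where the real care is needed.
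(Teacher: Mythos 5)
Your proposal follows essentially the same route as the paper: reduce to the last improvement with the attractor frozen at distance one from the optimum, take $\hat Y=X_1$ with $|\hat Y|=\binom n2$, bound the return time to $\hat Y$ by $\OMEGA(H_2)=\OMEGA(H_1)$ using the largest admissible transition probabilities, and feed Lemma~\ref{lemma:lower_bound_sort_small_c}, Theorem~\ref{thm:h1_quad_theta} and Theorem~\ref{thm:returntime} into Theorem~\ref{thm:lower_bound}; the $c=0$ case via $\hat Y=Y$ and $h(M)=1$ is also identical. The one step you flag as an open obstacle --- verifying statement~\ref{statement:symmetric_prob} of Definition~\ref{defi:symmetric} --- is not actually a gap: the paper closes it by ``renaming the permutation indices,'' which in your language is exactly the generous-transitivity route you mention, since for any two transpositions $\tau_1,\tau_2$ there is an $h\in S_n$ with $h\tau_1h^{-1}=\tau_2$ and $h\tau_2h^{-1}=\tau_1$, so the attractor-fixing automorphism $x\mapsto (aha^{-1})xh^{-1}$ swaps $y_1$ and $y_2$ and yields the literal symmetry of the first-return kernel; your alternative observation that mere transitivity already makes the kernel doubly stochastic, which is all the induction in Theorem~\ref{thm:lower_bound} uses, is an equally valid (and arguably cleaner) fix. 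The only detail you omit is the probability-$1/n!$ event that the initial position is already optimal, which the paper dispatches in one line before conditioning on its complement.
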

\begin{proof}
  The situation where already the initial position is the optimum has probability $1/n!$.
  As $1-1/n!>1/2$ for $n\geq 2$ we have the same $\Omega$ bound if we ignore this case.
  In all other cases we can consider the situation that the attractor has just been updated to a solution
  that has distance one to the optimum. Without loss of generality, we assume
  that the attractor is the identity permutation and the optimum is the
  transposition $(0\,1)$. The number of steps required for the next (hence
  final) improvement of the attractor is a lower bound on the expected optimization
  time for the \onepso. We determine a lower bound on this number for various
  choices of~$c$.

  For all $c \in (0,1]$ we apply Theorem~\ref{thm:lower_bound}. We use all
  permutations as set of states $Y$ in the Markov process $M$. Let
  $\hat{Y}=X_1$ be the subset of states which are a single swap away from the
  attractor. Therefore the optimal solution is contained in $\hat{Y}$, but up
  to the point when the \onepso reaches the optimal solution it is
  indistinguishable from all other permutations in $\hat{Y}$. We will
  immediately prove that $\hat{Y}$ is actually indistinguishable with respect
  to $M$.  Initially the particle is situated on the attractor and after a
  single step it is situated at a permutation in $\hat{Y}$, where each
  permutation has equal probability. We use the permutation after the first
  step as the initial state of the Markov process $Z_0$ and all other $Z_i$ are
  the successive permutations. Therefore Statement~\ref{statement:equal_prob}
  of Definition~\ref{defi:symmetric} is fulfilled.  Let $T=\min\lbrace t>0\mid
  Z_t\in\hat{Y}\rbrace$ the stopping time of Theorem~\ref{thm:lower_bound}. For
  each sequence of states $Z_0,\ldots,Z_T$ there is a one to one mapping to a
  sequence $\tilde Z_0=Z_T,\tilde Z_1,\ldots,\tilde Z_{T-1},\tilde Z_{T}=Z_0$
  which has equal probability to appear. The sequence $\tilde Z_0,\ldots,\tilde
  Z_T$ is not the reversed sequence, because the forced steps would then lead
  to the wrong direction, but the sequence can be obtained by renaming the
  permutation indices. The renaming is possible because the permutations $Z_0$
  and $Z_T$ are both single swaps. As this one to one mapping exists also the
  Statement~\ref{statement:symmetric_prob} of Definition~\ref{defi:symmetric}
  is fulfilled. Finally we need a bound on the expectation of $T$. If we are in
  $X_1=\hat{Y}$ we can either go to the attractor by a forced move or random
  move and return to $X_1$ in the next step or we can go to $X_2$ by a random
  move and return to $X_1$ in expectation after $H_2$ steps.
  We have $\E[T]=\left(c+(1-c)/\binom{n}{2}\right)\cdot 2 +
  (1-c)\cdot\left(1-1/\binom{n}{2}\right)(1+H_2)=\Omega(H_2)=:h(M)$.
  Theorem~\ref{thm:lower_bound} provides the lower bound
  $\OMEGA(\vert\hat{Y}\vert\cdot H_2)$ for the runtime to find the fixed
  permutation $(0,1)\in\hat{Y}$ which is the optimal solution. From
  Equation~\ref{eq:recreform} we get $H_2=(p_1\cdot H_1-1)/(1-p_1)\geq(c\cdot
  H_1-1)/(1-c)$. As $H_1=\OMEGA(n^{2/3})$ for $c=\frac{1}{2}$ (see
  Theorem~\ref{thm:h1_quad_theta}) and $H_1=\OMEGA(\alpha(c)^n)$ for
  $c\in(0,\frac{1}{2})$ (see Lemma~\ref{lemma:lower_bound_sort_small_c}) also
  $H_2=\OMEGA(H_1)$ for $c\in (0,\frac{1}{2}]$ which results in the lower
  bounds $T_{\rm sort}(n)=\OMEGA(\vert\hat{Y}\vert\cdot
  H_1)=\OMEGA(\binom{n}{2}\cdot n^{2/3})=\OMEGA(n^{8/3})$ for $c=\frac{1}{2}$
  and $T_{\rm sort}(n)=\OMEGA(\vert\hat{Y}\vert\cdot
  H_1)=\OMEGA(\binom{n}{2}\cdot \alpha(c)^n)=\OMEGA(n^2\cdot \alpha(c)^n)$ for
  $c\in(0,\frac{1}{2})$. Trivially the return time to $X_1$ in $M$ can be
  bounded by $2$, which results in the lower bound $T_{\rm
  sort}(n)=\OMEGA(n^2)$ for the case $c\in(\frac{1}{2},1]$.

  The lower bound for $c=0$ can be
  derived directly from the indistinguishability property: Let $\hat{Y} =
  Y$. It is readily verified that the initial state is uniformly distributed
  over $\hat{Y}$. Furthermore, any $\hat{Y}$-$\hat{Y}$-path can be reversed and
  has the same probability to occur. Therefore,
  Condition~\ref{statement:symmetric_prob} of Definition~\ref{defi:symmetric}
  is satisfied and the lower runtime bound follows from
  Theorem~\ref{thm:lower_bound} by choosing $h(M) = 1$.
\end{proof}

Beside that formally proved lower bounds we conjecture the following lower
bounds on the expected optimization time of \onepso for sorting $n$ items.
\begin{conjecture}
  \label{con:sorting:lb}
  \[
	T_{\rm sort}(n) = 
	\begin{cases}
	  \Omega(n^2)	&	\text{if $c \in (\frac{1}{2}, 1]$} \\
	  \Omega(n^{3})	&	\text{if $c = \frac{1}{2}$} \\
      \Omega \left(\left(\frac{1-c}{c}\right)^n\cdot n^2\right)	&	\text{if $c \in (0, \frac{1}{2})$}\enspace.
	\end{cases}
  \]
\end{conjecture}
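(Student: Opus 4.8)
The plan is to prove the stronger bounds by refining the distance-only Markov model of Section~\ref{sec:model}, which is the sole source of slack. As noted after Lemma~\ref{lemma:lower_bound_sort_small_c}, the base $\alpha(c)$ (and, for $c=\tfrac{1}{2}$, the exponent $8/3$) arises only because the lower bound on $H_1$ uses the \emph{largest} transition probability $p_i=c+(1-c)\binom{i+1}{2}/\binom{n}{2}$ available at distance $i$, corresponding to the very special difference permutation that is a single $(i{+}1)$-cycle. The conjectured bounds are exactly what the \emph{minimal} transition probability $p_i=c+(1-c)\,i/\binom{n}{2}$ yields: plugging $p_i\approx c$ into Theorem~\ref{thm:returntime} gives $H_1=\Omega(((1-c)/c)^n)$ for $c\in(0,\tfrac{1}{2})$, and the model $\model((\tfrac{1}{2}(1+i/\binom{n}{2}))_{1\leq i\leq n})$ gives $H_1=\Omega(n)$ for $c=\tfrac{1}{2}$ by Corollary~\ref{cor:returntime}. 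Multiplying either by $|\hat{Y}|=\binom{n}{2}=\Theta(n^2)$ through Theorem~\ref{thm:lower_bound} produces the claimed $\Omega(((1-c)/c)^n n^2)$ and $\Omega(n^3)$. Hence the whole task reduces to showing that the \emph{actual} return time obeys the lower bound predicted by the minimum-probability model, i.e.\ that the true $H_1$ (equivalently $H_2$) matches its own upper bound up to constants.

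To achieve this I would abandon the one-dimensional distance chain and instead track the cycle type of the difference permutation $\sigma=x\circ a^{-1}$. By the argument in the proof of Lemma~\ref{thm:probabilities}, if $\sigma$ has non-trivial cycles of lengths $k_1,\dots,k_m$ then the distance is $i=\sum_j(k_j-1)$ and the number of improving transpositions is $L=\sum_j\binom{k_j}{2}$, so the true probability of a step toward the attractor is $c+(1-c)L/\binom{n}{2}$. A toward-attractor step (probability $c$) applies a uniformly random improving transposition, which \emph{splits} a cycle and tends to decrease $L$; a random step (probability $1-c$) applies a uniform transposition, which merges two cycles iff its endpoints lie in different cycles of $\sigma$ and splits a cycle otherwise. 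Starting from distance one (a single $2$-cycle, $L=1$), I would set up a drift argument on $L$ showing that, while $i$ remains $o(n)$, a random step picks two of the $n-O(i)$ fixed points and creates a fresh $2$-cycle with probability $1-O(i/n)$, and only rarely extends an existing non-trivial cycle. Consequently $L=O(i)$ holds along a typical trajectory, so $L/\binom{n}{2}=O(1/n)$ and the effective transition probabilities never rise appreciably above the minimum $c+(1-c)i/\binom{n}{2}$. On such trajectories the recurrence~\eqref{eq:recreform} is governed by probabilities $\approx c$ (resp.\ $\approx\tfrac{1}{2}(1+i/\binom{n}{2})$), reproducing the minimum-probability model and hence the conjectured value of $H_1$.

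The hard part will be controlling the cycle dynamics rigorously over the \emph{entire} time horizon rather than for a single step. For $c=\tfrac{1}{2}$ the horizon is polynomial, so a second-moment or Azuma-type concentration bound on $L$, combined with a coupling of the conditioned walk to $\model((\tfrac{1}{2}(1+i/\binom{n}{2}))_{1\leq i\leq n})$, looks feasible, and the improvement from $n^{8/3}$ to $n^3$ follows once $H_1=\Omega(n)$ is established. The genuinely delicate case is $c\in(0,\tfrac{1}{2})$, where the return time and therefore the relevant horizon are \emph{exponential}: even a per-step probability $O(i/n)$ of coalescing cycles into a large one can, over exponentially many steps, occasionally produce configurations with $L=\Theta(i^2)$ that shortcut the walk back to the attractor. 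One must show that these atypical excursions are rare enough not to pull $\E[T]$ below $\Omega(((1-c)/c)^n)$, for instance by a change-of-measure argument bounding the contribution of trajectories that ever violate $L=O(i)$. Establishing this concentration under the drift toward the attractor over an exponentially long horizon is precisely the obstacle that keeps these bounds at the level of a conjecture, and I expect it to absorb the bulk of the work; the indistinguishability reduction and the recurrence analysis are, by contrast, already available from Theorems~\ref{thm:lower_bound} and~\ref{thm:returntime}.
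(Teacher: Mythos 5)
You have correctly located the source of the slack: the proven lower bounds use the best-case transition probability $p_i=c+(1-c)\binom{i+1}{2}/\binom{n}{2}$, and the conjectured bounds are what the worst-case probability $c+(1-c)i/\binom{n}{2}$ would yield via Theorem~\ref{thm:lower_bound} and the return-time machinery. But be aware that the paper does not prove this statement at all --- it is explicitly a conjecture, supported only by the averaging argument of~\ref{subsec:approxOptTime}: Theorem~\ref{thm:average_improvement} computes the \emph{exact} level-averaged probability $\hat p_i$ via Stirling numbers of the first kind, and the quotients $q_{ex}$ and $q_{av}$ are then compared numerically, with $q_{av}(n,c)$ observed to approach $(1-c)/c$. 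Your cycle-type drift programme is a genuinely different (and more ambitious) route toward an actual proof, but as written it is a plan with the decisive step missing, and you say so yourself: the concentration of $L=\sum_j\binom{k_j}{2}$ under the walk's own occupation measure, over a horizon that is exponential for $c<\frac12$, is exactly the open problem, not a detail to be filled in.

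There is a second gap beyond the one you acknowledge. Your drift heuristic ("a random step picks two of the $n-O(i)$ fixed points and creates a fresh $2$-cycle with probability $1-O(i/n)$") is only valid while $i=o(n)$, but the discrepancy between $\alpha(c)$ and $(1-c)/c$ (and between $n^{2/3}$ and $n$ for $c=\frac12$) is generated entirely by the levels $i=\Theta(n)$: in Equation~\eqref{eq:recclosed} the product $\prod_{j\le k}\frac{1-p_j}{p_j}$ must be carried to $k$ of order $n$ to produce the base $(1-c)/c$, and at distance $\gamma n$ a constant fraction of elements are non-fixed, so a uniform transposition merges cycles with constant probability and long cycles \emph{do} form. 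What is actually needed there is not $L=O(i)$ but only $L=o(n^2)$ on average along the trajectory --- a weaker and more plausible claim, consistent with the paper's averaged $\hat p_i$, but one your fresh-$2$-cycle argument does not address. Note also that for a \emph{lower} bound on $\E[T]$ it suffices to exhibit an event of probability $\Omega(1)$ on which the return time is large, which may let you condition on the cycle profile staying subcritical rather than proving concentration for all trajectories; this could ease the change-of-measure step you anticipate. As it stands, however, the proposal establishes nothing beyond what Theorem~\ref{thm:onepso:sortinglb} already proves, and the conjecture remains open.
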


Note that these lower bounds differ from our upper bounds given in
Theorem~\ref{thm:onepso:sorting} only by a $\log$-factor. Evidence supporting this conjecture is given in
Appendix~\ref{subsec:approxOptTime}. We obtain our theoretical evidence by
considering the \emph{average} probability to move towards the attractor,
instead of upper and lower bounds as before.

\subsubsection{Lower Bounds on the Expected Optimization Time for \onemax}
First we provide a lower bound on the expected return time to the attractor.
\begin{lemma}
\label{lemma:lower_bound_1max_small_c}
Let $c\in(0,\frac{1}{2})$. For \onemax, assume that
the attractor has Hamming distance one to the optimum $1^n$. Then
the expected return time $H_1$ to the attractor is bounded from below by
$H_1=\OMEGA(\beta(c)^n)$, where
\[
\beta(c)=2^{{1}/({1-c})}\cdot (1-c)\cdot c^{{c}/({1-c})}\enspace.
\]
\end{lemma}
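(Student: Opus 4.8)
The plan is to observe that this lower bound follows almost immediately from Theorem~\ref{thm:boundsbyintegration}, reusing the base computation already performed for the matching upper bound in Theorem~\ref{thm:onepso:onemax}. First I would identify the relevant Markov model. With the attractor $g$ fixed at Hamming distance one from $1^n$, I track the distance to $g$: any position $x$ at Hamming distance $i$ from $g$ has $n = \vert\neighborhood(x)\vert$ single-bit-flip neighbors, of which exactly $i$ decrease the distance to $g$ (flip one of the $i$ disagreeing bits) and $n-i$ increase it. Since a single bit flip changes the Hamming distance by exactly one, Remark~\ref{rem:noInternalTransitions} holds and there are no internal transitions, so $p_x = c + (1-c)\cdot i/n$ holds \emph{exactly} for every $x \in X_i$. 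The relevant model is therefore $M = \model((p_i)_{1 \le i \le n})$ with $p_i = c + (1-c)\cdot i/n$. In contrast to the sorting analogue in Lemma~\ref{lemma:lower_bound_sort_small_c}, these probabilities do not vary with the specific $x \in X_i$, so no worst-case bounding of the transition probabilities is required.

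Next I would extend $p_i$ to the continuous non-decreasing function $p(x) = c + (1-c)\cdot x/n$ on $[0,n]$, which takes values in $(0,1]$ because $p(0) = c > 0$ and $p(n) = 1$, matching the hypotheses of Theorem~\ref{thm:boundsbyintegration}. Applying the lower-bound part of that theorem gives $H_1 = \OMEGA(\mathrm{base}(p,n)^n)$.

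It then remains to evaluate $\mathrm{base}(p,n)$. The integrand $\ln\big((1-p(n\cdot x))/p(n\cdot x)\big)$ is non-negative precisely when $p(n\cdot x) \le 1/2$, so by the final assertion of Theorem~\ref{thm:boundsbyintegration} the supremum defining $\mathrm{base}(p,n)$ is attained at $k/n = (1-2c)/(2(1-c))$, the solution of $p(k) = 1/2$. But this is exactly the integral already evaluated in Equation~\eqref{eq:calculatebaseonemax} in the proof of Theorem~\ref{thm:onepso:onemax}, giving $\mathrm{base}(p,n) = 2^{1/(1-c)}\cdot(1-c)\cdot c^{c/(1-c)} = \beta(c)$. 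Substituting yields $H_1 = \OMEGA(\beta(c)^n)$, as claimed; since the model is exact, the full $\Theta^*(\beta(c)^n)$ estimate of Theorem~\ref{thm:boundsbyintegration} in fact holds here.

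I do not anticipate a serious obstacle. The integral itself need not be recomputed, being identical to the one in the upper-bound proof. The only point deserving care is the verification that the transition probabilities are exact for \onemax (unlike the permutation-distance case of sorting), which is precisely what lets the lower bound drop out directly from the $\Theta^*$-estimate rather than from a separate worst-case argument on the transition probabilities.
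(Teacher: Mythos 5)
Your proposal is correct and follows exactly the paper's route: the paper's own proof of this lemma consists of just two sentences, invoking Theorem~\ref{thm:boundsbyintegration} and noting that $\beta(c)$ was already computed in Equation~\eqref{eq:calculatebaseonemax}. Your write-up simply fills in the details (exactness of $p_i = c + (1-c)\cdot i/n$ for the hypercube, applicability of the hypotheses of the integration theorem, the maximizing $k$), all of which are accurate.
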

See Figure~\ref{fig:alphabeta} for a visualization of $\beta(c)$.
\begin{proof}
We use Theorem~\ref{thm:boundsbyintegration}. The value $\beta(c)$ is already calculated in Theorem~\ref{thm:onepso:onemax} Equation~\ref{eq:calculatebaseonemax}.
\end{proof}

This result enables us to prove lower bounds on $T_{\onemax}(n)$.
\begin{theorem}
  The expected optimization time $T_{\onemax}(n)$ of the \onepso for solving
  \onemax is bounded from below by
  \[
	T_{\onemax}(n) =
	\begin{cases}
	  \Omega(n \log n)	&	\text{if $c \in (\frac{1}{2}, 1]$}\\
	  \Omega(n^\frac{3}{2})	&	\text{if $c = \frac{1}{2}$}\\
	  \Omega\left(\beta(c)^n\cdot n\right)	&	\text{if $c \in (0, \frac{1}{2})$}\\
	  \Omega\left(2^n\right)	&	\text{if $c =0 \enspace$.}
	\end{cases}
  \]
  \label{thm:onepso:onemaxlb}
\end{theorem}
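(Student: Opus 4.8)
The plan is to mirror the proof of Theorem~\ref{thm:onepso:sortinglb} for sorting: apply the indistinguishability machinery of Theorem~\ref{thm:lower_bound} to the \emph{last} improvement of the attractor, and feed in the \onemax-specific return-time bounds from Lemma~\ref{lemma:sqrt_return} and Lemma~\ref{lemma:lower_bound_1max_small_c}. As in the sorting case, the event that the initial position is already the optimum has probability $2^{-n}<1/2$ and can be dropped without changing any $\Omega$-bound. For $c\in(0,1]$ I would condition on the phase in which the attractor has just been set to a string $g$ at Hamming distance one from $1^n$; since the position moves by single bit flips, the attractor must be at distance one at some point before it can reach the optimum, so this is the generic final phase and the time of this last improvement lower-bounds $T_{\onemax}(n)$.

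Next I would take $\hat Y=X_1$, the $n$ neighbours of $g$ in the hypercube, which contains the optimum (flip the unique $0$-bit of $g$) as one of its members. Condition~\ref{statement:equal_prob} of Definition~\ref{defi:symmetric} holds because at $g$ the ``towards attractor'' branch is disabled, so the algorithm makes a uniform random move and lands uniformly on $X_1$; I take this as $Z_0$. For Condition~\ref{statement:symmetric_prob} I would \emph{not} imitate the relabeling argument of the sorting proof: a bit-permutation fixing $g$ must fix its unique $0$-position, hence cannot map the optimum-neighbour onto a non-optimal neighbour, so the optimum is genuinely not relabeling-symmetric to the others. Instead I would exploit that the \onemax chain is \emph{reversible}. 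Every point of $X_{i+1}$ has exactly $i+1$ downward neighbours (flip one of its $i+1$ disagreements with $g$), so detailed balance holds with a stationary measure $\mu$ that is constant on each level. A path-reversal computation then yields $\mu(y_1)\Pr[Z_T=y_2\mid Z_0=y_1]=\mu(y_2)\Pr[Z_T=y_1\mid Z_0=y_2]$, and since $\mu$ is constant on $X_1$ this is exactly the required symmetry, valid for \emph{all} pairs including the optimum.

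With indistinguishability established I would set $h(M)=\Omega(H_2)$ via $\E[T]=p_1\cdot 2+(1-p_1)(1+H_2)=\Omega(H_2)$ with $p_1=c+(1-c)/n$, and $H_2=\Omega(H_1)$ from Equation~\eqref{eq:recreform}, so Theorem~\ref{thm:lower_bound} gives $T_{\onemax}(n)=\Omega(|\hat Y|\cdot H_1)=\Omega(n\cdot H_1)$. Since for \onemax the transition probabilities are exactly $p_i=1/2+i/(2n)$ when $c=1/2$, Lemma~\ref{lemma:sqrt_return} gives $H_1=\Omega(\sqrt n)$ and hence $\Omega(n^{3/2})$; for $c\in(0,1/2)$, Lemma~\ref{lemma:lower_bound_1max_small_c} gives $H_1=\Omega(\beta(c)^n)$ and hence $\Omega(\beta(c)^n\cdot n)$. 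The case $c=0$ falls out of the same framework with $\hat Y=Y$: uniform initialization gives Condition~\ref{statement:equal_prob}, the symmetric random walk on the hypercube gives Condition~\ref{statement:symmetric_prob}, and $h(M)=1$ yields $T_{\onemax}(n)=\Omega(2^n)$.

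The main obstacle is the regime $c\in(1/2,1]$, where the return time is $H_1=O(1)$ so the last-level argument only delivers $\Omega(|\hat Y|)=\Omega(n)$, whereas the claimed bound $\Omega(n\log n)$ must match the upper bound. The extra $\log n$ has to come from a coupon-collector argument summed over fitness levels: when the attractor has $k$ zeros, finding an improvement should cost $\Omega(n/k)$ steps, and $\sum_{k=1}^{\Theta(n)} n/k=\Omega(n\log n)$, where the $\Theta(n)$ initial zeros are guaranteed with high probability by a Chernoff bound. The delicate point is to show the per-level improvement probability is genuinely $O(k/n)$: a naive uniform bound fails, because from a position at the attractor's fitness but at Hamming distance two from it a ``towards attractor'' step improves with constant probability $c/2$. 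One must therefore argue that such states are reached only rarely, contributing only a lower-order $O(k/n^2)$ term, and that the dominant improvement route is a direct lucky flip of a $0$-bit from near the attractor. Making this level-wise bound rigorous in the presence of the attractor drift is the crux of the $c\in(1/2,1]$ case.
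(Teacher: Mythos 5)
For $c\in(0,\tfrac12]$ and for $c=0$ your plan is sound and coincides with the paper's proof: condition on the last improvement, take $\hat Y=X_1$ (resp.\ $\hat Y=Y$ for $c=0$), apply Theorem~\ref{thm:lower_bound} with $h(M)=\Omega(H_2)=\Omega(H_1)$, and feed in $H_1=\Omega(\sqrt n)$ for $c=\tfrac12$ and $H_1=\Omega(\beta(c)^n)$ from Lemma~\ref{lemma:lower_bound_1max_small_c} for $c\in(0,\tfrac12)$. Your justification of Condition~\ref{statement:symmetric_prob} via reversibility and a level-constant stationary measure is a valid alternative to the paper's path-mapping argument; note, however, that your reason for rejecting the relabeling route is a misreading: the paper's ``renaming the indices plus some bit changes according to the shape of the attractor'' is the conjugated action $x\mapsto g\oplus\sigma(g\oplus x)$ for a coordinate permutation $\sigma$, which fixes $g$, preserves the dynamics, and acts transitively on $X_1$ --- it is not a literal bit permutation fixing the string $g$. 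Either route establishes the symmetry.

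The genuine gap is the case $c\in(\tfrac12,1]$. You correctly observe that the indistinguishability argument only yields $\Omega(n)$ there and that the $\log n$ must come from a coupon-collector effect, but the level-wise version you sketch (improvement probability $O(k/n)$ on the level with $k$ zeros, in the presence of off-attractor improvements) is precisely the step you leave open, and it is genuinely delicate for the reason you name. The paper's proof avoids fitness levels entirely: with probability at least $\tfrac12$ the initial position has $k=\lfloor n/2\rfloor=\Omega(n)$ zeros, and for each such bit the \emph{first} time it is set to one must occur in a uniform random move, because a move towards an attractor can only copy a bit from the attractor, and the attractor is always a previously visited position, which still carries a zero in that coordinate until the current position has carried a one there at least once. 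A random move hits a prescribed bit with probability $1/n$, so by the coupon collector bound the expected number of random moves (hence of iterations) needed to touch all $k$ bits is $\Omega(k\log k)=\Omega(n\log n)$. You should replace your level-wise plan by this global argument; as written, your treatment of the $\Omega(n\log n)$ case is incomplete.
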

\begin{proof}
  First, let $c \in (\frac{1}{2}, 1]$. Then, with probability at least
  $\frac{1}{2}$, the initial solution contains at least $k = \lfloor n/2
  \rfloor = \Omega(n)$ zeros. Each zero is flipped to one with probability
  $1/n$ in a random move, and none of the $k$ entries is set to one in a move
  towards the attractor. The expected time required to sample the $k$ distinct
  bit flips is bounded from below by the expected time it takes to obtain all
  coupons in the following instance of the coupon collector's problem: There
  are $k$ coupons and each coupon is drawn independently with probability
  $1/k$. The expected time to obtain all coupons is $\Omega(k \log k)$~
  \cite[Section~5.4.1]{MU:05}. It follows that the expected optimization time is
  $\Omega(n \log n)$ as claimed.

  For $c \in (0,\frac{1}{2}]$ we use the same approach as in the proof of
  Theorem~\ref{thm:onepso:sortinglb}. Also here the event that the initial
  solution is optimal can be ignored. Consider the situation that the attractor
  has just been updated to a solution that has distance one to the optimum. We
  use the set of all bit strings as set of states $Y$ in the Markov process
  $M$. Let $\hat{Y}=X_1$ the subset of bit strings which is a single bit flip
  away from the attractor, hence $\hat{Y}$ contains the optimum.  $Z_i$ and $T$
  are instantiated as in the proof of Theorem~\ref{thm:onepso:sortinglb}.
  Therefore Statement~\ref{statement:equal_prob} of
  Definition~\ref{defi:symmetric} is fulfilled. Again for each sequence of
  states $Z_0,\ldots,Z_T$ we have a one to one mapping to a sequence $\tilde
  Z_0=Z_T,\tilde Z_1,\ldots,\tilde Z_{T-1},\tilde Z_{T}=Z_0$ which has equal
  probability to appear. This sequence is again obtained by renaming the
  indices plus some bit changes according to the shape of the attractor. Hence
  also Statement~\ref{statement:symmetric_prob} of
  Definition~\ref{defi:symmetric} is fulfilled. Hence $\hat{Y}$ is
  indistinguishable with respect to $M$. We obtain $\E[T]=\OMEGA(H_2)=:h(M)$.
  Theorem~\ref{thm:lower_bound} provides the lower bound
  $\OMEGA(\vert\hat{Y}\vert\cdot H_2)$ for the runtime to find the optimal
  solution. From Equation \eqref{eq:recreform} we get $H_2\geq(c\cdot H_1-1)/(1-c)$. As $H_1=\OMEGA(n^{1/2})$ for
  $c=\frac{1}{2}$ (see Theorem~\ref{thm:h1_lin_theta}) and
  $H_1=\OMEGA(\beta(c)^n)$ for $c\in(0,\frac{1}{2})$ (see
  Lemma~\ref{lemma:lower_bound_1max_small_c}) also $H_2=\OMEGA(H_1)$ for $c\in
  (0,\frac{1}{2}]$ which results in the lower bounds
  $T_{\onemax}(n)=\OMEGA(\vert\hat{Y}\vert\cdot H_1)=\OMEGA(n\cdot
  n^{1/2})=\OMEGA(n^{3/2})$ for $c=\frac{1}{2}$ and
  $T_{\onemax}(n)=\OMEGA(\vert\hat{Y}\vert\cdot H_1)=\OMEGA(n\cdot
  \beta(c)^n)=\OMEGA(n\cdot \beta(c)^n)$ for $c\in(0,\frac{1}{2})$.

  Again the lower bound for $c=0$ can be obtained by the indistinguishability
  property.  The proof for this case is identical to the corresponding part of
  Theorem~\ref{thm:onepso:sortinglb}.
\end{proof}

  Finally all runtime bounds claimed in Table~\ref{tab:summary} are justified
  and for this purpose all of the presented tools in Section~\ref{sec:tools}
  are used.

\subsection{Bounds on the Expected Optimization Time for \dpso}
\label{sec:dpsobounds}
  The upper bounds on the runtime of \onepso in Theorem
  \ref{thm:onepso:sorting} and Theorem \ref{thm:onepso:onemax} directly imply
  upper bounds for \dpso. Recall that we denote by $c$ the parameter of \onepso
  and by $T_\onemax(n)$ and $T_{\rm sort}(n)$ the expected optimization time of
  \onepso for \onemax and sorting, respectively. 
  \begin{corollary}
    Let $T'_{\onemax}(n)$ and $T'_{\rm sort}(n)$ be the expected optimization
    time of \dpso for \onemax and sorting, respectively. If $c = c_{glob}$,
    then $T'_{\onemax}(n) = O(P\cdot T_{\onemax}(n))$ and $T'_{\rm sort}(n)
    =O(P\cdot T_{\rm sort}(n))$, where $P$ is the number of particles.
  \end{corollary}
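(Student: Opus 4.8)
The plan is to reduce the behaviour of \dpso to that of a single \onepso particle by exhibiting, at every iteration, a distinguished particle whose movement with respect to the global attractor $g$ is governed by exactly the \onepso rule with parameter $c=c_{glob}$. Since one iteration of \dpso evaluates $f$ exactly $P$ times (once per particle in the inner loop of Algorithm~\ref{alg:dpso}), it suffices to bound the expected number of \emph{iterations} until $g$ becomes optimal by $O(T_{\onemax}(n))$ and $O(T_{\rm sort}(n))$, respectively, because for a single particle these quantities already coincide with the number of iterations \onepso performs.

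First I would establish the existence of a \emph{locked} particle. Whenever the global attractor is updated in line~\ref{alg:improvement:global}, say $g \leftarrow x_i$, we have $f(x_i)<f(g)\le f(l_i)$, so line~\ref{alg:improvement:local} simultaneously sets $l_i\leftarrow x_i$; hence immediately after any update $l_i=g=x_i$ for the updating particle. Right after initialisation the particle attaining $\min_{x_j}f(x_j)$ likewise satisfies $l_i=g$. Consequently at every iteration there is at least one index $i^\ast$ with $l_{i^\ast}=g$, and this index can change only when $g$ itself improves. Next I would observe that when $l_{i^\ast}=g$ the first branch of Algorithm~\ref{alg:dpso} is disabled, since its guard requires $l_{i^\ast}\neq g$: the locked particle moves towards $g$ with probability $c_{glob}=c$ whenever $x_{i^\ast}\neq g$, and otherwise (in particular whenever $x_{i^\ast}=g$) it moves to a uniformly random neighbour. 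These are precisely the transition rules of \onepso with parameter $c$. Therefore, viewed through the distance-to-$g$ partition of Section~\ref{sec:model}, the locked particle induces the same Markov model, with the same return time $H_1$ and the same per-level improvement probabilities $s_i$ (namely $i/n$ for \onemax and at least $i/\binom{n}{2}$ for sorting) as in the proofs of Theorem~\ref{thm:onepso:onemax} and Theorem~\ref{thm:onepso:sorting}.

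The core step is then a fitness-level argument applied to the locked particle alone. Because $g$ improves monotonically, each fitness level is left at most once, exactly as the method requires; the hand-over of the lock to a new particle can only coincide with an improvement of $g$ and leaves the new locked particle sitting at its attractor (state $S_0$), which is the initial condition the per-level analysis assumes. Ignoring every contribution of the non-locked particles — which can only accelerate improvements of $g$ — therefore yields a valid upper bound, and this bound is the same sum $\sum_{i}\bigl((H_1+1)(1/s_i-1)+1\bigr)$ that drives Theorems~\ref{thm:onepso:sorting} and~\ref{thm:onepso:onemax}. Hence the expected number of \dpso iterations is $O(T_{\onemax}(n))$ and $O(T_{\rm sort}(n))$, and multiplying by the $P$ evaluations per iteration gives the claimed bounds.

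I expect the delicate point to be the bookkeeping across changes of the locked particle: one must verify that restarting the per-level analysis at state $S_0$ with a possibly different particle neither double-counts levels nor loses the return-time estimate $H_1$, which must attach correctly to whichever particle currently carries the lock. A clean way to formalise this is a coupling in which the locked particle is matched step-for-step to a \onepso run confined to the current fitness level, so that the \dpso per-level improvement times are stochastically dominated by the corresponding \onepso improvement times; summing the dominated times over the fitness levels then yields the result without ever tracking the remaining $P-1$ particles explicitly.
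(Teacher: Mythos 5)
Your proposal is correct and follows essentially the same route as the paper: the paper's own (much terser) proof likewise observes that the particle which last updated the global attractor has $l_i = g$ and hence behaves exactly like the single \onepso particle until the next improvement, yielding at most a factor $P$ in function evaluations. Your additional bookkeeping about the hand-over of the ``locked'' particle and the restart at state $S_0$ is a valid elaboration of what the paper leaves implicit.
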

  \begin{proof}
    In each trial to improve the value of the global attractor at least the
    particle which updated the global attractor has its local attractor at the
    same position as the global attractor. This particle behaves exactly like
    the single particle in \onepso until the global attractor is improved.
    Therefore we have at most $P$ times more objective function evaluations than \onepso,
    where $P$ is the number of particles. 
  \end{proof}
  One can refine this result by again looking on return times to an attractor.

  If the global attractor equals the local attractor then this particle performs the same
  steps as all other particles having equal attractors. As all those particles
  perform optimization in parallel in expectation no additional objective
  function evaluations are made compared to the \onepso.

  For particles where the global attractor and the local attractor differ
  we can use previous arguments applied to the local attractor.
  With two different attractors alternating movements to the local
  and global attractor can cancel each other out.
  Therefore if (only) $c_{loc}$ is fixed then for the worst case we can assume only
  $c_{loc}$ as probability of moving towards the local attractor and
  $1-c_{loc}$ as probability of moving away from the local attractor.
  This enables us to use Theorem~\ref{thm:returntime} to calculate the
  expected time to reduce the distance to an attractor from one to zero.
  We denote the return time from Theorem~\ref{thm:returntime} as $\Psi(n,p)$ 
  \begin{equation*}
    \Psi(n,p):=\left.\begin{cases}
	  \dfrac{1-2p \left( \dfrac{1-p}{p} \right)^n }{2p-1}	&	\text{if $p \neq \frac{1}{2}$}\\
	  2n-1															 &   \text{otherwise}
  \end{cases}\right\rbrace
    =\begin{cases}
      \Theta(1)	&	\text{if $\frac{1}{2}<p\leq 1$}\\
      \Theta(n)	&	\text{if $p = \frac{1}{2}$}\\
      \Theta\left( \left( \frac{1-p}{p} \right)^n \right)	&	\text{if $0<p < \frac{1}{2}$}\enspace .
	\end{cases}
  \end{equation*}
  If the position equals the local attractor and consequently differs from the
  global attractor the probability for improving the local attractor can be
  bounded from below by a positive constant.
  E.\,g., for the problem \onemax this constant is $c_{glob}/2$ because for a move
  towards the global attractor for at least half of the differing bits the
  value of the global attractor equals the value of the optimal solution as the
  global attractor is at least as close to the optimum as the local attractor.
  Therefore the number of trials until the local attractor is improved is constant.
  As such an update occurs at most once for each particle and fitness level we
  obtain an additional summand of $O(\Psi(n,c_{loc})\cdot P\cdot n)$ instead of
  the factor $P$ for the problems \onemax and the sorting problem.

  In contrast to the upper bounds, the lower bounds for \onepso do not apply for \dpso for the
  following reason. The bottleneck used for the analysis of \onepso is the very last improvement
  step. However, \dpso may be faster at finding the last improvement because
  it may happen that the local and global attractor of a particle have both
  distance one to the optimum but are not equal. In this case, as described above, there is a
  constant probability of moving to the optimum if a particle is at one of the
  two attractors whereas for \onepso the probability of moving towards
  the optimum if the particle is located at the attractor tends to zero for
  large $n$.

  An analysis of experiments of \dpso with small number of particles and
  \onepso applied to the sorting problem and \onemax revealed only a small
  increase in the optimization time of \dpso compared to \onepso. This increase
  is way smaller than the factor $P$.
  For some parameter constellations also a significant decrease of the
  optimization time of \dpso compared to \onepso is achieved.

\subsection{Lower Bounds for Pseudo-Boolean Functions}
\label{sec:pseudoboolean:lb}
Also for general pseudo-Boolean functions $f:\lbrace 0,1\rbrace^n\rightarrow \R$ we can prove lower bounds on the expected optimization time.

\begin{theorem}
  \label{thm:pseudoboolean:const}
  If $P=\Theta(1)$, where $P$ is the number of particles, then the expected
  optimization time of \dpso optimizing pseudo-Boolean functions ($\lbrace
  0,1\rbrace^n\rightarrow\R$) with a unique optimal position is in
  $\Omega(n\log(n))$.
\end{theorem}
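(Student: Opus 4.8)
The plan is to reduce the statement to a coupon collector's argument, in the spirit of the lower bound for \onemax with $c\in(\tfrac12,1]$ in Theorem~\ref{thm:onepso:onemaxlb}, but carried out so that it does not use the specific structure of $f$. Let $x^\ast\in\{0,1\}^n$ be the unique optimum. First I would isolate a large set of ``hard'' coordinates. Since the $P$ initial positions are drawn uniformly and independently, each position $j\in\{1,\dots,n\}$ satisfies ``every particle disagrees with $x^\ast$ in coordinate $j$'' with probability $2^{-P}=\Theta(1)$, independently over $j$. Writing $B$ for the set of such coordinates, $\E[|B|]=2^{-P}n=\Theta(n)$, and a Chernoff bound gives $|B|=\OMEGA(n)$ with probability $1-o(1)$. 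I condition on this event; it then suffices to prove an $\OMEGA(n\log n)$ lower bound for a fixed realization of $B$ with $|B|=\OMEGA(n)$.

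The key structural observation, which replaces the problem-specific reasoning of the \onemax proof, is that a coordinate $j\in B$ can only be set to its correct value $x^\ast_j$ by a \emph{random} move (line~\ref{alg:sample:random}), never by a move towards an attractor. Call $j$ \emph{collected} at the first iteration in which some particle holds the value $x^\ast_j$ in coordinate $j$. As long as $j$ is uncollected, no position ever visited by any particle has the correct value in coordinate $j$; since the attractors are always previously visited positions (the local attractors are initialized to the starting positions and the global attractor to the best of these, and both are only updated to visited positions), every attractor disagrees with $x^\ast$ in coordinate $j$. A move of a particle towards its local or global attractor (lines~\ref{alg:sample:local} and~\ref{alg:sample:global}) flips a coordinate so as to \emph{agree} with that attractor, hence cannot produce $x^\ast_j$ in coordinate $j$. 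Thus the first time $j$ is collected is necessarily a random move; and since a random move flips a uniform coordinate, one particle collects a fixed uncollected $j\in B$ in a given iteration with probability $1/n$, and all $P$ particles together with probability at most $P/n=O(1/n)$.

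Finally I would convert this into the runtime bound. Because the optimum is found only when some particle occupies $x^\ast$, at that moment every coordinate in $B$ is correct, so the optimization time is at least the time $T$ needed to collect all of $B$. Since attractor moves never collect a coordinate in $B$, replacing every move by a random move can only make each coordinate be collected earlier; coupling against this ``all random moves'' process (using the same uniform coordinate choice on random-move rounds and a fresh one on attractor-move rounds) shows coordinate-wise, and hence for the maximum, that $T$ stochastically dominates the completion time of the following coupon collector instance: there are $|B|=\OMEGA(n)$ target coupons, each round draws at most $P$ coupons uniformly from $n$ types, and all targets must be seen. The standard coupon collector lower bound (as already invoked in Theorem~\ref{thm:onepso:onemaxlb}, see~\cite[Section~5.4.1]{MU:05}) yields $\E[T]=\OMEGA\big(\tfrac{n}{P}\log|B|\big)=\OMEGA(n\log n)$ since $P=\Theta(1)$. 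Combining with the $1-o(1)$ probability of the event $|B|=\OMEGA(n)$ gives the claimed $\OMEGA(n\log n)$ bound on the expected optimization time.

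I expect the main obstacle to be the second paragraph: making the ``attractors cannot help'' claim watertight, i.e.\ verifying the self-consistent invariant that no attractor ever carries the correct value of an uncollected coordinate, accounting for every way the attractors are initialized and updated in Algorithm~\ref{alg:dpso}. The concentration of $|B|$ and the coupon collector lower bound are routine, the latter being directly analogous to the bound already used in Theorem~\ref{thm:onepso:onemaxlb}.
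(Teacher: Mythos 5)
Your proposal is correct and follows essentially the same route as the paper's own (three-sentence) proof: isolate the $\Omega(n)$ bits on which every particle's initial position disagrees with the optimum, observe that each such bit must first be corrected by a random move, and apply the coupon collector lower bound. You simply make explicit the details the paper leaves implicit — the Chernoff concentration of $|B|$, the invariant that attractors (being previously visited positions) can never supply the correct value of an uncollected coordinate, and the stochastic domination by the coupon collector process.
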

\begin{proof}
    If there are $P=\Theta(1)$ particles, then in expectation there are $n/2^P=\Omega(n)$ bits such that there is no particle where this bit of the optimal position equals the corresponding bit of the initial position.
    The expected optimization time is therefore bounded by the time that each such bit is flipped in a random move at least once.
    This subproblem corresponds to a coupon collectors problem and therefore we have the claimed lower bound of $\Omega(n\log(n))$.
\end{proof}

For larger values of $P$ we obtain an even higher lower bound by the following theorem.

\begin{theorem}
  \label{thm:pseudoboolean:poly}
  If $P=O(n^k)$, where $P$ is the number of particles and $k$ is an arbitrary non-negative
  real value, then the expected optimization time of \dpso optimizing
  pseudo-Boolean functions ($\lbrace 0,1\rbrace^n\rightarrow\R$) with a unique
  optimal position is in $\Omega(n\cdot P)$.
\end{theorem}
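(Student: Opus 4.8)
The plan is to reduce the statement to a lower bound on the number of PSO \emph{iterations}. Each iteration of \dpso moves every particle once and evaluates $f$ at each new position, so the number of objective-function evaluations performed up to and including iteration $t$ is $\Theta(P\cdot t)$ (the $P$ initial evaluations only add a lower-order term). Hence it suffices to prove that, in expectation, $\Omega(n)$ iterations elapse before any particle first coincides with the unique optimum $x^{*}$; multiplying by the factor $P$ then yields the claimed $\Omega(n\cdot P)$.

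First I would exploit the geometry of the search space. For pseudo-Boolean functions the neighborhood relation is the single-bit-flip relation of the hypercube, so every move changes a particle's Hamming distance to the fixed point $x^{*}$ by exactly $\pm 1$. Consequently a particle that starts at Hamming distance $d$ from $x^{*}$ is, after $t$ iterations, still at distance at least $d-t$, and therefore cannot reach $x^{*}$ before iteration $d$. Let $\tau$ denote the first iteration at which some particle is located at $x^{*}$, and let particle $i$, which started at distance $d_i$, be the first to arrive there. Then $\tau \ge d_i \ge \min_{j} d_j$, where the minimum ranges over the initial Hamming distances of all $P$ particles. Uniqueness of the optimum is what makes this clean, since there is a single target to which all distances are measured.

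Next I would show that $\min_j d_j = \Omega(n)$ with probability bounded away from $0$. The initial positions are drawn uniformly at random, so each $d_j$ is distributed as $\mathrm{Binomial}(n,1/2)$. A Chernoff bound gives $\Pr[d_j \le n/4] \le e^{-\Omega(n)}$ for a single particle, and a union bound over the $P=O(n^{k})$ particles yields $\Pr[\min_j d_j \le n/4] \le P\cdot e^{-\Omega(n)} = O(n^{k})\cdot e^{-\Omega(n)} \to 0$. Thus, for large $n$, with probability at least $1/2$ every particle starts at distance exceeding $n/4$, and on this event $\tau > n/4$. Therefore $\E[\tau] \ge \tfrac12\cdot\tfrac n4 = \Omega(n)$, and the expected number of function evaluations is at least $P\cdot\E[\tau] = \Omega(n\cdot P)$.

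I expect the main obstacle to be precisely the interplay between the exponentially small concentration tail and the polynomial particle count: the union-bound term $P\,e^{-\Omega(n)}$ must vanish, which is exactly where the hypothesis $P=O(n^{k})$ is used essentially, and this is also why the argument should (correctly) break down for super-polynomially many particles, since then enough independent draws could land close to $x^{*}$. A secondary point to make precise is the bookkeeping that one \dpso iteration contributes exactly $P$ function evaluations, so that the iteration bound transfers to the evaluation count with the intended factor of $P$ and delivers $\Omega(nP)$ rather than merely $\Omega(n)$.
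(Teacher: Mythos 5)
Your proposal is correct and follows essentially the same route as the paper's proof: a Chernoff bound on each particle's initial Hamming distance, a union bound (the paper phrases it as $(1-e^{-n/16})^P \ge 1-Pe^{-n/16}\ge \tfrac12$, which is the same estimate) over the $P=O(n^k)$ particles to get all of them at distance at least $n/4$ with probability at least $\tfrac12$, and the observation that distance to the unique optimum changes by at most one per iteration while each iteration costs $P$ evaluations. No substantive differences.
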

\begin{proof}
  To bound the probability to be at least some distance apart from the attractor after initialization we can use Chernoff bounds.
  For a fixed particle we can define
  $$Y_i=\begin{cases}
    1 & \text{\begin{tabular}{@{}l@{}}if the $i$th bit of the initial position differs from\\the corresponding bit of the unique optimal position\end{tabular}}\\
    0 & \text{otherwise}\enspace.
  \end{cases}
    $$
  Therefore $Y=\sum_{i=1}^n Y_i$ is exactly the initial distance of the fixed particle to the unique optimal position.
  For each $i$ we have that $\Pr[Y_i=1]=\frac{1}{2}$ and $\E[Y]=\frac{n}{2}$.
  By Chernoff bounds we obtain the lower bound
  \begin{align*}
  \Pr\left[Y > \frac{n}{4}\right]
  &=1-\Pr\left[Y\leq\left(1-\frac{1}{2}\right)\E[Y]\right]
  \geq 1-\exp\left(-\frac{\left(\frac{1}{2}\right)^2\cdot \frac{n}{2}}{2}\right)
  \\&=1-\exp\left( -\frac{n}{16} \right)\enspace.
\end{align*}
  The probability that the initial position of all $P$ particles have distance
  at least $\frac{n}{4}$ to the unique optimal position is the $P$th power of
  this probability and can be bounded from below for large $n$ by
$$
\left(1-\exp\left( -\frac{n}{16} \right)\right)^P
\geq 1-P \cdot \exp\left( -\frac{n}{16} \right)
\overset{n\geq 16\ln(2P)}{\geq}  \frac{1}{2}\enspace.
$$
Please note that one can choose such an $n$ as $P=O(n^k)$ and
$16\cdot \ln(2\cdot{\rm poly}(n))=o(n)$ for any polynomial.
If the distance of the positions of all particles is at least $\frac{n}{4}$
then it takes at least $\frac{n}{4}$ iterations until the optimal position can
be reached as the distance can change only by one in each iteration. For each
iteration $P$ evaluations of the objective function are performed.
Therefore we have at least $\frac{n\cdot P}{4}$ objective function evaluations
with probability at least $\frac{1}{2}$ for large $n$ which results in the
claimed optimization time of $\Omega(n\cdot P)$.
\end{proof}

This means if we choose , e.\,g., $P=n^{10}$ we would have at least
$\Omega(n^{11})$ function evaluations in expectation.

\section{Conclusion}
\label{sec:conclusion}

We propose a simple and general adaptation of the PSO algorithm for a broad class of discrete optimization
problems. For one particle, we provide upper and lower bounds on its 
expected optimization time for the sorting problem and \onemax and generalize
the upper bounds to \dpso with arbitrary number of particles and we also prove
lower bounds of \dpso optimizing pseudo-Boolean functions. Depending on the
parameter $c$, which is the probability of moving towards the
attractor, the expected optimization time may be polynomial ($c \geq 1/2$) and
exponential ($c < 1/2$), resp. The cornerstone of our analysis are $\Theta$-bounds
on the expected time it takes until the PSO returns to the attractor. Our analysis also provides the variance of this value.
We analyze Markov chains and provide tools to evaluate
expected return times for certain classes of transition probabilities.
Additionally we establish a useful general property of indistinguishability
of a Markov process for obtaining lower
bounds on the expected first hitting time of a special state.
Application of the presented tools on other Markov chains,
often appearing in the analysis of randomized algorithms, would obviously be possible.

For future work, it would be interesting to see if the upper and lower bounds
on the expected optimization time for \onemax given in Theorems~\ref{thm:onepso:onemax}
and~\ref{thm:onepso:onemaxlb} are valid for any linear function $f : \{0,1\}^n
\rightarrow \real$, $f(x_1,x_2,\ldots,x_n) = \sum_i w_i x_i$. Furthermore, we
conjecture that the upper bounds on the sorting problem for $c=0$ is $n!$ and
that the other proved upper bounds on the sorting problem are tight. 
Another direction for future work is to apply our technical tools to other
meta-heuristics. In particular, our tools may be useful in the analysis of
``non-elitist'' meta-heuristics, for instance the 
Strong Selection Weak Mutation (SSWM)
evolutionary regime introduced in~\cite{Gil:83} as an example of non-elitist
algorithm.

Finally, it would be interesting to determine the return time to the state
$S_0$ in a more general Markov model $\model((p_i)_{1 \leq i \leq n})$, where
$p_i = 1/2 + z(i, n)$ such that $z(i, n) = \operatorname{poly}(i) /
\operatorname{poly}(n)$, where the degrees of the polynomials differ, and
$z(i,n)$ is non-decreasing for $1 \leq i \leq n$.
This would generalize Theorems~\ref{thm:h1_lin_theta}
and~\ref{thm:h1_quad_theta}, and shed some light on the relation between
$z(i,n)$ and the return time to state $S_0$. Here, we conjecture that for
$z(i,n)$ as defined above the return time is in $\operatorname{poly}(n)$.
Finally a proof for the claimed upper bound of $O(n!)$ on the expected time to
reach a specified permutation in the graph of permutations by an actual random
walk searching for the optimum would be beneficial. 
To the best of our knowledge no proof exists so far.

\section*{Acknowledgements}
We would like to thank the anonymous referees for valuable remarks.

\bibliographystyle{alpha}
\bibliography{literature}   

\newcommand{\GenerateBio}[3]{
\noindent
\begin{minipage}{0.23\textwidth}
\centering\includegraphics[width=1in,height=1.25in,clip,keepaspectratio]{#1}
\end{minipage}
\hfill
\begin{minipage}{0.765\textwidth}
\textbf{#2}
#3
\end{minipage}\\[\baselineskip]%
}

\appendix

\section{Evidence for Conjecture~\ref{conject:random_walk}}
\label{subsec:randomwalkconjecture}
	In this section we provide computational evidence for
	Conjecture~\ref{conject:random_walk}. To this end we compute exact
	values for $T_{\rm sort}(n)$ for $n \leq 40$.  For the calculation of
	$T_{\rm sort}(n)$ for small $n$ a system of linear equations similar to
	Equation~\ref{eq:recmarkov} is used. Let $\tau_0$ be the optimal
	permutation (say, the identity) then
  \begin{equation}
  \begin{aligned}
	h_{\tau_0} &= 0 , &\\
	h_{\tau} &= 1 + \sum_{\nu\in\neighborhood(\tau)}\frac{1}{\vert \neighborhood(\tau)\vert} h_{\nu}, & \tau \text{ a permutation}\enspace.
    \label{eq:htau}
  \end{aligned}
  \end{equation}
  This simple approach works only for very small $n$ since one variable for
  each permutation is used. Our results are based on the following insight: For
  each permutation $\tau$ we examine the permutation $\nu$ such that
  $\tau\circ\nu=\tau_0$. Since the value $h_\tau$ is equal for all permutations
  with the same cycle lengths of the cycle decomposition of $\nu$ the number of
  variables in the system of linear equations can be reduced to the number of
  integer partitions of $n$, where $n$ is the number of items to sort. Hence,
  for $n = 40$, we have reduced the number of variables from $40!$ to $37\,338$,
  which is a manageable number. $T_{\rm sort}(n)$ is then just a linear
  combination of the calculated values.

  Side note: The transition probabilities as well as the system of linear
  equations can be represented by a matrix. For elitist algorithms the matrix
  can be transformed to an upper-triangular matrix and therefore it is much
  easier to analyze them. Such a transformation to an upper-triangular matrix
  is not possible in our situation.

  \begin{figure}[htb]
\centering
\begin{tikzpicture}[
auto
]
\tikzset{>={Stealth[length=2.5mm]}}
\node[ellipse, draw] (v1111) {$(1,1,1,1)$};
\node[ellipse, draw, right = of v1111] (v211) {$(2,1,1)$};
\node[right = of v211] (vhelpa) {};
\node[right = of vhelpa] (vhelpb) {};
\node[right = of vhelpb] (vhelpc) {};
\node[ellipse, draw, above = of vhelpb] (v31) {$(3,1)$};
\node[ellipse, draw, below = of vhelpb] (v22) {$(2,2)$};
\node[ellipse, draw, right = of vhelpc] (v4) {$(4)$};
\path[]
(v1111) edge [->,bend right] node [below] {$1$} (v211)
(v211)  edge [->,bend right] node [above] {$\frac{1}{6}$} (v1111)
(v211)  edge [->,out=10,in=-130] node [above left] {$\frac{2}{3}$} (v31)
(v211)  edge [->,bend right] node [below left] {$\frac{1}{6}$} (v22)
(v22)   edge [->,out=130,in=-10] node [below left] {$\frac{1}{3}$} (v211)
(v22)   edge [->,bend right] node [below right] {$\frac{2}{3}$} (v4)
(v31)   edge [->,bend right] node [above left] {$\frac{1}{2}$} (v211)
(v31)   edge [->,out=-50,in=-190] node [above right] {$\frac{1}{2}$} (v4)
(v4)    edge [->,out=190,in=50] node [below right] {$\frac{1}{3}$} (v22)
(v4)    edge [->,bend right] node [above right] {$\frac{2}{3}$} (v31)
;
\end{tikzpicture}
\caption{Search space for the problem of sorting four items by transpositions. The states are partitioned by their cycle lengths.}
\label{fig:cycles4}
\end{figure}
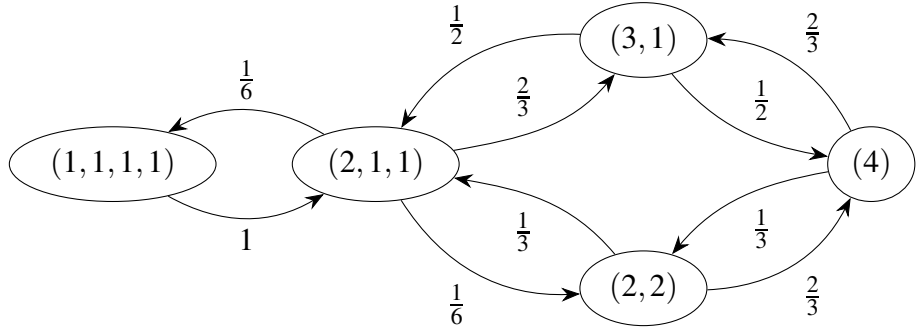
In Figure~\ref{fig:cycles4} we can see the search space of the sorting problem
for four items partitioned by their cycle lengths. This results
in states which are represented by the integer partitions of $n=4$. The
complete search space with all permutations with $n=4$ items is already
visualized in Figure~\ref{fig:permutations}. In Figure~\ref{fig:cycles4} each state is labeled by the
cycle lengths of the current permutation and can represent different
permutations.  $(1,1,1,1)$ is only a single permutation - the identity
permutation $1234$ - where each cycle is a singleton cycle - a cycle with
length one. All neighboring permutations of $(1,1,1,1)$ contain two swapped
items and two items which stay at their position. Therefore the cycle lengths
are $(2,1,1)$ and there are six permutations with these cycle lengths.  The
neighbors of permutations with cycle lengths $(2,1,1)$ can have cycle lengths
$(1,1,1,1)$, $(3,1)$ or $(2,2)$.  Out of the six possible exchange operations
one splits the cycle of length two into two cycles of length one, one exchange
operation merges the two singleton cycles to one cycle of length two getting
two cycles of length two and the remaining four exchange operations merge the
cycle of length two with a singleton cycle getting cycle lengths $(3,1)$. The
respective transition probabilities for a random walk are also visualized in
Figure~\ref{fig:cycles4}. Furthermore, there are three permutations with cycle
lengths $(2,2)$ - the permutations $2143$, $3412$ and $4321$. The remaining
eight permutations of the third column in Figure~\ref{fig:permutations} have
cycle lengths $(3,1)$ and the six permutations in the last column in
Figure~\ref{fig:permutations} have only a single cycle of length four.
The values of $h_\tau$ satisfying Equation~\ref{eq:htau} are $h_{(1,1,1,1)}=0$,
$h_{2,1,1}=23$, $h_{(2,2)}=27$, $h_{(3,1)}=\frac{105}{4}$,
$h_{(4)}=\frac{55}{2}$ and then we have
$T_{\rm sort}(4)=\frac{1}{4!}\cdot(h_{(1,1,1,1)}+6\cdot h_{2,1,1}+3\cdot h_{(2,2)}+6\cdot h_{(3,1)}+6\cdot h_{(4)})=\frac{99}{4}=4!+\frac{3}{4}$.

  Please note that this value does not rely on experiments. Instead the
  evaluations result in the exact expected optimization time.

\begin{figure}[htb]
  \centering
  \input{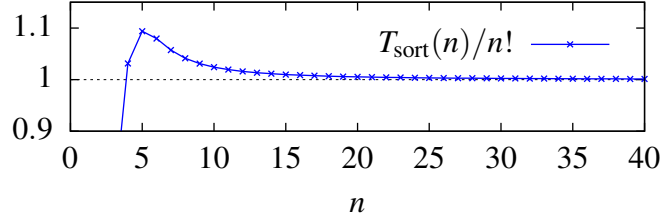}
  \caption{Expected time it takes for the random walk to reach the sorted
  sequence of $n$ items in the graph of permutations divided by $n$ factorial.}
  \label{fig:random_walk_sorting}
\end{figure}
  In Figure~\ref{fig:random_walk_sorting} the exact value of $T_{\rm sort}(n)$
  divided by $n!$ tends to one. Proving that this holds for large $n$ implies that $T_{\rm sort}(n)
  =\THETA(n!)$ and also Conjecture~\ref{conject:random_walk}.

\section{Evidence for Conjecture~\ref{con:sorting:lb}}
\label{subsec:approxOptTime}
For the sorting problem, the lower and upper bound on the return time to the
attractor have the following gaps. For $0<c<1/2$ the expected number of
iterations to return to the attractor vary from $\Omega^{*}(\alpha(c)^n)$ to
$O^*\left( \left( (1-c)/{c} \right)^n \right)$ and for $c=1/2$ these values
vary from $\Omega(n^{2/3})$ to $O(n)$.
We provide a simplified Markov model based on an averaging argument. We
conjecture that the simplified model and the actual model are asymptotically
equivalent.

To improve the understanding of the search space of permutations we will
approximate the improvement probabilities to obtain approximations of the expected return time to
the attractor. For this purpose, instead of using upper and lower bounds on the
probability to move towards the attractor, we use the average value.
If the probability to be in a specific permutation equals the probability to be
in any other permutation with the same distance to the attractor then this
approximation would also result in the exact values.
\begin{conjecture}
  Let $H_1$ be the expected number of iterations until the \onepso returns to the attractor $g$ if the current distance to the attractor is one while optimizing the sorting problem.
  Let $p_x$ be the probability to move from permutation $x$ to a permutation $y$ such that $\dist(x,g)=1+\dist(y,g)$.
  Let
  \[\hat p_i={\displaystyle\sum_{x\in X_i} p_x}/{\vert X_i\vert}\]
  be the average probability to reduce the distance to the attractor.
Let $\hat H_1$ be the expected number of iterations in $\hat M=\model( (\hat p_i)_{1\leq i\leq n-1}))$ to move from state $S_1$ to $S_0$ (see Def.~\ref{defi:model}).
We conjecture that $H_1\sim\hat H_1$.
\end{conjecture}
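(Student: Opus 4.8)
The plan is to recognize that the exact \onepso dynamics on the sorting problem is itself a Markov chain, but on the refined state space of \emph{cycle types}: writing $\tau = x^{-1}\circ g$, the distance $\dist(x,g)$ equals $n$ minus the number of cycles of $\tau$, and (as already observed in~\ref{subsec:randomwalkconjecture}) the expected hitting time depends on $x$ only through the cycle type of $\tau$. Thus the true process is the chain on integer partitions of $n$ sketched in Figure~\ref{fig:cycles4}, in which a transposition either splits a cycle (a downward move, decreasing the distance) or merges two cycles (an upward move). The coarse model $\hat M = \model((\hat p_i))$ is exactly the projection of this chain onto the levels $X_i$, with each level's downward probability replaced by the average $\hat p_i$ taken uniformly over the permutations of $X_i$. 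Since $\hat p_i$ reproduces $H_1$ exactly when the walk is uniformly distributed over $X_i$ at each downward attempt, the whole conjecture reduces to a \emph{quasi-stationarity} statement: if $\nu_i$ denotes the conditional law of the walk over $X_i$ at the epochs at which it attempts a downward step, then it suffices to show
\[
  \sum_{x\in X_i}\nu_i(x)\,p_x = \hat p_i\,(1+o(1))
\]
uniformly in the range of $i$ that dominates the return time. First I would establish this identity, and then feed the averaged probabilities into the reformulation~\eqref{eq:recclosed} exactly as in the proofs of Lemma~\ref{lemma:sqrt_return} and Theorem~\ref{thm:h1_quad_theta}.

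I would attack the case $c=0$ first, because there the refined chain is the projection of the \emph{symmetric} random walk on the Cayley graph of $S_n$ generated by transpositions. This walk is reversible with uniform stationary distribution on $S_n$, so its conditional stationary law on each level $X_i$ is precisely the uniform weighting that defines $\hat p_i$. I would use reversibility to rewrite $H_1$ in terms of occupation measures and argue that, along a single excursion away from the attractor, the empirical within-level law already matches the stationary conditional law to leading order; this should yield $H_1\sim\hat H_1$ for $c=0$ outright and, incidentally, sharpen the evidence for Conjecture~\ref{conject:random_walk}.

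For $c\in(0,\tfrac12]$ the drift towards the attractor destroys the symmetry, and $\nu_i$ is no longer exactly uniform. Here I would treat the drift perturbatively: compare the biased chain with the symmetric one by a coupling, control the relaxation of the within-level law relative to the time scale on which the walk changes levels, and show that the bias perturbs the averaged downward probability only by a factor $1+o(1)$, so that the base of the exponential (for $c<\tfrac12$) or the polynomial order (for $c=\tfrac12$) computed from $\hat M$ is preserved. The key structural input is that, by Lemma~\ref{thm:probabilities}, the per-state probabilities $p_x$ spread from $c+(1-c)\,i/\binom{n}{2}$ up to $c+(1-c)\binom{i+1}{2}/\binom{n}{2}$, so the averaging is genuinely non-trivial and the entire content of the conjecture lives in how $\nu_i$ weights the different cycle types.

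The hardest step will be controlling $\nu_i$. Because of Remark~\ref{rem:noInternalTransitions} the walk never moves \emph{inside} a level: the only way to redistribute mass over $X_i$ is to ascend to $X_{i+1}$ and descend again, and each such excursion strongly correlates the exit cycle type with the entry cycle type. Consequently the usual mixing-time machinery does not apply to the induced within-level dynamics directly, and a naive sandwich between $\min_x p_x$ and $\max_x p_x$ only reproduces the already-known gap between $\alpha(c)^n$ (Lemma~\ref{lemma:lower_bound_sort_small_c}) and $\left((1-c)/c\right)^n$ (Theorem~\ref{thm:onepso:sorting}). Proving that the induced within-level law equilibrates to the uniform weighting before the walk descends a level --- most plausibly via spectral or representation-theoretic information about the transposition walk on $S_n$ --- is where the real work lies.
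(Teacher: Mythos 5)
The statement you are trying to prove is not proved in the paper at all: it is stated as a conjecture, and the paper's ``own proof'' consists only of evidence, namely the exact formula for the averaged probabilities $\hat p_i$ via Stirling numbers (Theorem~\ref{thm:average_improvement}), the reduction of the exact chain to one state per integer partition of $n$ (cycle types), and numerical comparison of the quotients $q_{ex}(n,c)$ and $q_{av}(n,c)$ for $n$ up to $40$. Your plan correctly identifies the same structural reduction the authors use for their computations --- the hitting time depends on $x$ only through the cycle type of $x^{-1}\circ g$, so the true process lives on partitions of $n$ --- and you correctly locate the entire content of the conjecture in the discrepancy between the uniform weighting over $X_i$ that defines $\hat p_i$ and the actual within-level law $\nu_i$ at descent attempts. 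That diagnosis is sound and goes beyond what the paper makes explicit.

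However, your proposal is a research programme, not a proof, and the decisive step is missing by your own admission. For $c=0$ the conjecture is already trivially true (both $H_1$ and $\hat H_1$ equal $n!-1$ exactly, as the paper notes), so the reversibility argument buys nothing new. For $c\in(0,\tfrac12]$, the chain is no longer reversible with respect to any simple measure, the ``perturbative coupling'' and the claim that the bias changes the averaged downward probability only by $1+o(1)$ are asserted without any mechanism, and --- as you yourself observe --- Remark~\ref{rem:noInternalTransitions} means mass can only redistribute over $X_i$ via excursions to $X_{i+1}$, which correlate entry and exit cycle types, so no off-the-shelf mixing argument applies. Since the per-state probabilities $p_x$ genuinely spread between $c+(1-c)i/\binom{n}{2}$ and $c+(1-c)\binom{i+1}{2}/\binom{n}{2}$, and since this spread is exactly what separates the proven lower bound $\alpha(c)^n$ from the proven upper bound $((1-c)/c)^n$, establishing the quasi-stationarity identity is not a technical afterthought but the whole problem. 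Your write-up leaves it open, so the conjecture remains unproved; what you have is a plausible and well-motivated attack, consistent with but not stronger than the computational evidence the paper already provides.
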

To provide evidence we compute these average improvement probabilities and
compare the number of expected iterations.
\begin{theorem}
  \label{thm:average_improvement}
  The average improvement probability of moving towards the attractor while optimizing the sorting problem is
  $$
  \hat p_i=c+(1-c)\cdot\frac{\displaystyle\sum_{k=1}^{i+1}\left( \frac{k-1}{n-1}\cdot\frac{(n-1)!}{(n-k)!}\cdot\stirling{n-k}{n-i-1} \right)}{\displaystyle\stirling{n}{n-i}}\enspace,
  $$
  where $i$ is the distance to the attractor and $\stirling{n}{m}$ are the unsigned Stirling numbers of the first kind.
\end{theorem}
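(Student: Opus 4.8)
The plan is to average the per-permutation transition probability
$p_x = c + (1-c)\cdot |X_{i-1}\cap\neighborhood(x)|/|\neighborhood(x)|$ over all $x\in X_i$ and to evaluate the resulting sum by a symmetry argument followed by a cycle-length count. First I would translate everything into the language of cycle decompositions. Fixing the attractor $g$, each position $x$ is encoded by the permutation $\nu$ with $x\circ\nu = g$ (as in the proof of Lemma~\ref{thm:probabilities}), and $\dist(x,g)=n-c(\nu)$, where $c(\nu)$ denotes the number of cycles of $\nu$. Since $x\mapsto\nu$ is a bijection of $S_n$, the set $X_i$ corresponds bijectively to the permutations $\nu$ with exactly $n-i$ cycles, so $|X_i|=\stirling{n}{n-i}$. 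The neighbourhood of $x$ has size $\binom{n}{2}$, one transposition per pair, and a transposition decreases the distance to $g$ exactly when it splits a cycle of $\nu$, i.e.\ when it joins two elements lying in a common cycle; hence $|X_{i-1}\cap\neighborhood(x)| = \sum_{\text{cycles }C\text{ of }\nu}\binom{|C|}{2}$. This gives
\[
\hat p_i = c+(1-c)\cdot\frac{1}{\binom{n}{2}\,\stirling{n}{n-i}}\sum_{\nu:\,c(\nu)=n-i}\ \sum_{\text{cycles }C\text{ of }\nu}\binom{|C|}{2}\enspace.
\]

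Next I would evaluate the double sum by exchanging the order of summation. The quantity $\sum_C\binom{|C|}{2}$ counts the unordered pairs $\{a,b\}$ whose two elements share a cycle of $\nu$, so the double sum equals $\sum_{\{a,b\}}\bigl|\{\nu:\,c(\nu)=n-i,\ a\text{ and }b\text{ share a cycle}\}\bigr|$. By relabelling the $n$ items this inner count is identical for every pair, and there are $\binom{n}{2}$ pairs; writing $N$ for the number of permutations with $n-i$ cycles in which $1$ and $2$ lie in the same cycle, the double sum is exactly $\binom{n}{2}\cdot N$, and the factor $\binom{n}{2}$ cancels, leaving $\hat p_i = c+(1-c)\cdot N/\stirling{n}{n-i}$. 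It then remains to count $N$ by conditioning on the length $k$ of the cycle containing $1$ and $2$: choose the remaining $k-2$ members of that cycle in $\binom{n-2}{k-2}$ ways, arrange the $k$ elements into a single cycle in $(k-1)!$ ways, and distribute the other $n-k$ elements into a permutation with $n-i-1$ cycles in $\stirling{n-k}{n-i-1}$ ways. Only $2\le k\le i+1$ can contribute, since $1$ and $2$ must be joint and $n-k$ elements must still accommodate $n-i-1$ cycles.

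Finally I would simplify the product
\[
\binom{n-2}{k-2}(k-1)! = (k-1)\cdot\frac{(n-2)!}{(n-k)!} = \frac{k-1}{n-1}\cdot\frac{(n-1)!}{(n-k)!}\enspace,
\]
which turns the count into $N = \sum_{k=1}^{i+1}\frac{k-1}{n-1}\cdot\frac{(n-1)!}{(n-k)!}\,\stirling{n-k}{n-i-1}$ (the $k=1$ term vanishing because of the factor $k-1$, so extending the range to $k=1$ is harmless), and substituting this into $\hat p_i = c+(1-c)\cdot N/\stirling{n}{n-i}$ yields precisely the claimed formula. I expect the only delicate points to be the bookkeeping in the bijection $X_i\leftrightarrow\{\nu:c(\nu)=n-i\}$ — in particular keeping the composition convention consistent with the rule that a transposition \emph{splits} a cycle iff it joins two elements already in a common cycle — and checking that the factorial rearrangement reproduces the stated expression exactly; the symmetry reduction to a single marked pair and the conditioning on the cycle length are routine.
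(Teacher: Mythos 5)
Your proposal is correct and follows essentially the same route as the paper: both identify the improving transpositions with pairs of items sharing a cycle of the difference permutation, reduce by symmetry to a single marked item/pair, and condition on the cycle length $k$ to obtain the factor $\frac{k-1}{n-1}\cdot\frac{(n-1)!}{(n-k)!}\cdot\stirling{n-k}{n-i-1}$. The only cosmetic difference is that the paper phrases the count probabilistically (probability a fixed item lies in a $k$-cycle, times the probability $(k-1)/(n-1)$ that a second random item shares it), whereas you exchange the order of summation and count permutations in which a fixed pair shares a cycle; the two computations are identical.
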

\begin{proof}
  The unsigned Stirling numbers of the first kind $\stirling{n}{m}$ represent
  the number of permutations of $n$ elements with exactly $m$ cycles which can easily be calculated by the recursive formula
  \begin{align*}
    \stirling{n}{m}=\begin{cases}
      1&\text{if }n=0\wedge m=0\\
      0&\text{if }n=0\wedge m\neq 0\\
      \stirling{n-1}{m-1}+(n-1)\cdot \stirling{n-1}{m}&\text{if }n>0
    \end{cases}
  \end{align*}
  W.l.o.g let the attractor be the identity permutation. Then the attractor has $n$ singleton cycles.
  An increase of the distance to the attractor by one is equivalent to a decrease of the number of cycles by one.
  Therefore a permutation with distance $i$ to the attractor has exactly $n-i$ cycles.
  This means that the number of permutations with distance $i$ is $\stirling{n}{n-i}$.
  The probability that a fixed item is in a cycle of length $k$ among all permutations with distance $i$ from the attractor is 
  $$\frac{\binom{n-1}{k-1}\cdot(k-1)!\cdot\stirling{n-k}{n-i-1}}{\stirling{n}{n-i}}=\frac{\frac{(n-1)!}{(n-k)!}\cdot\stirling{n-k}{n-i-1}}{\stirling{n}{n-i}} \enspace.$$
  Choosing the remaining $i-1$ items in the cycle of length $k$ from the
  remaining $n-1$ items has $\binom{n-1}{k-1}$ options.  There are $(k-1)!$
  orderings of these items within the cycle. The remaining $n-k$ items have to
  be partitioned into $n-i-1$ cycles which results in another factor of
  $\stirling{n-k}{n-i-1}$ options. In combination with the first cycle of length
  $k$ a permutation with $n-i$ cycles is achieved.

  This probability does not change if we choose a random item instead of a fixed
  item.  Furthermore the probability of moving towards the attractor is
  determined by the probability that a cycle is split into two cycles which
  happens if two items of the same cycle are picked for an exchange. If the
  first picked item is in a cycle of length $k$ then the probability that the
  second item is in the same cycle is $\frac{k-1}{n-1}$. Summing up these
  probabilities over all possible cycle lengths for the first picked item
  results in the claimed result for $\hat p_i$, but there also the constant $c$
  of the \dpso comes into play which forces a move towards the attractor.
  Please note that the maximal cycle length at distance $i$ from the attractor
  is $i+1$ which explains the upper limit of the sum.
\end{proof}

By using these average probabilities of moving towards the attractor we obtain
a Markov chain where it is only possible to move to state $S_{i-1}$ or
$S_{i+1}$ from state $S_i$ (and not to any other state) in a single step. For
Markov chains with this property the return times can be computed as in
Section~\ref{subsec:const}.
The result helps us to estimate the expected return time to the attractor. If
$c$ is zero then the expected return time if we are at distance one to the
attractor is exactly $n!-1$ which is also obtained exactly by the model with
average probabilities.

\begin{remark}
  \label{rem:approx}
  \hfill
  \begin{enumerate}
    \item 
Assuming $T(n)=\gamma(c)^n\cdot f(n)$ where $f$ is  a
polynomial then
$$\lim_{n\rightarrow\infty}\frac{T(n)}{T(n-1)}=\gamma(c).$$
    \item 
Assuming $T(n)=f(n)$ where $f$ is again a polynomial then
$$\lim_{n\rightarrow\infty}\log_{\frac{n}{n-1}}\left( T(n)/T(n-1) \right)$$
is the maximal degree of $f(n)$.
  \end{enumerate}
\end{remark}
\begin{proof}
  Let $f(n)=a\cdot n^b+o(n^b)$, $b>0$.
  \begin{enumerate}
    \item 
	  Assuming $T(n)=\gamma(c)^n\cdot f(n)$ leads to
      \begin{align*}
  \lim_{n\rightarrow\infty}\frac{T(n)}{T(n-1)}
  &=\lim_{n\rightarrow\infty}\gamma(c)\frac{n^b+o(n^b)}{(n-1)^b+o((n-1)^b)}\\
  &=\lim_{n\rightarrow\infty}\gamma(c)\frac{(\frac{n}{n-1})^b+o(1)}{1+o(1)}
  =\gamma(c)
\end{align*}
    \item 
	  Assuming $T_{\rm sort}(n)=f(n)$ leads to
      \begin{align*}
  \lim_{n\rightarrow\infty}\log_{\frac{n}{n-1}}\left(\frac{T(n)}{T(n-1)}\right)
  &=\lim_{n\rightarrow\infty}\log_{\frac{n}{n-1}}\left(\frac{(\frac{n}{n-1})^b+o(1)}{1+o(1)}\right)\\
  &=\lim_{n\rightarrow\infty}\log_{\frac{n}{n-1}}\left(\left(\frac{n}{n-1}\right)^b\right)
  =b
\end{align*}
  \end{enumerate}
\end{proof}

Let
$$q_{ex}(n,c):=\frac{H_1}{H_1'}$$
where $H_1$ and $H_1'$ are the expected
return times to the attractor for the sorting problem on $n$ and $n-1$ items
respectively if the attractor has transposition distance one to the current
position (actual \textbf{ex}act model).

Let additionally
$$q_{av}(n,c):=\frac{H_1}{H_1'}$$
where $H_1$ and $H_1'$ are the
corresponding return times in the Markov model with \textbf{av}erage success probability
specified in Theorem~\ref{thm:average_improvement}.

\begin{figure}[htb]
  \centering
  \input{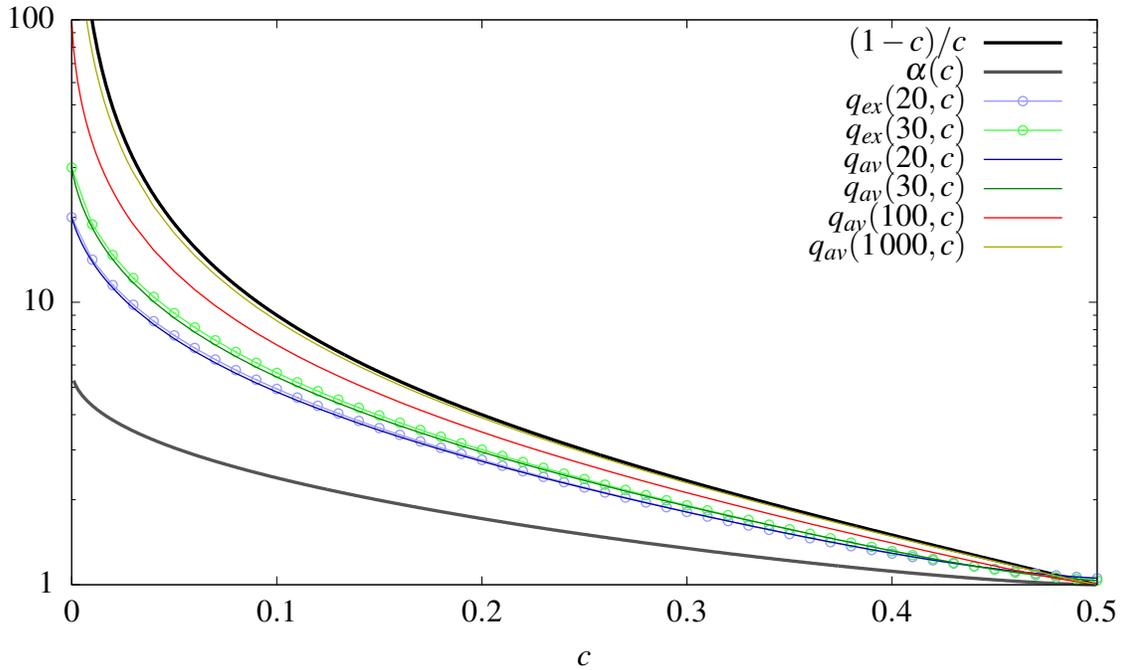}
  \caption{Quotients of return times $q_{ex}$ and $q_{av}$ for different values
  of $n$ and the upper and lower bound on the limit of $q_{ex}$ if $n$ tends to
infinity.}
  \label{fig:returntime_average}
\end{figure}

In Figure~\ref{fig:returntime_average} $q_{ex}$ and $q_{av}$ for different
values of $n$ and the upper bound $(1-c)/c$ and the lower bound $\alpha(c)$ on
the base of the exponential part of the return time for the sorting problem.

With the first part of Remark~\ref{rem:approx} we also notice that $(1-c)/c$ is also an upper
bound and $\alpha(c)$ is a lower bound on the limit of $q_{ex}$ if $n$ tends to
infinity as $q_{ex}$ tends to the actual base of the exponential part of the
expected return time.

As $q_{ex}$ has to be calculated by a system of linear equations where the
number of variables equals the number of integer partitions of $n$ (see
description after Conjecture~\ref{conject:random_walk}) we can not evaluate
$q_{ex}$ for large $n$. The values of $q_{av}$ are quite similar to the values
of $q_{ex}$ for corresponding $n$. For $c=0$ the values are exactly the same
and for $n=30$ the relative error is less than $0.04$.
Therefore we conjecture that the limit of $q_{av}$ if $n$ tends to infinity is
close or even equal to the limit of $q_{ex}$. But as we can see
in Figure~\ref{fig:returntime_average} the values of $q_{av}$ tend to the upper
bound of $(1-c)/c$. We omitted the graph of $q_{av}(10\,000,c)$ as it overlaps the
graph of the upper bound almost completely. So it is reasonable to conjecture
that the limit
of $q_{ex}$ is close to the upper bound $(1-c)/c$. If this is actually true
then for all runtime results the value of $\alpha(c)$ can be replaced by
$((1-c)/c)-\varepsilon$ for some small non-negative value $\varepsilon$ which
could probably be even zero.

\begin{figure}[htb]
  \centering
  \input{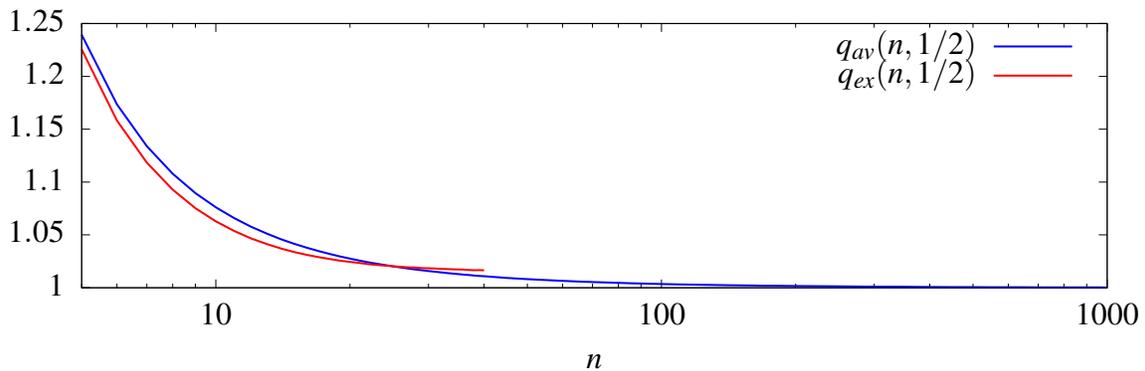}
  \caption{Quotients of return times $q_{ex}$ and $q_{av}$ for $c=1/2$.}
  \label{fig:returntime_average05}
\end{figure}

By using the second part of Remark~\ref{rem:approx} the limit of
$\log_{\frac{n}{n-1}}q_{ex}(n,1/2)$ if $n$ tends to infinity supplies us the exponent
of the largest monomial (probably omitting logarithmic factors) of the return
time if $c=1/2$.  In Figure~\ref{fig:returntime_average05} we can see the
quotients $q_{ex}(n,1/2)$ for n up to $40$ and $q_{av}(n,1/2)$ for even larger
values of $n$. Also here $q_{av}$ can be used as an approximation on $q_{ex}$
and it is reasonable to assume that the limit is one. Please note that
Theorem~\ref{thm:onepso:sorting} tells us that the limit
$\lim_{n\rightarrow\infty}q_{ex}(n,1/2)\leq 1$.
Similarly to the exponential case with $c<1/2$ we conjecture that the actual
expected value $H_1$ is close to the proposed upper bound on $H_1$ described in
the proof of Theorem~\ref{thm:onepso:sorting}.

Using these results for lower bounds on the expected optimization time we would have
  \[
	T_{\rm sort}(n) = 
	\begin{cases}
	  \Omega(n^2)	&	\text{if $c \in (\frac{1}{2}, 1]$} \\
	  \Omega(n^{3})	&	\text{if $c = \frac{1}{2}$} \\
      \Omega \left(\left(\frac{1-c}{c}\right)^n\cdot n^2\right)	&	\text{if $c \in (0, \frac{1}{2})$}
	\end{cases}
  \]
  as specified in Conjecture~\ref{con:sorting:lb}.
  These bounds are only a factor of $\log(n)$ apart from the upper bounds
  specified in Theorem~\ref{thm:onepso:sorting}.

  Please note that the results in this section do not rely on
  experiments. Instead, exact values for the expectation are computed.
  Nevertheless, we have done some experiments with small values of $n$ and the
  measured optimization times comply with the evaluated exact optimization
  times.
  Especially for values $c$ close to
  zero, optimization times of up to $n!$ are claimed which can not be
  confirmed in reasonable time for larger values of $n$.

\end{document}